\newcommand*\Let[2]{\State #1 $\gets$ #2}
\algrenewcommand{\algorithmiccomment}[1]{\hfill $\triangleright$ #1}
\renewcommand{\backref}[1]{}
\renewcommand{\backrefalt}[4]{%
\ifcase #1 %

\or 
[p.\ #2]%
\else 
[pp.\ #2]%
\fi}
\renewcommand{\paragraph}{%
 \@startsection{paragraph}{4}%
 {\z@}{2.25ex \@plus .5ex \@minus .3ex}{-1em}%
 {\normalfont\normalsize\bfseries}%
}
\newcommand{\para}{%
 \@startsection{paragraph}{4}%
 {\z@}{2ex \@plus 3.3ex \@minus .2ex}{-1em}%
 {\normalfont\normalsize\bfseries}%
}
\newtheorem{theorem}{Theorem}
\newtheorem{claim}[theorem]{Claim}
\newtheorem{lemma}[theorem]{Lemma}
\newtheorem{proposition}[theorem]{Proposition}
\newtheorem{corollary}[theorem]{Corollary}
\theoremstyle{definition}
\newtheorem{definition}[theorem]{Definition}
\newtheorem{openproblem}{Open Problem}
\newcommand{\eq}[1]{\hyperref[eq:#1]{(\ref*{eq:#1})}}
\renewcommand{\sec}[1]{\hyperref[sec:#1]{Section~\ref*{sec:#1}}}
\newcommand{\thm}[1]{\hyperref[thm:#1]{Theorem~\ref*{thm:#1}}}
\newcommand{\conj}[1]{\hyperref[thm:#1]{Conj~\ref*{thm:#1}}}
\newcommand{\lem}[1]{\hyperref[lem:#1]{Lemma~\ref*{lem:#1}}}
\newcommand{\prop}[1]{\hyperref[prop:#1]{Proposition~\ref*{prop:#1}}}
\newcommand{\cor}[1]{\hyperref[cor:#1]{Corollary~\ref*{cor:#1}}}
\newcommand{\fig}[1]{\hyperref[fig:#1]{Figure~\ref*{fig:#1}}}
\newcommand{\tab}[1]{\hyperref[tab:#1]{Table~\ref*{tab:#1}}}
\newcommand{\alg}[1]{\hyperref[alg:#1]{Algorithm~\ref*{alg:#1}}}
\newcommand{\app}[1]{\hyperref[app:#1]{Appendix~\ref*{app:#1}}}
\newcommand{\defn}[1]{\hyperref[def:#1]{Definition~\ref*{def:#1}}}
\newcommand{\C}{{\mathcal{C}}}
\newcommand{\R}{{\mathbb{R}}}
\newcommand{\D}{\mathcal{D}}
\renewcommand{\(}{\left(}
\renewcommand{\)}{\right)}
\newcommand{\be}{\begin{equation}}
\newcommand{\ee}{\end{equation}}
\def\ba#1\ea{\begin{align}#1\end{align}}
\newcommand{\AMAJ}{\mathsf{AMAJ}}
\newcommand{\IP}{\mathsf{IP}}
\newcommand{\B}{\{0,1\}}
\newcommand{\Bo}{\{0,1\}}
\newcommand{\Bf}{\{-1,1\}}
\newcommand{\F}{\mathcal{F}}
\newcommand{\eps}{\varepsilon}
\DeclareMathOperator{\adeg}{\widetilde{\deg}}
\DeclareMathOperator{\AND}{\mathsf{AND}}
\DeclareMathOperator{\OR}{\mathsf{OR}}
\DeclareMathOperator{\NOT}{\mathsf{NOT}}
\DeclareMathOperator{\PARITY}{\mathsf{PARITY}}
\DeclareMathOperator{\polylog}{polylog}
\DeclareMathOperator{\poly}{poly}
\DeclareMathOperator{\mon}{\mu}
\DeclareMathOperator{\amu}{\widetilde{\mu}}
\newcommand{\tO}{\widetilde{O}}
\newcommand{\tOmega}{\widetilde{\Omega}}
\newcommand{\AC}{{$\mathsf{AC}^0$}\xspace} 
\newcommand{\ACmodtwo}{{$\mathsf{AC}^0 \circ \oplus$}\xspace} 
\newcommand{\ACd}{{$\mathsf{AC}^0_d$}\xspace} 
\newcommand{\LC}{{$\mathsf{LC}^{0}$}\xspace} 
\newcommand{\LCd}{\ensuremath{\mathsf{LC}^{0}_d}\xspace} 
\newcommand{\LCdmone}{{$\mathsf{LC}^0_{d-1}$}\xspace} 
\newcommand{\LCone}{$\mathsf{LC}^0_1$\xspace}
\newcommand{\LCthree}{$\mathsf{LC}^0_3$\xspace}
\newcommand\blfootnote[1]{%
  \begingroup
  \renewcommand\thefootnote{}\footnote{#1}%
  \addtocounter{footnote}{-1}%
  \endgroup
}
\begin{document}

\title{Quantum algorithms and approximating polynomials for composed functions with shared inputs}

\author{Mark Bun}
\affiliation{Boston University \textsf{mbun@bu.edu}}
\author{Robin Kothari}
\affiliation{Microsoft Quantum \textsf{robin.kothari@microsoft.com}}
\author{Justin Thaler}
\affiliation{Georgetown University \textsf{justin.thaler@georgetown.edu}}

\date{}
\maketitle

\begin{abstract}
We give new quantum algorithms for evaluating composed functions whose inputs may be shared between bottom-level gates. Let $f$ be an $m$-bit Boolean function and consider an $n$-bit function $F$ obtained by applying $f$ to conjunctions of possibly overlapping subsets of $n$ variables. If $f$ has quantum query complexity $Q(f)$, we give an algorithm for evaluating $F$ using $\tilde{O}(\sqrt{Q(f) \cdot n})$ quantum queries. This improves on the bound of $O(Q(f) \cdot \sqrt{n})$ that follows by treating each conjunction independently, and our bound is tight for worst-case choices of $f$. Using completely different techniques, we prove a similar tight composition theorem for the approximate degree of $f$.

By recursively applying our composition theorems, we obtain a nearly optimal $\tilde{O}(n^{1-2^{-d}})$ upper bound on the quantum query complexity and approximate degree of linear-size depth-$d$ \AC circuits. As a consequence, such circuits can be PAC learned in subexponential time, even in the challenging agnostic setting. 
Prior to our work, a subexponential-time algorithm was not known even for linear-size depth-3 \AC circuits.

As an additional consequence, we show that  \ACmodtwo\ circuits of depth $d+1$ require size
 $\tOmega(n^{1/(1- 2^{-d})}) \geq \omega(n^{1+ 2^{-d}} )$ to compute the Inner Product function
 even on average. The previous best size lower bound was $\Omega(n^{1+4^{-(d+1)}})$ and only 
 held in the worst case (Cheraghchi et al., JCSS 2018).
 \blfootnote{A preliminary version of this manuscript appeared in ACM-SIAM Symposium on Discrete Algorithms (SODA), 2019~\cite{BKT19}. That version did not contain the lower bound for \ACmodtwo\ circuits computing the Inner Product function.}
\end{abstract}

\clearpage

\section{Introduction}
\label{sec:intro}

In the query, or black-box, model of computation, an algorithm aims to evaluate a known Boolean function $f : \B^n \to \B$ on an unknown input $x \in \B^n$ by reading as few bits of $x$ as possible. One of the most basic questions one can ask about query complexity, or indeed any complexity measure of Boolean functions, is how it behaves under \emph{composition}. Namely, given functions $f$ and $g$, and a method of combining these functions to produce a new function $h$, how does the query complexity of $h$ depend on the complexities of the constituent functions $f$ and $g$?

The simplest method for combining functions is \emph{block composition}, where the inputs to $f$ are obtained by applying the function $g$ to independent sets of variables. That is, if $f : \B^m \to \B$ and $g : \B^k \to \B$, then the block composition $(f \circ g): \B^{m \cdot k} \to \B$ is defined by $(f \circ g)(x_1, \dots, x_m) = f(g(x_1), \dots, g(x_m))$ where each $x_i$ is a $k$-bit string. In most reasonable models of computation, one can evaluate $f \circ g$ by running an algorithm for $f$, and using an algorithm for $g$ to compute the inputs to $f$ as needed. Thus, the query complexity of $f \circ g$ is at most the product of the complexities of $f$ and $g$.\footnote{In some ``reasonable models,'' such as those with bounded error, one must take care to ensure that errors in computing each copy of $g$ do not propagate, but we elide these issues for this introduction. Addressing this concern typically adds at most a logarithmic overhead.}

For many query models, including those capturing deterministic and quantum computation, this is known to be tight. In particular, letting $Q(f)$ denote the bounded-error quantum query complexity of a function $f$, it is known that $Q(f \circ g) = \Theta(Q(f) \cdot Q(g))$ for all Boolean functions $f$ and $g$~\cite{negativeweights, Rei11}. This result has the flavor of a direct sum theorem: When computing many copies of the function $g$ (in this case, as many as are needed to generate the necessary inputs to $f$), one cannot do better than just computing each copy independently.

\subsection{Quantum algorithms for shared-input compositions}

While we have a complete understanding of the behavior of quantum query complexity under block composition, little is known for more general compositions. What is the quantum query complexity of a composed function where inputs to $f$ are generated by applying $g$ to overlapping sets of variables? We call these more general compositions \emph{shared-input compositions}. Not only does answering this question serve as a natural next step for improving our understanding of quantum query complexity, but it may lead to more unified algorithms and lower bounds for specific functions of interest in quantum computing. Many of the functions  that have played an influential role in the study of quantum query complexity can be naturally expressed as compositions of simple functions with shared inputs, including $k$-distinctness, $k$-sum, surjectivity, triangle finding, and graph collision.

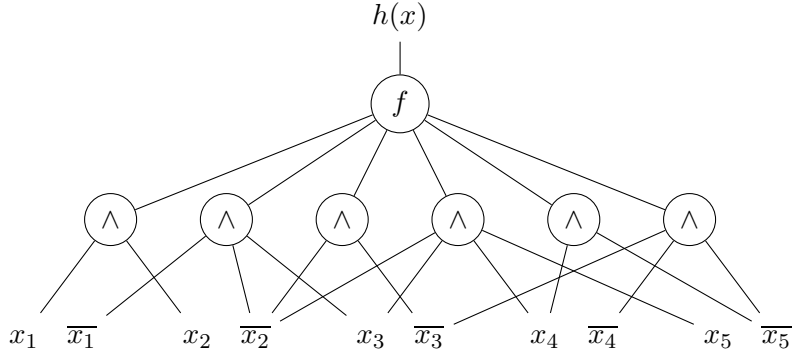
\begin{figure}[ht]
\centering
\begin{tikzpicture}[x=4em,y=4em,scale=1]

\node at (2,2.75) [] (h) {$h(x)$};

\node at (2,2) [shape=circle, draw] (f) {$f$};

\node at (-1.25,0) [] (x1) {\vphantom{I}$x_1$};
\node at (-0.75,0) [] (notx1) {\vphantom{I}$\overline{x_1}$};
\node at (0.25,0) [] (x2) {\vphantom{I}$x_2$};
\node at (0.75,0) [] (notx2) {\vphantom{I}$\overline{x_2}$};
\node at (1.75,0) [] (x3) {\vphantom{I}$x_3$};
\node at (2.25,0) [] (notx3) {\vphantom{I}$\overline{x_3}$};
\node at (3.25,0) [] (x4) {\vphantom{I}$x_4$};
\node at (3.75,0) [] (notx4) {\vphantom{I}$\overline{x_4}$};
\node at (4.75,0) [] (x5) {\vphantom{I}$x_5$};
\node at (5.25,0) [] (notx5) {\vphantom{I}$\overline{x_5}$};
\node at (-0.5,1) [shape=circle, draw] (g1) {$\wedge$};
\node at (0.5,1) [shape=circle, draw] (g2) {$\wedge$};
\node at (1.5,1) [shape=circle, draw] (g3) {$\wedge$};
\node at (2.5,1) [shape=circle, draw] (g4) {$\wedge$};
\node at (3.5,1) [shape=circle, draw] (g5) {$\wedge$};
\node at (4.5,1) [shape=circle, draw] (g6) {$\wedge$};

\draw (x1) -- (g1);
\draw (notx1) -- (g2);
\draw (x2) -- (g1);
\draw (notx2) -- (g2);
\draw (notx2) -- (g3);
\draw (notx2) -- (g4);
\draw (notx3) -- (g3);
\draw (x3) -- (g2);
\draw (x3) -- (g4);
\draw (notx3) -- (g6);
\draw (notx4) -- (g6);
\draw (x4) -- (g4);
\draw (x4) -- (g5);
\draw (x5) -- (g4);
\draw (notx5) -- (g5);
\draw (notx5) -- (g6);

\draw (g1) -- (f);
\draw (g2) -- (f);
\draw (g3) -- (f);
\draw (g4) -- (f);
\draw (g5) -- (f);
\draw (g6) -- (f);
\draw (h) -- (f);

\end{tikzpicture}
\caption{\label{fig:composition}A depth-2 circuit $h:\B^5\to\B$ where the top gate is a function $f:\B^6 \to \B$ and the bottom level gates are  $\AND$ gates on a subset of the input bits and their negations. More generally, we consider $h:\B^n \to \B$, with top gate $f:\B^m \to \B$.}
\end{figure}

In this work, we study shared-input compositions between an arbitrary function $f$ and the function $g = \AND$. If $f : \B^m \to \B$, then we let $h : \B^n \to \B$ be any function obtained by generating each input to $f$ as an $\AND$ over some subset of (possibly negated) variables from $x_1, \dots, x_n$, as depicted in \fig{composition}.

Of course, one can compute the function $h$ by ignoring the fact that the $\AND$ gates depend on shared inputs, and instead regard each gate as depending on its own set of copies of the input variables. Using the quantum query upper bound for block compositions, together with the fact that $Q(\AND_n) = \Theta(\sqrt{n})$~\cite{Gro96,BBBV97}, one obtains 
\begin{equation}
Q(h) = O(Q(f) \cdot Q(\AND_n)) = O(Q(f) \cdot \sqrt{n}).
\end{equation}
Observe that this bound on $Q(h)$ is non-trivial only if $Q(f) \ll \sqrt{n}$.
A priori, one may conjecture that this bound is tight in the worst case for shared-input compositions. After all, if the variables overlap in some completely arbitrary way with no structure, it is unclear from the perspective of an algorithm designer how to use the values of already-computed $\AND$ gates to reduce the number of queries needed to compute further $\AND$ gates. It might even be the case that every pair of $\AND$ gates shares very few common input bits, suggesting that evaluating one $\AND$ gate yields almost no information about the output of any other $\AND$ gate. This intuition even suggests a path for proving a matching lower bound: Using a random wiring pattern, combinatorial designs, etc., construct the set of inputs to each $\AND$ gate so that evaluating any particular gate leaks almost no useful information that could be helpful in evaluating the other $\AND$ gates. 

In this work, we show that this intuition is wrong: the overlapping structure of the $\AND$
gates can \emph{always} be exploited algorithmically (so long as $Q(f) \ll n$). 

\para{Results.} Our main result shows that a shared-input composition between a function $f$ and the $\AND$ function always has substantially lower quantum query complexity than the block composition $f \circ \AND_n$. Specifically, instead of having quantum query complexity which is the product $Q(f) \cdot \sqrt{n}$, a shared-input composition has quantum query complexity which is, up to logarithmic factors, the geometric mean $\sqrt{Q(f) \cdot n}$ between $Q(f)$ and the number of input variables $n$. This bound is nontrivial whenever $Q(f)$ is significantly smaller than $n$. 

\begin{restatable}{theorem}{quantum}
\label{thm:quantum}
Let $h:\B^n \to \B$ be computed by a depth-2 circuit where the top gate is a function $f:\B^m \to \B$ and the bottom level gates are  $\AND$ gates on a subset of the input bits and their negations (as depicted in \fig{composition}). Then we have
\begin{equation}
Q(h) = O\Bigl(\sqrt{Q(f) \cdot n} \cdot \log^2(mn)\Bigr).
\end{equation}
\end{restatable}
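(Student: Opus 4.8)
The plan is to simulate a near-optimal bounded-error quantum query algorithm $\mathcal{A}$ for $f$, which makes $T = O(Q(f))$ queries to its $m$ input bits, and to answer each of its queries---each of which asks for the value of one of the bottom $\AND$ gates, a conjunction $y_i = \bigwedge_{j} \ell_{i,j}$ of literals over $x$---by a quantum search for a falsified literal. Two routine preliminaries set this up. First, negations are harmless: one query to $x_j$ resolves both $x_j$ and $\overline{x_j}$, so we may treat each gate as a conjunction of literals and need never ``re-query'' a variable. Second, by the standard error-reduction machinery for composed functions (amplify each gate evaluation to error $\ll 1/(Tm)$, so that a whole superposition of gate queries is answered correctly, or equivalently invoke a robust version of $\mathcal{A}$), it suffices to implement the ``gate oracle'' with constant error; this is responsible for one of the two logarithmic factors in the bound. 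Thus the task reduces to: answer all of $\mathcal{A}$'s gate queries using $\tilde{O}(\sqrt{Q(f)\cdot n})$ queries to $x$.

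Now fix the width threshold $w := \lceil n/Q(f)\rceil$ and split the bottom gates into \emph{narrow} gates (at most $w$ literals) and \emph{wide} gates (more than $w$ literals). A query to a narrow gate is answered by Grover search over its $\le w$ literals for a falsified one, returning $0$ if one is found and $1$ otherwise, using $O(\sqrt{w})$ queries; running $\mathcal{A}$ and answering every query this way---if all gates were narrow---would cost $O(T\sqrt{w}) = O\bigl(Q(f)\cdot\sqrt{n/Q(f)}\bigr) = O(\sqrt{Q(f)\cdot n})$. This is precisely where the claimed geometric mean comes from, and it is the easy half of the argument.

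The wide gates are the crux, and I expect them to be the main obstacle. The trouble is that a wide gate evaluating to $1$ cannot be certified cheaply by a local search---confirming that none of its $>w$ literals is falsified costs up to $\sqrt{n}$ queries---and there may be as many as $m \gg Q(f)$ wide gates, so we can afford neither to let $\mathcal{A}$ query wide gates individually nor to enumerate all $1$-valued wide gates up front. The intended approach is instead to spend $\tilde{O}(\sqrt{Q(f)\cdot n})$ queries learning a set $L\subseteq[n]$ of variable values that \emph{resolves} every wide gate, i.e.\ for each wide gate either exposes a falsified literal of it (so its value is $0$) or pins down enough of its literals to determine it. The leverage is that a wide gate has more than $n/Q(f)$ literals, so a ``useful'' variable---one that falsifies a currently-unresolved wide gate---is abundant whenever unresolved wide gates remain: a batch of $k$ such variables can be found with $O(\sqrt{nk})$ queries via the standard find-all-marked-items routine (with a doubling search over $k$, supplying the second logarithmic factor), and capping $k$ at $O(Q(f))$ keeps us in budget. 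If this learning phase terminates having found all useful variables, every wide gate is resolved and $\mathcal{A}$ can be run with narrow gates handled as above, finishing within $\tilde{O}(\sqrt{Q(f)\cdot n})$. The genuinely delicate case is when more than $O(Q(f))$ useful variables exist; here one must argue globally---either amortizing the varying per-gate evaluation costs across $\mathcal{A}$'s superposition of queries (e.g.\ via variable-time amplitude amplification, so that expensive wide-gate evaluations are charged correctly) or recursing on the top function $f'$ obtained by fixing the learned variables, with a measure that decreases geometrically---and reconciling such an argument with the coherence constraints of $\mathcal{A}$ is where the real work lies.
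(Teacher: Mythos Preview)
Your narrow/wide split at threshold $w=\lceil n/Q(f)\rceil$ and the handling of narrow gates via block composition are exactly right and match the paper. The gap is precisely where you flag it: the wide-gate phase. Your proposed primitive---search for a ``useful'' variable, meaning any $x_j=0$ that appears in some unresolved wide gate---does not control the number of iterations. On an input with many zeros (say $x=0^n$), essentially every variable is useful, so finding all of them costs $\Theta(\sqrt{n\cdot n})=\Theta(n)$, and capping at $O(Q(f))$ found variables leaves wide gates unresolved. Your two suggested escapes (variable-time amplitude amplification over $\mathcal{A}$'s superposition, or an unspecified geometric recursion on $f'$) are not developed enough to close this; in particular it is unclear what potential decreases and why.

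The missing idea is to refine ``useful'' to ``high-degree'': search only for a zero variable that is incident to at least $|S_0|/(2Q(f))$ currently-wide gates, where $S_0$ is the initial set of wide gates. Each success deletes that many wide gates, so the loop can run at most $2Q(f)$ times; the usual telescoping of Grover costs over a shrinking marked set gives total cost $O(\sqrt{Q(f)\cdot n}\log n)$. When the search fails, every remaining high-degree variable is $1$ and all its wires can be deleted; since each surviving variable now has degree below $|S_0|/(2Q(f))$ into wide gates, the total wire count into wide gates has at least halved. Repeating this halving subroutine $O(\log(mn))$ times eliminates all wide gates, after which your narrow-gate analysis applies verbatim. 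This is the geometric recursion you gesture at, but the crucial move is weighting variables by how many wide gates they touch, not merely by whether they touch one.
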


Note that \thm{quantum} is nearly tight for every possible value of $Q(f) \in [n]$.\footnote{ \thm{quantum} is not tight for every function $f$, of course. For example if $f$ is an $\AND$ on many inputs, the composed function will have quantum query complexity $O(\sqrt{n})$ but the upper bound of \thm{quantum} can be larger than this.} For a parameter $t \le n$, consider the block composition (i.e., the composition with disjoint inputs) $\PARITY_t \circ \AND_{n / t}$. Since $Q(\PARITY_t) = \lceil t / 2 \rceil$~\cite{beals}, this function has quantum query complexity
\begin{equation}
Q\(\PARITY_t \circ \AND_{n / t}\)=\Theta\left(t \cdot \sqrt{n/t}\right) = \Theta\left(\sqrt{Q(\PARITY_t) \cdot n}\right),
\end{equation}
matching the upper bound provided by \thm{quantum} up to log factors. This shows that \thm{quantum} cannot be significantly improved in general.

The proof of \thm{quantum} makes use of an optimal quantum algorithm for computing $f$ and Grover's search algorithm for evaluating $\AND$ gates. 
Surprisingly, it uses no other tools from quantum computing. 
The core of the argument is entirely classical, relying on a recursive gate and wire-elimination argument for evaluating $\AND$ gates with overlapping inputs. 

At a high level, the algorithm in \thm{quantum} works as follows. The overall goal is to  query enough input bits such that the resulting circuit is simple enough to apply the composition upper bound $Q(f \circ g) = O(Q(f)Q(g))$. To apply this upper bound and obtain the claimed upper bound in \thm{quantum}, we require $Q(g)$ to be $O(\sqrt{n/Q(f)})$. Since $g$ is just an $\AND$ gate on some subset of inputs, this means we want the fan-in of each $\AND$ gate in our circuit to be $O(n/Q(f))$. If we call $\AND$ gates with fan-in $\omega(n/Q(f))$ ``high fan-in'' gates, then the goal is to eliminate all high fan-in gates. Our algorithm achieves this by judiciously querying input bits that would eliminate a large number of high fan-in gates if they were set to 0. 

Besides the line of work on the quantum query complexity of block compositions, our result is also closely related to work of Childs, Kimmel, and Kothari~\cite{CKK12} on read-many formulas. Childs et al. showed that any formula on $n$ inputs consisting of $G$ gates from the de Morgan basis $\{\AND, \OR, \NOT\}$ can be evaluated using $O(G^{1/4} \cdot \sqrt{n})$ quantum queries. In the special case of DNF formulas, our result coincides with theirs by taking the top function $f$ to be the $\OR$ function. However, even in this special case, the result of Childs et al. makes critical use of the top function being $\OR$. Specifically, their result uses the fact that the quantum query complexity of the $\OR$ function is the square root of its formula size. Our result, on the other hand, applies without making any assumptions on the top function $f$. This level of generality is needed when using \thm{quantum} to understand \emph{circuits} (rather than just formulas) of depth 3 and higher, as discussed in \sec{intro-lc}.

\subsection{Approximate degree of shared-input compositions}

We also study shared-input compositions under the related notion of approximate degree. For a Boolean function $f : \B^n \to \B$, an \emph{$\eps$-approximating polynomial} for $f$ is a real polynomial $p : \B^n \to \R$ such that $|p(x) - f(x)| \le \eps$ for all $x \in \B^n$. The \emph{$\eps$-approximate degree} of $f$, denoted $\deg_\eps(f)$, is the least degree among all $\eps$-approximating polynomials for $f$. We use the term \emph{approximate degree} without qualification to refer to choice $\eps = 1/3$, and denote it $\adeg(f) = \deg_{1/3}(f)$.

A fundamental observation due to Beals et al.~\cite{beals} is that any $T$-query quantum algorithm for computing a function $f$ implicitly defines a degree-$2T$ approximating polynomial for $f$. Thus, $\adeg(f) \le 2Q(f)$. This relationship has led to a number of successes in proving quantum query complexity lower bounds via approximate degree lower bounds, constituting a technique known as the polynomial method in quantum computing. Conversely, quantum algorithms are powerful tools for establishing the existence of low-degree approximating polynomials that are needed in other applications to theoretical computer science. For example, the deep result that every de Morgan formula of size $s$ has quantum query complexity, and hence approximate degree, $O(\sqrt{s})$~\cite{FGG08,CCJY07,ACRSZ10,Rei11} underlies the fastest known algorithm for agnostically learning formulas~\cite{kkms, Rei11} (See \sec{intro-learning} and \sec{agnostic} for details on this application). It has also played a major role in the proofs of the strongest formula and graph complexity lower bounds for explicit functions~\cite{tal1}.

\para{Results.} We complement our result on the quantum query complexity of shared-input compositions with an analogous result for approximate degree. 
\begin{restatable}{theorem}{composition}
\label{thm:composition}
Let $h:\B^n\to \B$ be computed by a depth-2 circuit where the top gate is a function $f:\B^m \to \B$ and the bottom level gates are  $\AND$ gates on a subset of the input bits and their negations (as depicted in \fig{composition}). Then 
\begin{equation}
\deg_\eps(h) =O\(\sqrt{\deg_\eps(f)\cdot n \cdot \log m} + \sqrt{n\log(1/\eps)}\).\label{eq:composition}
\end{equation}
In particular, $\adeg(h) = O\(\sqrt{\adeg(f) \cdot n\log m}\)$.
\end{restatable}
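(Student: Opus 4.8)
\medskip
\noindent\emph{Proof proposal.}
Write $d \coloneqq \deg_\eps(f)$ and fix a degree-$d$ polynomial $p$ that $\eps$-approximates $f$. The plan is to build an approximating polynomial for $h$ by substituting into a \emph{robust} version of $p$ a cheap polynomial $q_i$ for each bottom gate $\AND_{S_i}$. Recall that one can robustify $p$ so that it tolerates additive noise up to $1/3$ in each of its $m$ arguments while outputting within $O(\eps)$ of the correct value, at the cost of only an additive $O(\log(1/\eps))$ in degree; it then suffices to take each $q_i$ to be merely a $(1/3)$-approximant of $\AND_{S_i}$, turned into a polynomial in $x$ by fixing the literals outside $S_i$ to $1$, which has degree $O(\sqrt{|S_i|})$. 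This yields an $O(\eps)$-approximant of $h$ of degree $O\bigl((d + \log(1/\eps)) \cdot \max_i \sqrt{|S_i|}\bigr)$. When every $|S_i|$ is small this is already what we want, and the $\PARITY_t \circ \AND_{n/t}$ example tells us the right target is $\tO(\sqrt{dn})$; the entire difficulty is that a single wide gate makes $\max_i\sqrt{|S_i|}$ as large as $\sqrt{n}$, so the bound degenerates to $\Theta(d\sqrt n)$.

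Accordingly, I would split the gates at the fan-in threshold $K \coloneqq \lceil n/d \rceil$. For the \emph{light} gates ($|S_i| \le K$) the robust-composition step above is enough: it contributes degree $O\bigl((d+\log(1/\eps))\sqrt{K}\bigr) = \tO(\sqrt{dn})$, within the claimed bound. What remains — and where a genuinely different idea is needed — is to neutralize the \emph{wide} gates without paying $d\cdot\sqrt n$. The structural fact to exploit is that a wide gate is satisfied only inside a subcube of co-dimension greater than $K \gg \log m$, so each wide gate fires on a vanishing fraction of inputs; when no wide gate fires, $h$ collapses to $f$ with the wide inputs hard-wired to $0$, composed only with light gates — a problem we have just solved, and one whose top-function complexity has not increased. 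I would try to realize this by a recursion: use a polynomial $g$ that detects whether some wide gate fires, write $h = (1-g)\cdot h + g\cdot h$, dispatch the first summand as above, and handle $g\cdot h$ by conditioning on a wide gate firing (which pins $\ge K$ literals) and recursing on the residual, strictly smaller circuit — arranging that the $O(\log m)$ levels of recursion contribute a geometric rather than additive series of $\sqrt n$-type terms, and that this is where the $\sqrt{n\log m}$ and $\sqrt{n\log(1/\eps)}$ contributions come from.

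The hard part will be exactly this wide-gate recursion, for two linked reasons. First, the disjunction of the wide gates need not itself have small approximate degree, so one cannot simply approximate $g$ and multiply; one has to peel the wide gates more surgically — for instance via a truncated inclusion–exclusion over the wide gates, or by repeatedly isolating a literal shared by many wide gates (the polynomial analogue of the gate-elimination argument behind \thm{quantum}) — while keeping the degree growth controlled across all $O(\log m)$ levels and not re-incurring the prohibitive $\sqrt{\max_i|S_i|}$ factor. Second, the error accounting is delicate: robustifying $p$, substituting the $q_i$, forming the product with an approximant of $g$, and iterating all inflate the error, so the per-stage precision parameters (and their accumulation over the $O(\log m)$ levels) must be chosen so that the final error is $O(\eps)$ while every $\log(1/\delta)$ term incurred stays within $\sqrt{dn\log m} + \sqrt{n\log(1/\eps)}$. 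Once \eq{composition} is in hand, the ``in particular'' statement is immediate by setting $\eps = 1/3$, since then $\sqrt{n\log(1/\eps)} = O(\sqrt n) = O(\sqrt{\adeg(f)\cdot n})$.
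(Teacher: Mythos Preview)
Your proposal works around a difficulty that the paper dissolves with one observation you missed: since each $y_i$ is an $\AND$ of literals, every \emph{monomial} $\prod_{i\in S} y_i$ of an approximating polynomial $p$ for $f$ is itself a single $\AND$ of at most $n$ literals in $x$. So rather than robustifying $p$ and substituting per-gate approximants, the paper expands $p(y)=\sum_s \alpha_s \prod_i y_i^{s_i}$ and substitutes directly to get $q(x)=\sum_s \alpha_s \bigwedge_{i\in T_s} x_i$, a weighted sum of $\AND$s over $x$ with total coefficient weight $\mu(p)=\sum_s|\alpha_s|$. Now approximate each $\AND$ to error $\delta=\eps/\mu(p)$; by the triangle inequality the additional error is at most $\eps$, and each approximant has degree $O(\sqrt{n\log(1/\delta)})=O(\sqrt{n\log\mu(p)}+\sqrt{n\log(1/\eps)})$. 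Taking $p$ to witness $\mu_\eps(f)$ and invoking the elementary bound $\log\mu_\eps(f)=O(\deg_\eps(f)\log m)$ finishes. No light/wide split, no robust composition, no recursion.

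The wide-gate recursion you sketch is where your plan has a genuine gap. You correctly observe that the disjunction of wide gates can have large approximate degree, so the $(1-g)h+gh$ split does not obviously help; the alternatives you float (truncated inclusion--exclusion, literal-peeling) are not worked out, and it is unclear either controls the degree to $\tO(\sqrt{dn})$ across $O(\log m)$ levels. The deeper issue is that robust composition treats the $m$ arguments of $p$ as opaque, forcing the $\max_i\sqrt{|S_i|}$ factor; the paper's trick instead exploits that after substitution a monomial in the $y_i$'s collapses to one $\AND$ on at most $n$ variables no matter how many or how wide the $y_i$'s it multiplies, so the degree per monomial is always $O(\sqrt{n\log(1/\delta)})$ and $\deg_\eps(f)$ enters only through the coefficient weight.
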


Note that our result for approximate degree is incomparable with \thm{quantum}, even for bounded error, since both sides of the equation include the complexity measure under consideration.

Like \thm{quantum}, \thm{composition} can be shown to be tight by considering the block composition of $\PARITY$ with $\AND$, since $\adeg(\PARITY_{t}\circ \AND_{n/t})=\Theta\(\sqrt{\adeg(\PARITY_t)\cdot n}\)$ \cite{sherstovrobust, comm5}.

Our proof of \thm{composition} abstracts and generalizes a technique introduced by Sherstov~\cite{algorithmicpolys}, who very recently proved an $O(n^{3/4})$ upper bound on the approximate degree of an important depth-3 circuit of nearly quadratic size called Surjectivity~\cite{algorithmicpolys}. Despite the similarity between \thm{composition} and \thm{quantum}, and the close connection between approximating polynomials and quantum algorithms, the proof of \thm{composition} is completely different from \Cref{thm:quantum}, making crucial use of properties of polynomials that do not hold for 
quantum algorithms.\footnote{Any analysis capable of yielding a sublinear upper bound
on the approximate degree of Surjectivity requires moving beyond quantum algorithms, as its quantum query complexity is known to be ${\Omega}(n)$~\cite{beame,sherstov15}.} In our opinion, this feature of the proof of \thm{composition} makes \thm{quantum} for quantum algorithms even more surprising.

We remark that a different proof of the $O(n^{3/4})$ upper bound for the approximate degree of Surjectivity was discovered in \cite{BKT18},
who also showed a matching lower bound. It is also possible to prove \thm{composition} by generalizing the techniques developed in that work, but the techniques of \cite{algorithmicpolys} lead to a shorter and cleaner analysis.

\subsection{Application: Evaluating and approximating linear-size \AC circuits} \label{sec:intro-lc}

The circuit class \AC consists of constant-depth, polynomial-size circuits over the de Morgan basis $\{\AND, \OR, \NOT\}$ with unbounded fan-in gates. The full class \AC is known to contain very hard functions from the standpoint of both quantum query complexity and approximate degree. The aforementioned Surjectivity function is in depth-3 \AC and has quantum query complexity $\Omega(n)$~\cite{beame,sherstov15}, while for every positive constant $\delta > 0$, there exists a depth-$O(\log(1/\delta))$ \AC circuit with approximate degree $\Omega(n^{1-\delta})$~\cite{BT17}.

Nevertheless, \AC contains a number of interesting subclasses for which nontrivial quantum query and approximate degree upper bounds might still hold. Here, we discuss applications of our composition theorem to understanding the subclass \LC, consisting of \AC circuits of linear size.

The class \LC is one of the most interesting subclasses of \AC. It has been studied by many authors in various complexity-theoretic contexts, ranging from logical characterizations~\cite{KLPT06} to faster-than-brute-force satisfiability algorithms~\cite{calabroetal,santhanam2012limits}.
\LC turns out to be a surprisingly powerful class. For example, the $k$-threshold function that asks if the input has Hamming weight greater than $k$ is clearly in \AC for constant $k$, by computing the $\OR$ of all $\binom{n}{k}$ possible certificates. But this yields a circuit of size $O(n^k)$, which one might conjecture is optimal. However, it turns out that $k$-threshold is in \LC even when $k$ is as large as $\polylog(n)$~\cite{RW91}.
Another surprising fact is that every regular language in \AC can be computed by an \AC circuit of almost linear size (e.g., size $O(n\log^* n)$ suffices)~\cite{Kou09}.
  
By recursively applying \thm{quantum}, we obtain the following sublinear upper bound on the quantum query complexity of depth-$d$ \LC circuits, denoted by $\LCd$:

\begin{restatable}{theorem}{quantumLC} \label{thm:quantumLC}
For all constants $d\geq 0$ and all functions $h:\B^n \to \B$ in $\LCd$, we have $Q(h) = \tO(n^{1-{2^{-d}}})$.
\end{restatable}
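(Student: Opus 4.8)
The plan is to induct on the depth $d$, using \thm{quantum} as the composition step. The base case $d=0$ (a single variable or constant) gives $Q(h) = O(1) = O(n^{1-2^0}) = O(n^0) = O(1)$, which is trivially fine. For the inductive step, suppose the bound holds for depth $d-1$ and let $h \in \LCd$ be computed by a linear-size circuit $C$ of depth $d$. The natural move is to view $C$ as a shared-input composition: the top gate together with the second layer forms the function $f$, and the bottom layer consists of $\AND$ gates (after pushing negations to the leaves via de Morgan, so that the bottom gates are $\AND$s on literals, as required by \thm{quantum}). Then $h = f \circ \AND$ in the shared-input sense, where $f$ is a depth-$(d-1)$ circuit — but \emph{not} necessarily of linear size, since collapsing the bottom layer can blow up the size. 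This is the main obstacle, and resolving it is where the structure of $\LC$ circuits must be used carefully.

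The fix I would pursue is the standard trick of \emph{not} collapsing a full layer, but instead splitting the circuit by a cutoff on the number of wires/gates above each node, or equivalently by treating the bottom \emph{$\wedge$-layer} as $g=\AND$ and everything above it as $f$ while controlling $Q(f)$ directly rather than through its size. Concretely: let $C$ have $s = O(n)$ wires. Write $h = f(\AND_{S_1}, \dots, \AND_{S_m})$ where $f: \B^m \to \B$ is the depth-$(d-1)$ subcircuit sitting above the bottom layer. The subcircuit $f$ is itself an \AC circuit of depth $d-1$ on $m$ inputs with \emph{at most $s = O(n)$ wires}; since $m \le s = O(n)$, $f$ is a linear-size (in its own input length $m$) depth-$(d-1)$ circuit, i.e.\ $f \in \LCdmone$ on $m = O(n)$ variables. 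By the inductive hypothesis, $Q(f) = \tO\bigl(m^{1 - 2^{-(d-1)}}\bigr) = \tO\bigl(n^{1 - 2^{-(d-1)}}\bigr)$. Plugging this into \thm{quantum} gives
\begin{equation}
Q(h) = \tO\Bigl(\sqrt{Q(f)\cdot n}\Bigr) = \tO\Bigl(\sqrt{n^{1-2^{-(d-1)}}\cdot n}\Bigr) = \tO\Bigl(n^{(2 - 2^{-(d-1)})/2}\Bigr) = \tO\bigl(n^{1 - 2^{-d}}\bigr),
\end{equation}
where the last step uses $2^{-(d-1)}/2 = 2^{-d}$. The $\polylog(mn) = \polylog(n)$ factors from \thm{quantum} get absorbed into the $\tO(\cdot)$, and across $d = O(1)$ levels of recursion only a constant number of such factors accumulate, so the final bound is still $\tO(n^{1-2^{-d}})$.

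The one subtlety to check — and the step I'd expect to require the most care in the writeup — is the claim that the top part $f$ really is an $\LCdmone$ circuit on $O(n)$ inputs, in particular that $m = O(n)$: this holds because each bottom-layer $\AND$ gate feeds at least one wire into $f$, so $m$ is at most the number of wires of $C$, which is $O(n)$. One must also ensure the bottom layer genuinely consists of $\AND$ gates on literals: starting from a general de Morgan circuit, push all negations to the inputs, and if the bottom layer happens to be $\OR$ gates, apply de Morgan once more so that $h = f' \circ \AND$ with $f'$ absorbing the extra negations — $f'$ is still depth $d-1$, linear size, on $O(n)$ inputs. (If the circuit has no $\wedge$ at the bottom layer at all, e.g.\ leaves feed directly into a higher gate, pad trivially with fan-in-$1$ $\AND$ gates.) With this bookkeeping in place the induction goes through cleanly, and the same argument applied with \thm{composition} in place of \thm{quantum} yields the companion statement $\adeg(h) = \tO(n^{1-2^{-d}})$ for $\LCd$.
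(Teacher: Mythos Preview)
Your proposal is correct and follows essentially the same inductive argument as the paper: peel off the bottom layer of $\AND$s, apply the inductive hypothesis to the remaining depth-$(d-1)$ linear-size circuit $f$ on $O(n)$ inputs, and invoke \thm{quantum}. One small slip: the paper defines size as the number of \emph{gates}, not wires, so ``$C$ has $s=O(n)$ wires'' is not guaranteed by $h\in\LCd$; but the bound $m=O(n)$ you actually need follows even more directly, since the $m$ bottom-layer gates are among the $O(n)$ gates of $C$.
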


Our upper bound is nearly tight for every depth $d$, as shown in~\cite{CKK12}.

\begin{restatable}[Childs, Kimmel, and Kothari]{theorem}{quantumlower}
\label{thm:quantumlower}
For all constants $d\geq 0$, there exists a function $h:\B^n \to \B$ in $\LCd$ with $Q(h) \geq n^{1-{2^{-\Omega(d)}}}$.
\end{restatable}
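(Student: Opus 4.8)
The plan is to build, for each constant $d$, a function $h_d$ computed by a linear-size depth-$d$ \AC circuit whose quantum query complexity meets the bound, by a \emph{recursive shared-input construction} that realizes the worst case of \thm{quantum} at every level. The base case is $h_1=\AND_n$, which lies in \LCone with $Q(h_1)=\Theta(\sqrt n)=n^{1-2^{-1}}$. For the inductive step, suppose $h_{d-1}$ is on $m$ variables, computed by a linear-size depth-$(d-1)$ \AC circuit, with $Q(h_{d-1})\ge m^{1-2^{-\Omega(d-1)}}$. I would define $h_d$ on $n=\tilde\Theta(m)$ variables by using $h_{d-1}$ as the top gate above roughly $m$ fresh $\AND$ gates whose input sets are drawn from a carefully chosen, heavily overlapping family of subsets of $[n]$: the overlaps are exactly what let the variable count stay within a polylogarithmic factor of $m$ while the $\AND$ gates still have large enough fan-in to boost the query complexity. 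The resulting circuit has depth $d$ and $O(m)+O(m)=\tilde O(n)$ gates (negations are free, since \AC contains $\NOT$), so $h_d\in\LCd$. The moral template is the tight example $\PARITY_t\circ\AND_{n/t}$ for \thm{quantum}; since $\PARITY$ has no subexponential-size \AC circuit, the role of the outer parity is played instead by the recursively built, \AC-computable, high-query-complexity function $h_{d-1}$.

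Granting that the construction achieves $Q(h_d)=\tilde\Omega\bigl(\sqrt{Q(h_{d-1})\cdot n}\bigr)$ — i.e.\ that the upper bound of \thm{quantum} is attained for this family — the bookkeeping is routine: writing $Q(h_{d-1})=m^{\alpha_{d-1}}$ and $n=\tilde\Theta(m)$ gives $Q(h_d)=\tilde\Omega\bigl(n^{(1+\alpha_{d-1})/2}\bigr)$, so $1-\alpha_d\le\tfrac12(1-\alpha_{d-1})+o(1)$, and unrolling from $\alpha_1=\tfrac12$ yields $1-\alpha_d=2^{-\Omega(d)}$, which is \thm{quantumlower}. The loss from the ``clean'' rate $2^{-d}$ to $2^{-\Omega(d)}$ comes from the polylogarithmic slack in $n=\tilde\Theta(m)$ and from the overlap family not quite meeting the very worst case; by \thm{quantumLC} the family cannot do asymptotically better than $n^{1-2^{-\Omega(d)}}$ in any case, so the construction and its analysis are forced to be tight against each other.

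For the lower bound $Q(h_d)=\tilde\Omega\bigl(\sqrt{Q(h_{d-1})\cdot n}\bigr)$ I would work with the general (negative-weight) adversary bound $\mathrm{ADV}^\pm$, since $Q(h)=\Theta(\mathrm{ADV}^\pm(h))$ and $\mathrm{ADV}^\pm$ composes multiplicatively under block structure. The induction would build an adversary matrix for $h_d$ from an optimal one for $h_{d-1}$ (on the $m$ wire-values) together with optimal matrices for the $\AND$ gates, and then argue — this is where the specific overlap family is used — that the pieces glue together without the bound collapsing, so that $\mathrm{ADV}^\pm(h_d)=\tilde\Omega\bigl(\sqrt{\mathrm{ADV}^\pm(h_{d-1})\cdot n}\bigr)$. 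An alternative, possibly cleaner route is to lower-bound the approximate degree instead: construct an explicit dual polynomial for $h_d$ by gluing a dual witness for $h_{d-1}$ to dual witnesses for the $\AND$ gates, in the style of the dual block compositions of Sherstov and of Bun and Thaler, obtain $\adeg(h_d)=\tilde\Omega\bigl(\sqrt{\adeg(h_{d-1})\cdot n}\bigr)$, and conclude via $Q(h_d)\ge\adeg(h_d)/2$. Either way one must check that bounded-error composition does not let errors compound across the $d$ levels, but that costs only logarithmic factors, below the resolution of the theorem.

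The main obstacle is exactly the step granted above: showing that the shared-input composition cannot be evaluated faster than $\sqrt{Q(h_{d-1})\cdot n}$. This is delicate precisely because \thm{quantum} shows that overlapping inputs can \emph{always} be exploited algorithmically — so the overlap family has to use just enough sharing to keep the size linear and the depth constant, and no more structure than that, placing the family exactly on the boundary of the \thm{quantum} tradeoff. Pinning down such a family (a suitable combinatorial design together with the fan-in schedule across levels) and certifying, via the composed adversary matrix or dual polynomial, that no cleverer algorithm exists is where essentially all of the difficulty lies; the size, depth, and exponent calculations are then straightforward.
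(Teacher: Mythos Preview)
Your proposal has a genuine gap at exactly the point you flag as ``the main obstacle'': you need a lower bound of $\tilde\Omega\bigl(\sqrt{Q(h_{d-1})\cdot n}\bigr)$ for a \emph{shared-input} composition of $h_{d-1}$ over $\AND$ gates with heavily overlapping inputs, and no technique is known for proving such a bound. The adversary method and dual-polynomial compositions you invoke behave multiplicatively for \emph{block} compositions because of their tensor-product structure; overlapping inputs destroy exactly that structure. Indeed, the only known tight instances of \thm{quantum} are block compositions such as $\PARITY_t\circ\AND_{n/t}$, and for a block composition $h_{d-1}\circ\AND_{n/m}$ the resulting bound $Q(h_{d-1})\cdot\sqrt{n/m}$ equals $\sqrt{Q(h_{d-1})\cdot n}$ only when $Q(h_{d-1})$ is already linear in $m$---which collapses your induction. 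So your plan leaves the entire content of the theorem unproved.

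The actual construction of \cite{CKK12} (mirrored in this paper's proof of \thm{lower} for approximate degree) avoids shared inputs altogether. The idea you are missing is a \emph{seed} function $C^*$ with \emph{linear} quantum query complexity but \emph{quadratic} circuit size: Surjectivity is a depth-$3$ \AC circuit of size $m^2$ with $Q(C^*_m)=\Omega(m)$. One then sets $C^i_n := C^*_{\sqrt{n}}\circ C^{i-1}_{\sqrt{n}}$ via block composition on disjoint inputs. The outer $C^*_{\sqrt{n}}$ has size $(\sqrt{n})^2=n$, and the $\sqrt{n}$ disjoint copies of a linear-size $C^{i-1}_{\sqrt{n}}$ contribute another $O(n)$ gates, so $C^i$ remains linear size while the depth grows by~$3$. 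Because the inputs are disjoint, the known composition theorem $Q(f\circ g)=\Theta(Q(f)\cdot Q(g))$ gives $Q(C^i_n)=\Omega\bigl(\sqrt{n}\cdot Q(C^{i-1}_{\sqrt{n}})\bigr)$ immediately, with no delicate overlap analysis. Unrolling yields $Q(C^i_n)\ge n^{1-2^{-i}}$ at depth $\Theta(i)$; the exponent $2^{-\Omega(d)}$ rather than $2^{-d}$ arises simply because each iteration costs three layers of depth (Surjectivity being depth~$3$), not one.
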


By recursively applying \thm{composition}, we obtain a similar 
sublinear upper bound for the $\eps$-approximate degree of $\LCd$, even for subconstant values of $\eps$.

\begin{restatable}{theorem}{approxdegLC}\label{thm:approxdegLC}
	For all constant $d\geq 0$, and any $\eps > 0$, and all functions $h:\B^n \to \B$ in $\LCd$, we have 
\begin{equation}
\deg_\eps(h) = \tO\Bigl(n^{1-2^{-d}}\log^{2^{-d}}(1/\eps)\Bigr).
\end{equation}
\end{restatable}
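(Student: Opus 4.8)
The plan is to prove \thm{approxdegLC} by induction on the depth $d$, using \thm{composition} as the inductive step. First I would set up the induction: for $d = 0$, an $\LC^0_0$ circuit is a single variable or constant, so $\deg_\eps(h) = O(1) = \tO(n^{1 - 2^0}\log(1/\eps)^{2^0})$ trivially. For $d = 1$, a linear-size depth-1 circuit is an $\AND$ or $\OR$ of at most a linear number of literals (possibly combined by a single top gate of linear fan-in), and $\adeg$ of such a gate on $n'$ inputs is $O(\sqrt{n' \log(1/\eps)})$ by the standard approximate-degree bound for $\OR$/$\AND$, which matches $\tO(n^{1/2}\log(1/\eps)^{1/2})$. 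This is the base case from which \thm{composition} bootstraps.

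Next I would handle the inductive step. Given a depth-$d$ \LC circuit $h$ of size $O(n)$, I want to view it as a depth-2 shared-input composition of the form covered by \thm{composition}: a top function $f$ applied to $\AND$ gates over literals. The subtlety is that \thm{composition} requires the bottom level to be exactly $\AND$ gates on literals, whereas a generic depth-$d$ \LC circuit has $\AND$/$\OR$ gates at the bottom two levels. The standard fix is to push negations to the leaves (de Morgan) and then, at the bottom, collapse any $\OR$ of literals into a single layer, so that the circuit has the form (depth-$(d-1)$ circuit) $\circ$ ($\AND$ of literals) — i.e., I peel off the bottom layer of $\AND$ gates and let $f$ be the top depth-$(d-1)$ part. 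I would need to argue that $f$ is itself computed by an $\LC^0_{d-1}$ circuit on $m = O(n)$ inputs (the $\AND$-gate outputs), since removing the bottom layer from a linear-size circuit leaves a linear-size circuit of depth one less. Then the inductive hypothesis gives $\deg_\eps(f) = \tO(m^{1 - 2^{-(d-1)}}\log(1/\eps)^{2^{-(d-1)}}) = \tO(n^{1 - 2^{-(d-1)}}\log(1/\eps)^{2^{-(d-1)}})$.

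Now I would plug this into \eq{composition}. The first term becomes $\sqrt{\deg_\eps(f)\cdot n\cdot\log m} = \tO\!\left(\sqrt{n^{1 - 2^{-(d-1)}}\log(1/\eps)^{2^{-(d-1)}}\cdot n}\right) = \tO\!\left(n^{1 - 2^{-d}}\log(1/\eps)^{2^{-d}}\right)$, since $(1 + 1 - 2^{-(d-1)})/2 = 1 - 2^{-d}$ and the exponent on $\log(1/\eps)$ halves to $2^{-d}$; the $\log m$ factor is absorbed into the $\tO$. The second term $\sqrt{n\log(1/\eps)}$ is dominated by the first for $d \ge 1$ (note $1 - 2^{-d} \ge 1/2$ and $2^{-d} \le 1/2$), so it too is absorbed. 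Combining yields $\deg_\eps(h) = \tO(n^{1 - 2^{-d}}\log(1/\eps)^{2^{-d}})$, completing the induction.

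The main obstacle I anticipate is not the arithmetic but the structural bookkeeping in the inductive step: cleanly arguing that an $\LC^0_d$ circuit can be massaged into the exact shared-input form $f \circ (\text{layer of }\AND\text{ gates on literals})$ with $f \in \LC^0_{d-1}$ on $O(n)$ inputs, and in particular tracking that ``linear size'' is preserved when we strip a layer and when we merge adjacent same-type gates or split gates to normalize the bottom layer. One must also be slightly careful that $\OR$ gates at the bottom, after de Morgan, become $\AND$ gates feeding into a possibly-negated top part — this is fine because \thm{composition} places no restriction on $f$, so the parity of negations at the top is harmless. Handling the edge cases where $d$ is small (so that the ``top depth-$(d-1)$ part'' degenerates) and confirming the constant-depth hypothesis keeps all polylog factors under control are the remaining details, but they are routine once the reduction to \thm{composition} is in place.
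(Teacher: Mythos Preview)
Your proposal is correct and follows essentially the same inductive argument as the paper, which proves the slightly more general \thm{upper} (for arbitrary size $s$) and reads off \thm{approxdegLC} as the $s=O(n)$ case. One small wrinkle: your justification that the $\sqrt{n\log(1/\eps)}$ term is dominated by $n^{1-2^{-d}}\log(1/\eps)^{2^{-d}}$ does not follow merely from the exponent inequalities you cite and in fact fails when $\log(1/\eps) > n$, but in that regime the target bound already exceeds $n \geq \deg_\eps(h)$ and so holds trivially---the paper handles exactly this via an explicit case split on whether $\eps \le 2^{-s}$.
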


For constant $\epsilon$, we prove a lower bound of the same form with quadratically worse dependence on the depth $d$.

\begin{restatable}{theorem}{lowerbound}
\label{thm:lower}
For all constants $d\geq 0$, there exists a function $h:\B^n\to\B$ in \LCd with $\adeg(h) \geq  n^{1-2^{-\Omega(\sqrt{d})}}$.
\end{restatable}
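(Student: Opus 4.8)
The plan is to exhibit, for each constant $d$, an explicit \LCd circuit whose approximate degree is $n^{1-2^{-\Omega(\sqrt d)}}$, by recursively composing a hard base function with $\AND$ gates and tracking how approximate degree grows under composition. The natural candidate is a block composition of the form $g_k \circ \AND_{t} \circ \AND_t \circ \cdots$, or more precisely a tower built so that at each level we compose the current function with a layer of $\AND$ (or $\OR$) gates on disjoint blocks, starting from a function $g$ of known high approximate degree — the canonical choice being iterated $\AND$-$\OR$ trees, or the function from Bun--Thaler~\cite{BT17} that achieves approximate degree $\Omega(n^{1-\delta})$ in depth $O(\log(1/\delta))$. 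The key quantitative engine is a \emph{lower-bound} composition theorem for approximate degree under block composition with $\AND$: there is a hardness-amplification statement (via the dual-polynomial / hardness-amplification machinery of Sherstov and of Bun--Thaler) showing $\adeg(f \circ \AND_t) = \tilde\Omega\big(\sqrt{\adeg(f) \cdot t}\big)$ for suitable $f$, which is exactly the tight converse to \thm{composition}. Iterating this across $\Theta(\sqrt d)$ levels, each contributing roughly a square-root amplification, should push the exponent from a constant bounded away from $1$ up to $1 - 2^{-\Omega(\sqrt d)}$, while the depth used is $O(\sqrt d \cdot \log\log n)$-ish — here one must be careful that composing with $\AND$ costs only $1$ in depth per level, so $\Theta(\sqrt d)$ levels of amplification fit inside depth $d$, and crucially the \emph{size} stays linear because each $\AND$ layer operates on disjoint blocks (total fan-in $n$) and the top function $g$ acts on a sublinear number of inputs.

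First I would fix the size budget: an \LCd circuit on $n$ inputs has $O(n)$ wires, so at each of the $\ell = \Theta(\sqrt d)$ amplification levels the blocks feeding the $\AND$ gates partition (a constant fraction of) the $n$ variables, and the arity of the function one level up shrinks by roughly the block size $t$. Choosing the block sizes $t_1, \dots, t_\ell$ so that $t_1 t_2 \cdots t_\ell \approx n / (\text{arity of base } g)$ and optimizing, one solves a recursion of the shape $a_{i+1} = \sqrt{a_i \cdot t_i}$ (in the exponent, $\alpha_{i+1} = \tfrac12(\alpha_i + \tau_i)$), which with $\ell$ steps and $\tau_i$ close to $1$ drives $\alpha_\ell$ to $1 - 2^{-\ell} \cdot (1-\alpha_0)$. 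Second, I would invoke the composition \emph{lower} bound at each level; the cleanest route is to quote the robust / hardness-amplification form already in the literature (e.g.\ \cite{sherstovrobust, comm5, BKT18, BT17}) rather than reprove it — the excerpt explicitly says \thm{composition} is tight via $\adeg(\PARITY_t \circ \AND_{n/t}) = \Theta(\sqrt{\adeg(\PARITY_t)\cdot n})$, and the lower-bound direction of that tightness is precisely the tool needed, generalized from $\PARITY$ to the recursively built $g$. Third, I would verify the base case: pick $\delta$ a small constant so that depth-$O(\log(1/\delta))$ buys approximate degree $n^{1-\delta}$ from \cite{BT17}, reserving the remaining $d - O(\log(1/\delta))$ depth for $\Theta(\sqrt d)$ layers of $\AND$-amplification; balancing the two depth expenditures is where the $\sqrt d$ (rather than $d$) in the exponent comes from — each layer of amplification roughly halves the "gap to $1$", but you also pay depth for the base hard function, and optimizing $\log(1/\delta) + \ell$ against the gap $2^{-\ell}(1-\delta)$ subject to $\log(1/\delta)+\ell = O(d)$ yields the $2^{-\Omega(\sqrt d)}$ bound. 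Finally, I would assemble the circuit, confirm all gates are $\{\AND,\OR,\NOT\}$ of unbounded fan-in, total size $O(n)$, depth $\le d$, and conclude.

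The main obstacle I anticipate is establishing the \emph{lower-bound} composition theorem in the form needed — $\adeg(f \circ \AND_t) = \tilde\Omega(\sqrt{\adeg(f)\cdot t})$ — for the specific recursively-composed $f$ that arises, as opposed to for a single clean function like $\PARITY$. This requires a dual witness: one must take a dual polynomial $\psi_f$ certifying $\adeg(f) \ge D$, combine it with the known dual witness for $\AND_t$, and argue the combined object is a good dual polynomial for $f \circ \AND_t$ of high degree, with correlation and $\ell_1$-mass under control through all $\ell$ levels of recursion. Dual-block-composition (à la Sherstov, Lee, and Bun--Thaler) is exactly the right hammer, but propagating the $\log$-factor losses and the "pure high degree" conditions through $\Theta(\sqrt d)$ nested applications without the bound degrading by a super-constant factor per level is the delicate part; one wants each level to lose at most a $\polylog(n)$ factor so that $\ell = O(\sqrt d) = O(1)$ levels cost only $\polylog(n)$ overall, which is absorbed in the $\tilde\Omega$. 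A secondary, more bookkeeping-flavored obstacle is making the disjoint-block wiring honestly fit the linear size bound at every level simultaneously while keeping the arities consistent with the exponent recursion — but this is a matter of careful parameter choice rather than a conceptual gap.
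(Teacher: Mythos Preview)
Your proposal has a genuine gap at its central step. The lower-bound composition theorem you want to invoke, $\adeg(f \circ \AND_t) = \tilde\Omega\bigl(\sqrt{\adeg(f)\cdot t}\bigr)$ (or the multiplicative form $\adeg(f \circ \AND_t) = \Omega(\adeg(f)\sqrt{t})$), is \emph{not known} for general $f$. The tightness example $\PARITY_t \circ \AND_{n/t}$ uses specific structure of $\PARITY$; for arbitrary $f$ the question of whether approximate degree multiplies under block composition is a well-known open problem, and indeed the paper explicitly flags this as the reason the lower-bound argument cannot simply mirror the quantum case. Sherstov's lemma (\Cref{lemmab}) gives $\adeg(g \circ f) \geq \adeg(g)\cdot \deg_{1/2-1/m^2}(f)$, but with $f = \AND_t$ this is useless: the threshold degree of $\AND_t$ is $1$, so $\deg_{1/2-\eta}(\AND_t)$ is tiny, not $\Theta(\sqrt t)$.

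There is also a conceptual error in your amplification recursion. Composing an outer function $g$ (on $m$ inputs, $\adeg(g)=m^{\alpha}$) with inner $\AND_t$ yields a function on $mt$ inputs with approximate degree at most $O(m^{\alpha}\sqrt{t})$; the exponent in $mt$ is a weighted average of $\alpha$ and $1/2$, so iterating drives the exponent \emph{toward} $1/2$, not toward $1$. Your equation $\alpha_{i+1}=\tfrac12(\alpha_i+\tau_i)$ with $\tau_i\approx 1$ does not describe composition with $\AND$. Relatedly, your balancing calculation for $\sqrt d$ doesn't work: minimizing $2^{-\ell}\delta$ subject to $\log(1/\delta)+\ell=O(d)$ gives $2^{-\Omega(d)}$, not $2^{-\Omega(\sqrt d)}$.

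The paper's route is different in both ingredients. First, to get a usable composition lower bound it sandwiches a $\mathsf{GAPMAJ}_{O(\log n)}$ between the outer and inner functions: \Cref{lemmaa} boosts the inner function's $(1/2{-}1/m^2)$-approximate degree up to its $(1/3)$-approximate degree, after which \Cref{lemmab} yields the genuinely multiplicative bound $\adeg(g\circ\mathsf{GAPMAJ}\circ f)\ge \adeg(g)\cdot\adeg(f)$ (\Cref{nomorecors}). Second, the outer function at each level is not $\AND$ but the quadratic-size depth-$3\ell$ circuit $C^*$ from \cite{BKT18} on $\sqrt n$ inputs (so its size is linear in $n$), with exponent $1-2^{-\ell}$. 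Iterating $C^i := C^*_{\sqrt n}\circ \mathsf{GAPMAJ}\circ C^{i-1}_{\sqrt n/\log}$ then gives the exponent recursion $\beta_i = \tfrac12(\beta_{i-1} + (1-2^{-\ell}))$, which after $\ell$ iterations reaches $1-2^{-\Omega(\ell)}$. The $\sqrt d$ arises because each iteration adds depth $\Theta(\ell)$ (from $C^*$), so $\ell$ iterations cost depth $\Theta(\ell^2)=d$.
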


A lower bound of $\adeg(h) = n^{1-2^{-\Omega(d)}}$ was already known for general \AC functions $f$ \cite{BT17, BKT18}, but the \AC circuits constructed in these prior works are not of linear size.  Previously, for any $\ell \geq 1$, \cite{BKT18} exhibited a circuit $C \colon \B^n \to \B$ of depth at most $3\ell$, size at most $n^2$, and approximate degree $\adeg(C) \geq  \tOmega(n^{1-2^{-\ell}})$. We show how to transform this quadratic-size circuit $C$
into a linear-size circuit $C$ of depth roughly $\ell^2$, whose approximate degree is close to that of $C$.  Our transformation adapts that of~\cite{CKK12}, but requires a more intricate construction and analysis. This is because, unlike quantum query complexity, approximate degree is not known to increase multiplicatively under block composition.

\subsection{Application: Agnostically learning linear-size \AC circuits} \label{sec:intro-learning}

The challenging agnostic model~\cite{KSS:1994} of computational learning theory captures the task of binary classification in the presence of adversarial noise. In this model, a learning algorithm is given a sequence of labeled examples of the form $(x, b) \in \B^n \times \B$ drawn from an unknown distribution $\D$. The goal of the algorithm is to learn a hypothesis $h : \B^n \to \B$ which does ``almost as well'' at predicting the labels of new examples drawn from $\D$ as does the the best classifier from a known concept class $\C$. Specifically, let the Boolean loss of a hypothesis $h$ be $\mathsf{err}_{\D}(h) = \Pr_{(x, b) \sim \D}[h(x) \ne b]$. For a given accuracy parameter $\eps$, the goal of the learner is to produce a hypothesis $h$ such that $\mathsf{err}_{\D}(h) \le \min_{c \in \C} \mathsf{err}_{\D}(c) + \eps$.

Very few concept classes $\C$ are known to be agnostically learnable, even in subexponential time. For example, the best known algorithm for agnostically learning disjunctions runs in time $2^{\tilde{O}(\sqrt{n})}$~\cite{kkms}.\footnote{Throughout this manuscript, $\tilde{O}$ and $\tilde{\Omega}$ notation hides factors polylogarithmic in the input size $n$.} Moreover, several hardness results are known. Proper agnostic learning of disjunctions (where the output hypothesis itself must be a disjunction) is NP-hard~\cite{KSS:1994}. Even improper agnostic learning of disjunctions is at least as hard as PAC learning DNF~\cite{lbw},
which is a longstanding open question in learning theory. \label{sec:introst}

The best known general result for more expressive classes of circuits is that all de Morgan \emph{formulas} of size $s$ can be learned in time $2^{\tilde{O}(\sqrt{s})}$~\cite{kkms, Rei11} (\sec{detailed} contains a detailed overview of prior work on agnostic and PAC learning). Both of the aforementioned results make use of the well-known \emph{linear regression} framework of~\cite{kkms} for agnostic learning. This algorithm works whenever there is a ``small'' set of ``features'' $\F$ (where each feature is a function mapping $\B^n$ to $\R$) such that each concept in the concept class $\C$ can be approximated to error $\eps$ in the $\ell_\infty$ norm by a linear combination of features in $\F$. (See \sec{agnostic} for details.) 
If every function in a concept class $\C$ has approximate degree at most $d$, then  one obtains an agnostic learning algorithm for $\C$ with running time $2^{\tilde{O}(d)}$ by taking $\F$ to be the set of all monomials of degree at most $d$. Applying this algorithm using the approximate degree upper bound of \thm{approxdegLC} yields a subexponential time algorithm for agnostically learning $\LCd$.

\begin{restatable}{theorem}{learningresult}
    \label{thm:learning}
    The concept class of $n$-bit functions computed by \LC circuits of
    depth $d$ can be learned in the distribution-free agnostic PAC model in time
    $2^{\tO(n^{1-2^{-d}})}$.
More generally, size-$s$ \ACd circuits can be learned in time
    $2^{\tO(\sqrt{n} s^{1/2 - 2^{-d}})}$.
\end{restatable}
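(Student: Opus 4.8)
The plan is to feed our approximate-degree upper bounds into the $L_1$ polynomial regression framework of Kalai, Klivans, Mansour, and Servedio~\cite{kkms}. Recall what that framework provides: if $\C$ is a concept class such that every $c \in \C$ is approximated to error $\eps$ in the $\ell_\infty$ norm on $\B^n$ by a real polynomial of degree at most $D$, then $\C$ can be learned in the distribution-free agnostic PAC model to error $\opt + \eps$ in time and sample complexity $\poly(n^D, 1/\eps, \log(1/\delta))$. Concretely, one takes the feature set $\F$ to be all monomials of degree at most $D$, draws $\poly(n^D, 1/\eps)$ labeled examples, solves a linear program to find the degree-$\le D$ polynomial of least empirical $\ell_1$ error, and outputs a suitably thresholded version of it. Since an $\ell_\infty$ $\eps$-approximation is in particular an $\ell_1$ $\eps$-approximation under \emph{every} distribution over $\B^n$, it suffices to control $\deg_\eps$ of the concept class. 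As $|\F| = n^{O(D)} = 2^{O(D\log n)}$, the running time is $2^{\tO(D)}\cdot\poly(1/\eps)$.

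For the first part of the theorem I would instantiate this with $\C$ the class of depth-$d$ \LC circuits and $D = \deg_\eps(\C)$. By \thm{approxdegLC}, $D = \tO(n^{1-2^{-d}}\log(1/\eps)^{2^{-d}})$, which for any constant accuracy $\eps$ is $\tO(n^{1-2^{-d}})$; the claimed running time $2^{\tO(n^{1-2^{-d}})}$ then follows after absorbing the $\log n$ from $D\log n$ into the $\tO(\cdot)$.

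The general size-$s$ statement requires the analogue of \thm{approxdegLC} for size-$s$ rather than linear-size circuits, which I would obtain by recursing on \thm{composition} exactly as in the proof of \thm{approxdegLC}. Let $A_d(s)$ be an upper bound on $\adeg$ of any depth-$d$ \AC circuit of size at most $s$ whose input length is also at most $s$. Pushing negations to the leaves and peeling off the bottom layer of (without loss of generality) $\AND$ gates expresses such a circuit as a depth-$(d-1)$, size-$\le s$ function $f$ on $\le s$ inputs, applied to $\AND$'s of literals; \thm{composition} then gives $A_d(s) = O(\sqrt{A_{d-1}(s)\cdot s\cdot\log s})$, and with the base case $A_1(s) = O(\sqrt s)$ this solves (hiding the accumulated $\polylog$ factors) to $A_d(s) = \tO(s^{1-2^{-d}})$. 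Applying \thm{composition} once more to the actual circuit $h$ — which has $n$ inputs and a depth-$(d-1)$, size-$\le s$ top function — yields $\adeg(h) = \tO(\sqrt{A_{d-1}(s)\cdot n}) = \tO(\sqrt n\cdot s^{1/2-2^{-d}})$. Plugging $D = \tO(\sqrt n\cdot s^{1/2-2^{-d}})$ into the framework above gives running time $2^{\tO(\sqrt n\, s^{1/2-2^{-d}})}$, and specializing to $s = O(n)$ recovers the first part.

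Once the approximate-degree bounds are in hand the argument is essentially bookkeeping, so I do not anticipate a genuine obstacle. The two points needing care are: (i) confirming that an $\ell_\infty$-approximating polynomial — which is exactly what approximate degree supplies — is the correct object to hand to the agnostic learner, and that the dependence of the degree bound on the learning accuracy $\eps$ (a factor of $\log(1/\eps)^{2^{-d}}$ in the exponent, inherited from \thm{approxdegLC} and ultimately from the $\sqrt{n\log(1/\eps)}$ term in \thm{composition}) is mild enough to be suppressed in the stated running time; and (ii) carrying out the recursion on \thm{composition} for general size $s$ so that the exponents line up, in particular verifying the base case and checking that each application of the composition theorem moves the exponent of $s$ from $1$ toward $\tfrac12 - 2^{-d}$.
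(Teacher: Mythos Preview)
Your proposal is correct and follows the same approach as the paper: plug the approximate-degree upper bounds into the KKMS $\ell_1$-regression framework (the paper's \Cref{prop:agnosticfromadeg}). The only difference is that for the size-$s$ case you re-derive the bound $\adeg(h)=\tO(\sqrt{n}\,s^{1/2-2^{-d}})$ by recursing on \Cref{thm:composition}, whereas the paper simply cites the already-proved \Cref{thm:upper}, which is exactly that statement (and your recursion is essentially its proof).
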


Prior to our work, no subexponential time algorithm was known even for agnostically learning \LCthree. Moreover, since our upper bound on the approximate degree of \LC circuits is nearly tight, new techniques will be needed to significantly surpass our results, and in particular, learn \emph{all} of \LC in subexponential time. (Note that standard techniques~\cite{patmat} automatically generalize the lower bound of \thm{lower} from the feature set of low-degree monomials to \emph{arbitrary feature sets}. See \Cref{sec:agnostics} for details.)

\subsection{Application: New Circuit Lower Bounds}
An important frontier problem in circuit complexity is to show
that the well-known Inner Product function cannot be computed by \ACmodtwo circuits of polynomial size.
Here, \ACmodtwo\ refers to \AC circuits augmented with a layer of parity gates at the bottom (i.e., 
closest to the inputs). Servedio and Viola \cite{servedio2012special} identified this open problem as a first step
toward proving matrix rigidity lower bounds, itself a notorious open problem in complexity theory,
and Akavia et al. \cite{akavia} connected the problem to the goal of constructing highly efficient pseudorandom generators.\footnote{Superpolynomial
lower bounds are known for \ACmodtwo\ circuits computing the Majority function \cite{razborov} (in fact, even for $\mathsf{AC}^0[2]$ circuits, which are \AC
circuits augmented with parity gates at any layer). However, these 
techniques do not apply to the Inner Product function, which does have small $\mathsf{AC}^0[2]$ circuits.}
Average-case versions of this question have also been posed, even just for DNFs with a layer of parity gates at the bottom  \cite{cohen2016complexity, ezrarothblum}.
Unfortunately, the best known lower bounds against \ACmodtwo circuits computing Inner Product are quite weak.
The state of the art result \cite{CGJWX} for any constant depth $d>4$  is that Inner Product cannot be computed by
any depth-$(d+1)$ \ACmodtwo circuit of size $O(n^{1+4^{-(d+1)}})$. 
We show that \Cref{thm:approxdegLC} implies an improved (if still unsatisfying) lower bound of $\tOmega(n^{1/(1- 2^{-d})}) = n^{1+2^{-d} + \Omega(1)}$.
More significantly, unlike prior work our lower bound holds even against circuits that compute the Inner Product function on slightly more than half of all inputs.
Below, when we refer to the depth of an \ACmodtwo\ circuit, we count the layer of parity gates toward the depth. For example,
we consider a DNF of parities to have depth 3. 

\begin{restatable}{theorem}{apptheorem}
\label{apptheorem} For any constant integer $d \geq 4$, 
any depth-$(d+1)$ \ACmodtwo circuit computing the
 Inner Product function on $n$ bits on greater than a $1/2 + n^{-\log n}$ fraction of inputs has size 
$\tOmega\bigl(n^{1/(1- 2^{-d})}\bigr) 
 = n^{1+ 2^{-d} + \Omega(1)}$.
 \end{restatable}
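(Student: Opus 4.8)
The plan is to instantiate the standard template for proving \ACmodtwo size lower bounds from approximate-degree upper bounds, with \Cref{thm:approxdegLC} supplying the key new estimate, and to push it through in the average-case regime by exploiting that \Cref{thm:approxdegLC} depends only polylogarithmically on $1/\eps$. Suppose toward a contradiction that a depth-$(d+1)$ \ACmodtwo circuit of size $s$ agrees with $\IP_n$ on more than a $1/2+\delta$ fraction of inputs, where $\delta := n^{-\log n}$. Stripping off the bottom layer of parity gates, this circuit computes a function of the form $C(\ell_1(x),\dots,\ell_m(x))$ (abbreviated $C\circ\ell$), where each $\ell_i\colon\B^n\to\B$ is a parity — possibly negated — of a subset $T_i$ of the inputs, $C\colon\B^m\to\B$ is an \AC circuit of depth $d$ with at most $s$ gates, and $m\le s$ because every $\ell_i$ is itself a gate of the circuit. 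Padding $C$ with dummy inputs to view it as a linear-size depth-$d$ \AC circuit on $s$ bits, \Cref{thm:approxdegLC} gives, for every $\eps>0$,
\[
D := \deg_\eps(C) = \tO\!\left(s^{\,1-2^{-d}}\cdot\log(1/\eps)^{2^{-d}}\right).
\]
I would fix $\eps := \delta/2$, so that $\log(1/\eps)=O((\log n)^2)$ and the factor $\log(1/\eps)^{2^{-d}}$ is absorbed into the $\tO$, leaving $D=\tO(s^{1-2^{-d}})$.

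Next I would convert this low approximate degree into a Fourier-sparse approximation of $C\circ\ell$ on $\B^n$. Let $p\colon\B^m\to\R$ be an $\eps$-approximating polynomial for $C$ of degree at most $D$. Expanding $p$ in the basis $\{\prod_{i\in S}(1-2u_i):|S|\le D\}$ uses at most $\binom{m}{\le D}$ monomials, and substituting $u_i=\ell_i(x)$ — so that $1-2\ell_i(x)=\pm\chi_{T_i}(x)$ — turns each such monomial into a single Fourier character $\pm\chi_{\triangle_{i\in S}T_i}$. Hence $P(x):=p(\ell_1(x),\dots,\ell_m(x))$ is a real polynomial on $\B^n$ whose Fourier support has size at most $\binom{m}{\le D}$ and which satisfies $\|P-C\circ\ell\|_\infty\le\eps$. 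Passing to the $\pm1$ encoding, $\tilde P:=1-2P$ has the same Fourier sparsity up to the constant term, $\|\tilde P\|_\infty\le 1+2\eps$, and $\|\tilde P-(-1)^{C\circ\ell}\|_\infty\le 2\eps$.

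Now I would use the fact that every one of the $2^n$ Fourier coefficients of $(-1)^{\IP_n}$ has magnitude exactly $2^{-n/2}$. On the one hand, Cauchy--Schwarz over the at most $\binom{m}{\le D}+1$ nonzero coefficients of $\tilde P$ gives $\mathbb{E}_x\big[\tilde P(x)(-1)^{\IP_n(x)}\big]\le 2^{-n/2}\,\|\tilde P\|_2\,\sqrt{\binom{m}{\le D}+1}\le 3\cdot 2^{-n/2}\sqrt{\binom{m}{\le D}+1}$. On the other hand, since $(-1)^{C\circ\ell}$ equals $(-1)^{\IP_n}$ on a set of density more than $1/2+\delta$ — where $\tilde P\cdot(-1)^{\IP_n}\ge 1-2\eps$ — and is at least $-(1+2\eps)$ everywhere, a short calculation gives $\mathbb{E}_x\big[\tilde P(x)(-1)^{\IP_n(x)}\big]\ge 2(\delta-\eps)=\delta>0$. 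Comparing these two estimates forces $\binom{m}{\le D}\ge 2^{\Omega(n)}$ (the term $\delta^2=2^{-O((\log n)^2)}$ is lower order), hence $D=\Omega(n/\log m)=\Omega(n/\log s)$. Combined with $D=\tO(s^{1-2^{-d}})$ this gives $s^{1-2^{-d}}=\tOmega(n)$, i.e.\ $s=\tOmega\big(n^{1/(1-2^{-d})}\big)$; and since $\tfrac{1}{1-2^{-d}}=1+2^{-d}+2^{-2d}+\cdots$ exceeds $1+2^{-d}$ by a positive constant depending on $d$, this is the claimed $n^{1+2^{-d}+\Omega(1)}$.

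The step I expect to be most delicate is the average-case strengthening: converting a mere $\delta=n^{-\log n}$ advantage over random guessing into a contradiction requires driving $\eps$ below $\delta$, which is affordable only because \Cref{thm:approxdegLC} pays just $\log(1/\eps)^{2^{-d}}$ for it. One has to verify that this $\polylog(n)$ overhead is genuinely swallowed by the $\tO$, get the constants and signs right in the two-sided correlation comparison of the previous paragraph, and observe that $s$ may be assumed polynomially bounded in $n$ when clearing the nested polylogarithmic factors at the end; everything else is the established route from \Cref{thm:composition} and \Cref{thm:approxdegLC} to circuit lower bounds.
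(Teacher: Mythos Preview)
Your proposal is correct and follows essentially the same route as the paper: strip the parity layer, apply \Cref{thm:approxdegLC} to the depth-$d$ \AC part with error $\eps$ comparable to the advantage $\delta=n^{-\log n}$, observe that substituting parities for the inputs of a degree-$D$ approximator yields a $\binom{s}{\le D}$-sparse Fourier polynomial, and contrast the resulting correlation with $\IP$ against the bentness of $\IP$. The only cosmetic difference is that the paper uses an averaging argument to extract a single parity with correlation at least $\eps\cdot s^{-D}$, whereas you bound the correlation globally via Cauchy--Schwarz and Parseval; these are equivalent and yield the same $s^{1-2^{-d}}=\tOmega(n)$ conclusion.
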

 
This application is new and does not appear in the conference version of this paper~\cite{BKT19}.
The idea of our proof is to use the approximate degree upper bound for $\LCd$ circuits of \Cref{thm:approxdegLC} 
to show that any small \ACmodtwo circuit has non-trivial (i.e., $\gg 2^{-n}$) correlation under the uniform distribution with some parity function. Yet 
it is well-known that the Inner Product function has correlation at most $2^{-n}$ with any parity function. As we show, this rules
out the possibility that a small \ACmodtwo circuit computes the Inner Product function, even on slightly more than half
of all inputs.

\subsection{Discussion and future directions}
\label{s:discussion}
Summarizing our results, we established shared-input composition theorems for quantum
query complexity (\thm{quantum}) and approximate degree (\thm{composition}),
roughly showing that for compositions between an arbitrary function $f$ and the function $g = \AND$,
it is always possible to leverage sharing of inputs 
to obtain algorithmic speedups.
We applied these results to obtain the first sublinear upper bounds on the quantum query complexity and approximate degree of \LCd.

\para{Generalizing our composition theorems.} 
Although considering the inner function $g=\AND$ is sufficient for our applications to \LC,
an important open question is to generalize our results to larger classes of inner functions.
The proof of our composition theorem for approximate degree
actually applies
to any inner function $g$ that can be exactly represented as a low-weight sum of $\AND$s
(for example, it applies to any strongly unbalanced function $g$, meaning that
$|g^{-1}(1)|=\poly(n)$).  
Extending this further would be a major step forward in our understanding
of how quantum query complexity and approximate degree behave under composition with shared inputs.

While our paper considers the composition scenario where the top function is arbitrary and the bottom function is $\AND$, the opposite scenario is also interesting. 
Here the top function is $\AND_m$ and the bottom functions are $f_1,\ldots,f_m$, each acting on the same set of $n$ input variables. Now the question is whether we can do better than the  upper bound obtained using results on block composition that treat all the input variables as being independent. 
More concretely, for such a function $F$, the upper bound that follows from block composition is $Q(F)=O(\sqrt{m}\max_i Q(f_i))$. 
However, this upper bound cannot be improved in general, because the Surjectivity function is an example of such a function. 
Here the bottom functions $f_i$ check if the input contains a particular range element $i$, and the upper bound obtained from this argument is $O(n)$, which matches the lower bound~\cite{beame,sherstov15}. 
Surprisingly, this lower bound only holds for quantum query complexity, as we know that the approximate degree of Surjectivity is $\tilde{\Theta}(n^{3/4})$. We do not know if the upper bound obtained from block composition can be improved for approximate degree.

\para{Quantum query complexity of \LC and DNFs.} 
For quantum query complexity, we obtain the upper bound $Q(\LCd)=\tilde{O}(n^{1-2^{-d}})$,
nearly matching the lower bound $Q(\LCd)=n^{1-2^{-\Omega(d)}}$ from \cite{CKK12}. However, the bounds do not match for any fixed value of $d$. The lack of matching lower bounds can be attributed to the fact that the Surjectivity function,
which is known to have linear quantum query complexity, is computed by a quadratic-size depth-3 circuit,
rather than a quadratic-size depth-2 circuit (i.e., a DNF). If one could prove a linear lower bound on
the quantum query complexity of some quadratic-size DNF, the argument of \cite{CKK12}
would translate this into a $\tilde{\Omega}(n^{1-2^{-d}})$ lower bound for \LCd, matching our upper bound. 
Unfortunately, no linear lower bound on the quantum query complexity of \emph{any} polynomial size DNFs is known;
we highlight this as an important open problem (the same problem was previously been posed by Troy Lee 
with different motivations \cite{troylee}).

\begin{openproblem}
Is there a polynomial-size DNF with $\tOmega(n)$ quantum query complexity?
\end{openproblem}

The quantum query complexity of depth-2 \LC, or linear-size DNFs also remains open. The best upper bound is $O(n^{3/4})$, but the best lower bound is $\Omega(n^{0.555})$~\cite{CKK12}. Any improvement in the lower bound would also imply, in a black-box way, an improved lower bound for the Boolean matrix product verification problem.
Improving the lower bound all the way to $\Omega(n^{3/4})$ would imply optimal lower bounds for all of \LC using the argument in \cite{CKK12}.
We conjecture that there is a linear-size DNF with quantum query complexity $\Omega(n^{3/4})$, matching the known upper bound.

\para{Approximate degree of \LC and DNFs.}
For approximate degree, we obtain the upper bound $\adeg(\LCd)=\tilde{O}(n^{1-2^{-d}})$,
and prove a new lower bound of $\adeg(\LCd)=n^{1-2^{-\Omega(\sqrt{d})}}$.
The reason our approximate degree lower bound approaches $n$ more slowly
than the quantum query lower bound from \cite{CKK12} is that, while the quantum query complexity of \AC
is known to be $\Omega(n)$, such a result is not known for approximate degree. 
This remains an important open problem.

\begin{openproblem}
	Is there a problem in \AC with approximate degree $\tOmega(n)$?
\end{openproblem}

Our lower bound argument would translate, in a black-box manner, any linear lower bound on the approximate degree of a general \AC
circuit into a nearly tight lower bound for \LCd. 

Alternatively, it would be very interesting if one could improve our approximate degree upper bound for \LCd.
Even seemingly small improvements to our upper bound would have significant implications.
Specifically, standard techniques (see, e.g., \cite{chaudhuri1996deterministic}) imply that for any constant $\delta>0$, there are approximate majority functions\footnote{Here, by an approximate
majority function, we mean any total function $f$ on $n$ bits for which there exist constants $0 < p < 1/2 < q$ such that $|x|\leq pn \Longrightarrow f(x)=0$ and $|x|\geq qn \Longrightarrow f(x)=1$.} 
computable by depth-$(2d+3)$ circuits of size $O(n^{1+2^{-d}+\delta})$.\footnote{This
precise result has not appeared in the literature; we prove it in \Cref{app:amaj} for completeness.}
This means that, for sufficiently large constant $d$, if one could improve our upper bound on the approximate degree of \LCd
from $\tO(n^{1-2^{-d}})$  to $\tO(n^{1-2^{-d/2.001}})$, one would obtain a sublinear upper bound on the approximate degree
of some total function computing an approximate majority. This would answer a question of Srinivasan \cite{filmus2014real}, and may be considered a
surprising result, as approximate majorities are currently
the primary natural candidate \AC functions that may exhibit linear approximate degree \cite{BKT18}.

\subsection{Paper organization and notation}

This paper is organized so as to be accessible to readers without familiarity with quantum algorithms. \sec{quantum} assumes the reader is somewhat familiar with quantum query complexity and Grover's algorithm~\cite{Gro96}, but only uses Grover's algorithm as a black box. In \sec{quantum} we show our main result on the quantum query complexity of shared-input compositions (\thm{quantum}). \sec{approx} proves our result about the approximate degree of shared-input compositions (\thm{composition}).  \sec{LC} uses the results of these sections (in a black-box manner) to upper bound the quantum query complexity and approximate degree of \LC circuits, and proves related lower bounds. \sec{agnostic} uses the results of \sec{LC} to obtain algorithms to agnostically PAC learn \LC circuits.
\Cref{s:app} derives our average-case lower bounds on the size of \ACmodtwo\ circuits computing the Inner Product function. This section is new and does not appear in the conference version of this paper~\cite{BKT19}.

In this paper we use the $\tO(\cdot)$ and $\tOmega(\cdot)$ notation to suppress logarithmic factors. More formally, $f(n)=\tO(g(n))$ means there exists a constant $k$ such that $f(n)=O(g(n)\log^k g(n))$, and similarly $f(n)=\tOmega(g(n))$ means there exists a constant $k$ such that $f(n)=\Omega(g(n)/\log^k g(n))$. For a string $x\in\B^n$, we use $|x|=\sum_i x_i$ to denote the Hamming weight of $x$, i.e., the number of entries in $x$ equal to $1$. For any positive integer $n$, we use $[n]$ to denote the set $\{1,2,\ldots,n\}$.
Given two functions $f_m, g_k$,  let $f_m \circ g_k \colon \B^{m \cdot k} \to \B$ denote their \emph{block composition}, i.e.,
$(f_m \circ g_k)(x) = f_m(g_k(x_1), \dots, g_k(x_m))$, where for every $i\in[m]$, $x_i$ is a $k$-bit string. For non-negative integers $n$ and $k$,
we use $\binom{n}{\leq k}$ to denote $\sum_{i=0}^k \binom{n}{i}$. A basic fact is that $\binom{n}{\leq k} \leq n^k$. 

\section{Quantum algorithm for composed functions}
\label{sec:quantum}

\subsection{Preliminaries}

As described in the introduction, our quantum algorithm only uses variants of Grover's algorithm~\cite{Gro96} and is otherwise classical. 
To make this section accessible to those without familiarity with quantum query complexity, we only state the minimum required preliminaries to understand the algorithm. 
Furthermore, we do not optimize the logarithmic factors in our upper bound to simplify the presentation.
For a more comprehensive introduction to quantum query complexity, we refer the reader to the survey by Buhrman and de Wolf~\cite{dtsurvey}.

In quantum or classical query complexity, the goal is to compute some known function $f:\B^n \to \B$ on some unknown input $x\in\B^n$ while reading as few bits of $x$ as possible. 
Reading a bit of $x$ is also referred to as ``querying'' a bit of $x$, and hence the goal is to minimize the number of queries made to the input.

For example, the deterministic query complexity of a function $f$ is the minimum number of queries needed by a deterministic algorithm in the worst case. A deterministic algorithm must be correct on all inputs, and can decide which bit to query next based on the input bits it has seen so far. Another example of a query model is the bounded-error randomized query model. The bounded-error randomized query complexity of a function $f$, denoted $R(f)$, is the minimum number of queries made by a randomized algorithm that computes the function correctly with probability greater than or equal to $2/3$ on each input. In contrast to a deterministic algorithm, such an algorithm has access to a source of randomness, which it may use in deciding which bits to query.

The bounded-error quantum query complexity of $f$, denoted $Q(f)$, is similar to bounded-error randomized query complexity, except that the algorithm is now quantum. In particular, this means the algorithm may query the inputs in superposition. Since quantum algorithms can also generate randomness, for all functions we have $Q(f) \leq R(f)$.

An important example of the difference between the two models is provided by the $\OR_n$ function, which asks if any of the input bits is equal to 1. We have $R(\OR_n)=\Theta(n)$, because intuitively if the algorithm only sees a small fraction of the input bits and they are all $0$, we do not know whether or not the rest of the input contains a $1$. However, Grover's algorithm is a quantum algorithm that solves this problem with only $O(\sqrt{n})$ queries~\cite{Gro96}. The algorithm is also known to be tight, and we have $Q(\OR_n)=\Theta(\sqrt{n})$~\cite{BBBV97}. 

There are several variants of Grover's algorithm that solve related problems and are sometimes more useful than the basic version of the algorithm. Most of these can be derived from the basic version of Grover's algorithm (and this sometimes adds logarithmic overhead).  

In this work we need a variant of Grover's algorithm that finds a $1$ in the input faster when there are many $1$s. Let the Hamming weight of the input $x$ be $t = |x|$. If we know $t$, then we can use Grover's algorithm on a randomly selected subset of the input of size $O(n/t)$, and one of the $1$s will be in this set with high probability. Hence the algorithm will have query complexity $O(\sqrt{n/t})$. With some careful bookkeeping, this can be done even when $t$ is unknown, and the algorithm will have expected query complexity $O(\sqrt{n/t})$. More formally, we have the following result of Boyer, Brassard, H{\o}yer, and Tapp~\cite{BBHT98}.

\begin{lemma}\label{lem:BBHT}
Given query access to a string $x\in\B^n$, there is a quantum algorithm that when $t = |x| > 0$, always outputs an index $i$ such that $x_i=1$ and makes $O(\sqrt{{n}/{t}})$ queries in expectation. When $t = 0$, the algorithm does not terminate.
\end{lemma}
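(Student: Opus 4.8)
The plan is to run ordinary Grover search with a carefully managed number of iterations, following the exponential-searching strategy of Boyer, Brassard, H\o yer, and Tapp; no quantum ingredient beyond a black-box call to Grover's algorithm is needed. Recall the single-run fact we will invoke: if we prepare the uniform superposition over $[n]$, apply $j$ Grover iterations (which costs $j$ queries), and measure, then the outcome is a marked index (one with $x_i=1$) with probability exactly $\sin^2((2j+1)\theta)$, where $\theta\in(0,\pi/2]$ is defined by $\sin^2\theta=t/n$. Since $t$, and hence the optimal choice $j\approx \pi/(4\theta)$, is unknown, we instead pick $j$ at random from a window whose width we grow geometrically until it is guaranteed to be wide enough.

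Concretely, the algorithm maintains a real parameter $m$, initialized to $m=1$, together with a fixed growth rate $\lambda$, any constant in $(1,4/3)$ (say $\lambda=6/5$). It repeats: choose $j$ uniformly at random in $\{0,1,\dots,\lceil m\rceil-1\}$; run Grover for $j$ iterations and measure to get an index $i$; query $x_i$ once classically; if $x_i=1$ output $i$ and halt, otherwise replace $m$ by $\min\{\lambda m,\sqrt n\}$ and loop. Two of the claimed properties are immediate: an index is output only after it has been classically verified to satisfy $x_i=1$, so the algorithm is correct with certainty; and if $t=0$ every classical check fails, $m$ saturates at $\sqrt n$, and the loop never terminates.

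It remains to bound the expected number of queries when $t>0$. The key elementary fact (a short trigonometric identity plus $|\sin|\le 1$) is that when $j$ is uniform in $\{0,\dots,\lceil m\rceil-1\}$, the probability of measuring a marked index equals $\tfrac12-\tfrac{\sin(4\lceil m\rceil\theta)}{4\lceil m\rceil\sin(2\theta)}$, which is at least $1/4$ as soon as $\lceil m\rceil\ge 1/\sin(2\theta)$. Set $m_0:=\min\{1/\sin(2\theta),\sqrt n\}$; since $\sin(2\theta)=\Theta(\sqrt{t/n})$ for $t\le n/2$ one has $m_0=\Theta(\sqrt{n/t})$, while the case $t>n/2$ is already dispatched by the very first iteration, which takes $j=0$ and hence samples a uniformly random index. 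I would split a run into the phase before $m$ first exceeds $m_0$ and the phase after. In the first phase $m$ runs through a geometric progression of ratio $\lambda$ capped at $m_0$, so the queries spent total at most $\sum_{k:\lambda^k\le\lambda m_0}(\lceil\lambda^k\rceil+1)=O(m_0)=O(\sqrt{n/t})$, whatever the random outcomes. In the second phase each iteration independently succeeds with probability at least $1/4$, while the per-iteration cost is at most $\lambda$ times that of the previous one, so the expected remaining cost is at most $m_0\sum_{i\ge 0}(3/4)^i\lambda^{i+1}$, a geometric series that converges precisely because $\lambda<4/3$, again $O(m_0)=O(\sqrt{n/t})$. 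Summing the two phases gives the claimed expected query bound.

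The only subtle point — and the step I expect to need the most care — is the second phase: because the per-iteration cost keeps growing geometrically while the success probability is merely a constant, a naive bound on the expected cost would diverge, and it is exactly the quantitative restriction $\lambda<4/3$ that makes the relevant series $\sum_i(3/4)^i\lambda^{i+1}$ summable. Everything else is routine bookkeeping together with the standard analysis of a single run of Grover search, invoked as a black box.
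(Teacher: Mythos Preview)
The paper does not prove this lemma at all; it simply cites the result of Boyer, Brassard, H{\o}yer, and Tapp~\cite{BBHT98} and uses it as a black box. Your proposal is precisely the standard BBHT argument (exponential-searching with a random number of Grover iterations and growth rate $\lambda<4/3$), so it is the ``right'' proof---just one the paper chose to outsource to the original reference. One small imprecision: you say the case $t>n/2$ is ``dispatched by the very first iteration,'' but that iteration only succeeds with probability $t/n$, not with certainty. The clean fix is to observe that for $t\ge n/4$ the success probability is already $\ge 1/4$ at $m=1$ (since $\sin^2\theta\ge 1/4$), so your phase-two geometric-series bound applies from the start with effective $m_0=O(1)$, giving $O(1)=O(\sqrt{n/t})$ expected queries; this also covers the regime $t$ close to $n$, where your stated $m_0=1/\sin(2\theta)$ is not $\Theta(\sqrt{n/t})$.
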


Note that because we do not know $t=|x|$, we only have a guarantee on the expected query complexity of the algorithm, not the worst-case query complexity. Note also that this variant of Grover's algorithm is a zero-error algorithm in the sense that it always outputs a correct index $i$ with $x_i = 1$ when such an index exists.

In our algorithm we use an amplified version of the algorithm of \lem{BBHT}, which adds a log factor to the query complexity and always terminates after $O(\sqrt{n}\log n)$ queries.

\begin{lemma}\label{lem:ampBBHT}
	Given query access to a string $x\in\B^n$, there is a quantum algorithm that 
	\begin{enumerate}
		\item when $|x| = 0$, the algorithm always outputs ``$|x|=0$'',
		\item when $|x|>0$, it outputs an index $i$ with $x_i=1$ with probability $1-\frac{1}{\poly(n)}$, and
		\item terminates after $O\(\sqrt{\frac{n}{|x|+1}} \log n\)$ queries with probability $1-\frac{1}{\poly(n)}$.
	\end{enumerate}
\end{lemma}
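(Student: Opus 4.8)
The plan is to amplify the zero\nobreakdash-error but unbounded\nobreakdash-time subroutine of \lem{BBHT} by wrapping it in a geometric search over its (unknown) expected running time, with several independent copies run at each scale. Write $\mathcal{A}$ for the algorithm of \lem{BBHT}: on any $x$ with $|x| = t > 0$, it always outputs some index $i$ with $x_i = 1$ and uses at most $c\sqrt{n/t}$ queries \emph{in expectation} for an absolute constant $c$; on $x$ with $t=0$ it never halts; and it never outputs an index $i$ with $x_i = 0$. For an integer budget $B$, let $\mathcal{A}_B$ denote $\mathcal{A}$ run with its query count externally capped at $B$: if $\mathcal{A}$ has not halted after $B$ queries we stop it and declare ``fail''. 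This is well defined even when $t=0$ (then $\mathcal{A}_B$ always fails), and whenever $\mathcal{A}_B$ does not fail it returns a genuine $1$\nobreakdash-index. By Markov's inequality, if $t>0$ and $B \ge 2c\sqrt{n/t}$ then $\Pr[\mathcal{A}_B \text{ fails}] \le c\sqrt{n/t}/B \le 1/2$.

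The amplified algorithm is then: fix $R = \Theta(\log n)$ (hidden constant chosen large enough to beat the target polynomial) and $I := \lceil \log_2(2c\sqrt n)\rceil$. For $i = 0, 1, \dots, I$ in turn, set $B_i = 2^i$ and run $R$ independent copies of $\mathcal{A}_{B_i}$; if some copy returns an index, output it and halt. If the loop finishes with no copy ever returning an index, output ``$|x| = 0$''.

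The analysis splits on $t = |x|$. If $t = 0$, no copy of any $\mathcal{A}_{B_i}$ can ever return an index (there are no $1$'s and $\mathcal{A}$ only outputs $1$\nobreakdash-indices), so the algorithm deterministically runs the whole loop and outputs ``$|x|=0$'', making $\sum_{i=0}^{I} R\cdot 2^i = O(\sqrt n\,\log n) = O(\sqrt{n/(|x|+1)}\,\log n)$ queries; this gives item~1 and items~2--3 in the case $t=0$. If $t > 0$, let $i^\star := \lceil \log_2(2c\sqrt{n/t})\rceil \le I$, so $2c\sqrt{n/t} \le B_{i^\star} < 4c\sqrt{n/t}$. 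At level $i^\star$, the $R$ independent copies of $\mathcal{A}_{B_{i^\star}}$ each fail independently with probability at most $1/2$, so all $R$ fail with probability at most $2^{-R} = 1/\poly(n)$. Hence with probability $1 - 1/\poly(n)$ the algorithm halts at or before level $i^\star$, and in that event it outputs an index $i$ with $x_i = 1$ (it can never output a wrong index, nor ``$|x|=0$'', once some copy succeeds); this is item~2. On the same event the total query count is $\sum_{i=0}^{i^\star} R\cdot 2^i < R\cdot 2^{\,i^\star+1} = O(\sqrt{n/t}\,\log n) = O(\sqrt{n/(t+1)}\,\log n)$, which is item~3.

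I do not expect a real obstacle here; the only point that needs care is the choice of schedule. Running $\Theta(\log n)$ truncated copies at the single worst\nobreakdash-case scale $\Theta(\sqrt n)$ would give the right success probability but would violate item~3 for large $t$, while running too few copies at the ``correct'' scale would fail the $1/\poly(n)$ guarantees of items~2--3; the two\nobreakdash-parameter schedule (geometrically increasing budget $2^i$, multiplicity $\Theta(\log n)$ at each level) is exactly what reconciles the $t$\nobreakdash-dependent running time with the polynomially small error. A minor bookkeeping remark is that ``truncating after $B$ queries'' means truncating the subroutine's oracle\nobreakdash-call count, which is legitimate because the outer algorithm controls when it stops querying.
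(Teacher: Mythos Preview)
Your proof is correct, but it takes a more elaborate route than the paper's. The paper simply runs $O(\log n)$ copies of the BBHT subroutine \emph{in parallel} (interleaving their queries one round at a time), halting as soon as any copy halts, with a global cap of $O(\sqrt{n}\log n)$ queries after which it outputs ``$|x|=0$''. The key point is that parallel execution with early termination already adapts automatically to the unknown $t$: by Markov, each copy halts within $O(\sqrt{n/t})$ queries with probability at least $1/2$, so with probability $1-2^{-O(\log n)}$ at least one of the $O(\log n)$ copies has halted by that time, giving total cost $O(\sqrt{n/t}\log n)$.

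Your geometric doubling schedule also works and gives the same bound, but it is not needed here. Your remark that ``running $\Theta(\log n)$ truncated copies at the single worst-case scale $\Theta(\sqrt n)$ would violate item~3'' is true if those copies are run \emph{sequentially} with a fixed cap, but the paper sidesteps this by running them in parallel and stopping early. What your approach buys is that it never needs the notion of interleaved parallel execution (everything is a sequence of fully completed truncated runs), which can be cleaner to formalize; what the paper's approach buys is brevity, since it avoids the two-parameter schedule entirely.
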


\begin{proof}
This algorithm is quite straightforward. We simply run $O(\log n)$ instances of the algorithm of \lem{BBHT} in parallel and halt if any one of them halts. If we reach our budget of $O(\sqrt{n}\log n)$ queries, then we halt and output ``$|x|=0$''.

Let us argue that the algorithm has the claimed properties. First, since the algorithm of \lem{BBHT} does not terminate when $|x|=0$, our algorithm will correctly output ``$|x|=0$'' at the end for such inputs. When $|x|>0$, we know that the algorithm of \lem{BBHT} will find an index $i$ with $x_i=1$ with high probability after $O(\sqrt{n})$ queries. The probability that $O(\log n)$ copies of this algorithm do not find such an $i$ is exponentially small in $O(\log n)$, or polynomially small in $n$. Finally, our algorithm makes only $O(\sqrt{n}\log n)$ queries when $|x|=0$ by construction. When $|x|>0$, we know that the algorithm of \lem{BBHT} terminates after an expected $O(\sqrt{{n}/{|x|}})$ queries, and hence halts with high probability after $O(\sqrt{{n}/{|x|}})$ queries by Markov's inequality. The probability that none of $O(\log n)$ copies of the algorithm halt after making $O(\sqrt{{n}/{|x|}})$ queries each is inverse polynomially small in $n$ again.
\end{proof}

\subsection{Quantum algorithm}

We are now ready to present our main result for quantum query complexity, which we restate below.

\quantum*

While \thm{quantum} allows the bottom $\AND$ gates to depend on negated variables, it will be without loss of generality in the proof to assume that all input variables are unnegated. This is because we can instead work with the function $h' : \B^{2n} \to \B$ obtained by treating the positive and negative versions of a variable separately, increasing our final quantum query upper bound by a constant factor.

We now define some notation that will aid with the description and analysis of the algorithm. 
We know that our circuit $h$ has $m$ $\AND$ gates and $n$ input bits $x_i$. We say an $\AND$ gate has \emph{high fan-in} if the number of inputs to that $\AND$ gate is greater than or equal to $n/Q(f)$. 
Note that if our circuit $h$ has no high fan-in gates, then we are done, because we can simply use the upper bound for block composition, i.e., $Q(f\circ g) = O(Q(f)Q(g))$, to compute $h$, since we will have $Q(h) = O(Q(f)\times \sqrt{n/Q(f)}) = O(\sqrt{Q(f)\cdot n})$.

Our goal is to reduce to this simple case. More precisely, we will start with the given circuit $h$, make some queries to the input, and then simplify the given circuit to obtain a new circuit $h'$. The new circuit will have no high fan-in gates, but will still have $h'(x)=h(x)$ on the given input $x$. Note that $h'$ and $h$ have the same output only for the given input $x$, and not necessarily for all inputs.

For any such circuit $h$, let $S \subseteq [m]$ be the set of all high fan-in $\AND$ gates, and let $w(S)$ be the total fan-in of $S$, which is the sum of fan-ins of all gates in $S$. In other words, it is the total number of wires incident to the set $S$. Since the set $S$ only has gates with fan-in at least $n/Q(f)$, we have 
\begin{equation}
	w(S) \geq n|S|/Q(f).
\end{equation}

We now present our first algorithm, which is a subroutine in our final algorithm. This algorithm's goal is to take a circuit $h$, with $|S|$ high fan-in gates and $w(S)$ wires incident on $S$, and reduce the size of $w(S)$ by a factor of $2$. Ultimately we want to have $|S|=w(S)=0$, and hence if we can decrease the size of $w(S)$ by $2$, we can repeat this procedure logarithmically many times to get $|S|=w(S)=0$.

\begin{lemma}
	\label{lem:halving}
	Let $h:\B^n \to \B$ be a depth-2 circuit where the top gate is a function $f:\B^m \to \B$ and the bottom level gates are  $\AND$ gates on a subset of the input bits and their negations (as depicted in \fig{composition}). Let $w(S)$ be the total fan-in of all  high fan-in gates in $h$ (i.e., gates with fan-in $\geq n/Q(f)$).
	
	Then there is a quantum query algorithm that makes $O(\sqrt{Q(f)\cdot n} \log n)$ queries to $x\in\B^n$ and outputs a new circuit $h'$ of the same form such that $w(S')\leq w(S)/2$, where $w(S')$ is the total fan-in of all high fan-in gates in $h'$, and such that with probability $1-\frac{1}{\poly(n)}$ (over the internal randomness of the algorithm) we have $h(x)=h'(x)$ for the query input $x$.\footnote{The new circuit $h'$ is only promised to satisfy $h(x)=h'(x)$ on the specific query input $x$ on which this algorithm is run.} 
\end{lemma}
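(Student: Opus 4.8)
Since the enclosing argument already reduces to the case of positive literals (replacing $n$ by $2n$), picture the high fan-in gates $S$ as one side of a bipartite ``incidence graph'': join variable $x_i$ to gate $j\in S$ exactly when $x_i$ feeds $j$, so $w(S)$ is the number of edges and $\deg(x_i)$ is the number of high fan-in gates containing $x_i$. The plan is to run a sequence of input-driven circuit simplifications, each preserving $h(x)$ on the given input $x$, until the edge count $w(S)$ has dropped by a factor of two; the circuit we output is $h'$, whose top gate $f'$ is $f$ with certain coordinates hard-wired to constants (so $Q(f')\le Q(f)$) and whose bottom gates are still $\AND$s of subsets of the inputs.

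There are two basic simplification moves. \emph{Move (A), peel a high-degree variable:} query a variable $x_i$ of currently maximum degree. If $x_i=0$, every gate containing $x_i$ evaluates to $0$, so we replace all of them by the constant $0$; this deletes at least $\deg(x_i)\cdot(n/Q(f))$ edges. If $x_i=1$, we delete $x_i$ from every gate, which shrinks each containing gate's fan-in by one (deleting $\deg(x_i)$ edges) and may drop further gates below the threshold $n/Q(f)$, deleting still more. \emph{Move (B), evaluate a high fan-in gate:} pick a gate $j\in S$ and run the amplified Grover search of \lem{ampBBHT} on its $F_j\ge n/Q(f)$ inputs. In $O(\sqrt{F_j}\log n)$ queries this either produces an input equal to $0$ (handled as in the $x_i=0$ case of Move (A)) or certifies that all $F_j$ inputs equal $1$, in which case gate $j$ equals $1$: we hard-wire it to the constant $1$ and, since those $F_j$ variables are now known to equal $1$, delete them from every other gate. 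Either way at least $F_j\ge n/Q(f)$ edges vanish, i.e. Move (B) costs $O(\log n)$ queries per $\Theta(\sqrt{n/Q(f)})$ edges removed. (A single run of \lem{ampBBHT} over all variables still feeding $S$ is the degenerate case: it either exposes a $0$, handled by Move (A), or certifies that all such variables are $1$, whence every high fan-in gate equals $1$ and we finish with $w(S')=0$, at cost $O(\sqrt n\log n)$.)

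Correctness is immediate: every substitution installs the true value (on the given $x$) of a variable or a gate, so $h'(x)=h(x)$; and since \lem{ampBBHT} is invoked $O(\sqrt{Q(f)\,n})$ times, boosting each call to success probability $1-1/\poly(n)$ and union bounding keeps the total failure probability $1/\poly(n)$ while costing only logarithmic overhead. The subtle part is the query count. The key structural fact is a pigeonhole bound: with only $n$ variables present, the maximum degree is always at least $w(S)/n$, so whenever $w(S)$ is large there is a high-degree variable and Move (A) strips off many edges per query; conversely, Move (B) removes edges at rate $\Omega(\sqrt{n/Q(f)}/\log n)$ per query, which is within a logarithmic factor of optimal exactly when the high fan-in gates overlap too little for sharing to help. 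The algorithm is scheduled to keep using whichever of (A), (B) is currently the more edge-efficient (roughly, Move (A) while some degree exceeds a threshold calibrated against $\sqrt{n/Q(f)}$, and Move (B) otherwise), and one charges the removed edges against the budget by summing a geometric-type series for the high-degree phase plus $O(\log n)$ per $\sqrt{n/Q(f)}$ edges for the low-degree phase, while also crediting the ``free'' removal of gates that slip below the high fan-in threshold during Move (A).

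The main obstacle is precisely this charging argument: one must show that for \emph{every} wiring pattern and \emph{every} input the interleaving of Moves (A) and (B) never exceeds $O(\sqrt{Q(f)\,n}\,\log n)$ queries. This is where the pigeonhole lower bound on the maximum degree, the precise calibration of the degree threshold, and the ability to discard shrinking gates for free all do the work, and getting the bookkeeping to close without leaking extra polynomial factors is the heart of the proof. Once the lemma is in hand, iterating it $O(\log w(S)) = O(\log(mn))$ times reduces $w(S)$ to $0$, after which the block-composition bound $Q(f'\circ\AND_{<n/Q(f)}) = O(Q(f)\cdot\sqrt{n/Q(f)}) = O(\sqrt{Q(f)\,n})$ finishes the proof of \thm{quantum}.
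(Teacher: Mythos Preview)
Your approach differs from the paper's, and as stated it does not close. The paper does \emph{not} query high-degree variables one at a time (your Move~(A)), nor does it Grover-search the inputs of individual gates (your Move~(B)). Instead it runs a single Grover search over the entire variable set for indices~$i$ satisfying the conjunction of two predicates: $x_i=0$ \emph{and} $\deg(i)\ge |S_0|/(2Q(f))$. Each success kills at least $|S_0|/(2Q(f))$ gates, so there are at most $2Q(f)$ successes, and the standard $\sum_j \sqrt{n/k_j}\le\sqrt{nr}$ bound with $r\le 2Q(f)$ gives the query count. The crucial payoff comes when the search \emph{fails}: this certifies (w.h.p.) that every remaining high-degree variable equals~$1$, so all of them can be deleted from all gates \emph{for free}, after which every surviving variable has degree below $|S_0|/(2Q(f))$ and hence $w(S')<n\cdot|S_0|/(2Q(f))\le w(S)/2$.

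The gap in your plan is exactly this free deletion step. Your Move~(A) pays one query per high-degree variable even when it is~$1$, and there can be $\Theta(n)$ such variables. Concretely, take $Q(f)=n^{1/3}$ and $|S_0|=n^{2/3}$ gates, each of fan-in $2n^{2/3}$, built as follows: there are $n^{2/3}$ ``private'' variables (one per gate), all equal to~$0$, each appearing in exactly one gate; the remaining $\approx n$ variables are all equal to~$1$ and are shared so that each has degree $\approx 2n^{1/3}$. Here $w(S)\approx 2n^{4/3}$ and the target budget is $\tilde O(n^{2/3})$. Your Move~(A) on a maximum-degree variable finds a~$1$ and removes only $\Theta(n^{1/3})$ edges (no gate drops, since fan-ins stay well above $n^{2/3}$); halving $w(S)$ this way costs $\Theta(n)$ queries. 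Your Move~(B) on any gate finds its unique~$0$ in $\Theta(n^{1/3})$ queries and removes that one gate's $\Theta(n^{2/3})$ edges; again $\Theta(n)$ queries to halve. Even your ``degenerate'' Grover over all variables finds a~$0$ of degree~$1$ each time, removing one gate per call; summing $\sqrt{n/(n^{2/3}-k)}$ over $k\le n^{2/3}/2$ gives $\Theta(n^{5/6})$ queries, still over budget. By contrast, the paper's Grover for \emph{high-degree} zeros finds none (the zeros all have degree~$1$), halts in $O(\sqrt{n}\log n)$ queries, and then deletes all the $1$-variables for free, after which every gate has fan-in~$1$ and $w(S')=0$. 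The idea you are missing is precisely to fold the ``high-degree'' predicate into the Grover marking so that when the search fails you may discard all high-degree variables without querying them.
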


\begin{proof}
	The overall structure of the claimed algorithm is the following: We query some well-chosen input bits, and on learning the values of these bits, we simplify the circuit accordingly. If an input bit is 0, then we delete all the $\AND$ gates that use that input bit.
	If an input bit is 1, we delete all outgoing wires from that input bit since a 1-input does not affect the output of an $\AND$ gate.
	
	Since the circuit will change during the algorithm, let us define $S_0$ to be the initial set of high fan-in (i.e., gates with fan-in $\geq n/Q(f)$) $\AND$ gates in $h$. 

	We also define the degree of an input $x_i$, denoted $\deg(i)$, to be the number of high fan-in $\AND$ gates that it is an input to. Note that this is not the total number of outgoing wires from $x_i$, but only those that go to high fan-in $\AND$ gates, i.e., gates in the set $S$. With this definition, note that $\sum_{i\in[n]} \deg(i) = w(S)$, for any circuit. 
	We say an input bit $x_i$ is \emph{high degree} if $\deg(i)\geq |S_0|/(2Q(f))$. This value is chosen since it is at least half the average degree of all $x_i$ in the initial circuit $h$. 
	As the algorithm progresses, the circuit will change, and some inputs that were initially high degree may become low degree as the algorithm progresses, but a low degree input will never become high degree. 
	But note that the definition of a high-degree input bit does not change, since it only depends on $S_0$ and $Q(f)$, which are fixed for the duration of the algorithm. 
	
	Finally, we call an input bit $x_i$ is \emph{marked} if $x_i=0$. We are now ready to describe our algorithm	by the following pseudocode (see \alg{basic}). 
	\begin{algorithm}
		\caption{The algorithm of \lem{halving}.\label{alg:basic}}
		\begin{algorithmic}[1]
			\Statex
			\Let{$S_0$}{Set of high fan-in $\AND$ gates in $h$} 
			\Repeat 
			\Let{$M$}{Set of high-degree marked inputs \algorithmiccomment{$M := \left\{i:   x_i =0 \wedge \deg(i)\geq \frac{|S_0|}{(2Q(f))}\right\}$}}
			\State{Grover Search for an index $i$ in $M$}
			\If{we find such an $i$}
			\State{Delete all $\AND$ gates that use $x_i$ as an input}
			\EndIf
			\Until{Grover Search fails to find an $i \in M$}
			\State{Delete all remaining high-degree inputs and all outgoing wires from these inputs}
		\end{algorithmic}
	\end{algorithm}
	
	In more detail, we repeatedly use the version of Grover's algorithm in \lem{ampBBHT} to find a high-degree marked input, which is an input $x_i$ such that $x_i=0$ and $\deg(i)\geq \frac{|S_0|}{2Q(f)}$. If we find such an input, we delete all the $\AND$ gates that use $x_i$ as an input, and repeat this procedure. Note that when we repeat this procedure, the circuit has changed, and hence the set of high-degree input bits may become smaller. The algorithm halts when Grover's algorithm is unable to find any high-degree marked inputs. At this point, all the high-degree inputs are necessarily unmarked with very high probability, which means they are set to $1$. We can now delete all these input bits and their outgoing wires because $\AND$ gates are unaffected by input bits set to $1$.
	
	Let us now argue that this algorithm is correct.  Let $S'$ denote the set of high fan-in $\AND$ gates in the new circuit $h'$ obtained at the end of the algorithm, and $w(S')$ be the total fan-in of gates in $S'$. 
	Note that when the algorithm  terminates, there are no high-degree inputs (marked or unmarked). Hence every input bit that has not been deleted has $\deg(i)<\frac{|S_0|}{2Q(f)}$. Since there are at most $n$ input bits, we have 
	\begin{equation}
		w(S')=\sum_{i \in [n]} \deg(i) < \frac{n}{2Q(f)}|S_0|.
	\end{equation}  
	But we also know that we started with $w(S)\geq n|S_0|/Q(f)$, since each gate in $S_0$ has fan-in at least $n/Q(f)$. Hence $w(S')\leq w(S)/2$, which proves that the algorithm is correct.
	
	We now analyze the query complexity of this algorithm. Let the loop in the algorithm execute $r$ times. It is easy to see that $r \leq 2Q(f)$ because each time a high-degree marked input is found, we delete all the $\AND$ gates that use it as an input, which is at least $|S_0|/(2Q(f))$ gates. Since there were at most $S_0$ gates to begin with, this procedure can only repeat $2Q(f)$ times.
	
	When we run Grover's algorithm to search for a high-degree marked input bit $x_i$ in the first iteration of the loop, suppose there are $k_1$ high-degree marked inputs. Then the variant of Grover's algorithm in \lem{ampBBHT} finds a marked high-degree input and makes $O(\sqrt{n/k_1}\log n)$ queries with probability $1-\frac{1}{\poly(n)}$. In the second iteration of the loop, the number of high-degree marked inputs, $k_2$, has decreased by at least one. It can also decrease by more than 1 since we deleted several $\AND$ gates, and some high-degree inputs can become low-degree. In this iteration, our variant of Grover's algorithm (\lem{ampBBHT}) makes $O(\sqrt{n/k_2} \log n)$ queries, and we know that $k_1>k_2$. This process repeats and we have $k_1>k_2>\cdots >k_r$. Since there was at least one high-degree marked input in the last iteration, $k_r\geq 1$. Combining these facts we have for all $j\in[r]$, $k_j \geq r-j+1$. Thus the total expected query complexity is
	\begin{equation}
		O\(\sum_{j=1}^r \sqrt{\frac{n}{k_j}} \log n\) 
		= O\(\sum_{j=1}^r \sqrt{\frac{n}{r-j+1}} \log n\) 
		= O\(\sqrt{n} \sum_{j=1}^r \frac{1}{\sqrt{j}} \log n\) 
		= O\(\sqrt{nr} \log n\),
	\end{equation}
	which is $O\(\sqrt{n \cdot Q(f)} \log n\).$
	We now have a quantum query algorithm  that satisfies the conditions of the lemma with probability at least $1-\frac{1}{\poly(n)}$.
\end{proof}

We are now ready to prove \thm{quantum}.

\begin{proof}[Proof of \protect{\thm{quantum}}]
We start by applying the algorithm in \lem{halving} to our circuit as many times as needed to ensure that set $S$ is empty. Since each run of the algorithm reduces $w(S)$ by a factor of 2, and $w(S)$ can start off being as large as $m\cdot n$, where $m$ is the number of $\AND$ gates and $n$ is the number of inputs, we need to run the algorithm $\log(mn)$ times. Since the algorithm of \lem{halving} is correct with probability $1-\frac{1}{\poly(n)}$, we do not need to boost the success probability of the algorithm. The total number of queries needed to ensure $S$ is empty is $O(\sqrt{Q(f)\cdot n}\log(n)\log(mn))$.

Now we are left with a circuit $h'$ with no high fan-in $\AND$ gates. That is, all $\AND$ gates have fan-in at most $n/Q(f)$. We now evaluate $h'$ using the standard composition theorem for disjoint sets of inputs, which has query complexity
\begin{equation}
	O(Q(f) \cdot Q(\AND_{n/Q(f)}))= O(Q(f) \cdot \sqrt{n/Q(f)}) = O\(\sqrt{Q(f)\cdot n}\).
\end{equation}
The total query complexity is 
$O(\sqrt{Q(f)\cdot n}\log(n)\log(mn)) = O(\sqrt{Q(f)\cdot n}\log^2(mn))$.
\end{proof}

Note that we have not attempted to reduce the logarithmic factors in this upper bound. We believe it is possible to make the quantum upper bound match the upper bound for approximate degree with a more careful analysis and slightly different choice of parameters in the algorithm.

\section{Approximating polynomials for composed functions}
\label{sec:approx}

\subsection{Preliminaries} 
We now define the various measures of Boolean functions and polynomials that we require in this section. Since we only care about polynomials approximating Boolean functions, we focus without loss of generality on multilinear polynomials as any polynomial over the domain $\B^n$ can be converted into a multilinear polynomial (since it never helps to raise a Boolean variable to a power greater than $1$).

The approximate degree of a Boolean function, commonly denoted $\adeg(f)$, is the minimum degree of a polynomial that entrywise approximates the Boolean function. It is a basic complexity measure and is known to be polynomially related to a host of other complexity measures such as decision tree complexity, certificate complexity, and quantum query complexity~\cite{dtsurvey,bt21}. We also use another complexity measure of polynomials, which is the sum of absolute values of all the coefficients of the polynomial. This is the query analogue of the so-called $\mu$-norm used in communication complexity~\cite[Definition 2.7]{LS09}. We now formally define these measures.

\begin{definition}
	Let $p:\R^n \to \R$ be a multilinear polynomial 
\begin{equation}
p(x_1,\ldots,x_n)=\sum_{s\in\Bo^n} \alpha_s x_1^{s_1}\cdots x_n^{s_n}.
\end{equation} 
	We define the following complexity measures of the polynomial $p$: 
	\begin{align}
	\deg(p) = \max \Bigl\{\sum_{i\in[n]} |s_i|:\alpha_s\neq 0\Bigr\} \qquad \mathrm{and} \qquad
	\mon(p) = \sum_{s\in\Bo^n} |\alpha_s|.
	\end{align}
	For a Boolean function $f:\B^n \to \B$, we define the following complexity measures:
	\begin{align}
	\deg_{\eps}(f) &= \min \{\deg(p):\forall x\in\B^n,~|f(x)-p(x)|\leq \eps\}\\
	\mu_{\eps}(f) &= \min \{\mon(p):\forall x\in\B^n,~|f(x)-p(x)|\leq \eps\}
	\end{align}
Finally, we define $\adeg(f)=\deg_{1/3}(f)$ and $\amu(f)=\mu_{1/3}(f)$.
\end{definition}

We use the following standard relationship between the two measures in our results.

\begin{lemma}
	\label{lem:degreebound}
	For any multilinear polynomial $p:\R^n \to \R$ such that $|p(x)|=O(1)$ for all $x\in\B^n$, we have
\begin{equation}
\log \mu(p) = O(\deg(p)\log n).\label{eq:mudeg}
\end{equation}
	Consequently, for any Boolean function $f:\B^n \to \B$ and $\eps\in[0,1/3]$, we have 
\begin{equation}
\log \mu_\eps(f) = O(\deg_\eps(f)\log n).
\end{equation}
\end{lemma}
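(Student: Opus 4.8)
The plan is to bound each coefficient $\alpha_s$ of the multilinear polynomial $p$ individually in terms of $\max_{x \in \B^n}|p(x)|$, and then sum that bound over the comparatively few monomials that can appear in a polynomial of degree $d := \deg(p)$. (We may assume $d \ge 1$, the case $d = 0$ being trivial since then $p$ is a constant and $\mu(p) = |p| = O(1)$.)

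First I would recover the coefficients of $p$ via Möbius inversion over the Boolean lattice, which is the same thing as iterated first differencing, one coordinate at a time. Identifying a multilinear monomial $x_1^{s_1}\cdots x_n^{s_n}$ with the set $T = \{i : s_i = 1\}$, write $p(x) = \sum_{T \subseteq [n]} \alpha_T \prod_{i \in T} x_i$. Evaluating $p$ at the $0/1$-vector $\mathbf{1}_U$ supported exactly on a set $U$ gives $p(\mathbf{1}_U) = \sum_{T \subseteq U} \alpha_T$, and inverting this relation yields
\[
\alpha_U = \sum_{T \subseteq U} (-1)^{|U| - |T|}\, p(\mathbf{1}_T),
\]
so $|\alpha_U| \le 2^{|U|} \cdot \max_{x \in \B^n}|p(x)| = 2^{|U|}\cdot O(1)$. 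Since $\deg(p) = d$, only sets $U$ with $|U| \le d$ contribute, and the number of such $U$ is $\sum_{i=0}^{d} \binom{n}{i} \le (n+1)^d$. Combining these two estimates,
\[
\mu(p) = \sum_{|U| \le d} |\alpha_U| \le (n+1)^d \cdot 2^d \cdot O(1) = n^{O(d)},
\]
and taking logarithms gives $\log \mu(p) = O(d \log n) = O(\deg(p)\log n)$.

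For the ``consequently'' clause, let $p$ be a minimum-degree multilinear $\eps$-approximating polynomial for $f$ (such a $p$ exists, since multilinearizing never increases the degree or changes values on $\B^n$). Then $\deg(p) = \deg_\eps(f)$ and $|p(x)| \le 1 + \eps \le 4/3 = O(1)$ for all $x \in \B^n$, so $\mu_\eps(f) \le \mu(p)$ and the first part yields $\log \mu_\eps(f) = O(\deg_\eps(f)\log n)$. I do not anticipate a real obstacle: the only step requiring care is the coefficient-recovery identity, which is the standard Möbius inversion on the hypercube (equivalently, repeated finite differencing), and the rest is routine counting.
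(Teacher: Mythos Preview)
Your proof is correct, but it takes a somewhat different route from the paper's. The paper first passes to the $\{-1,1\}$ (Fourier) basis via $x_i \mapsto \tfrac{1}{2}(1+y_i)$, observes that in that basis each coefficient is an expectation of $p$ times a parity and hence $O(1)$ in magnitude, bounds the number of monomials by $n^{O(d)}$, and then converts back to the $\{0,1\}$ basis, picking up an extra $3^d$ factor from expanding each $(2x_i-1)$-monomial. You instead stay in the $\{0,1\}$ basis throughout and use M\"obius inversion on the Boolean lattice to get $|\alpha_U| \le 2^{|U|}\cdot \max_x |p(x)|$ directly, then sum over the $\le (n+1)^d$ relevant sets $U$. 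Your argument is arguably more elementary, avoiding the basis change and the back-conversion bookkeeping; the paper's argument has the mild advantage that the Fourier-coefficient bound $|\hat p(S)| = O(1)$ is a one-line averaging fact. Either way, the final estimate $\mu(p)=n^{O(\deg p)}$ and the deduction for $\mu_\eps(f)$ are the same.
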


\begin{proof}
	First let us switch to the $\Bf$ representation instead of the $\Bo$ representation we have used so far. Let $y_i = (-1)^{x_i}$, and replace every occurrence of $x_i$ in the polynomial $p$ with $\frac{1}{2}(1+y_i)$ to obtain a multilinear polynomial $p(y_1,\ldots,y_n)=\sum_{s\in\Bo^n} \beta_s y_1^{s_1}\cdots y_n^{s_n}$. In this representation, a coefficient $\beta_s$ is simply the expectation over the hypercube of the product of $p$ and a parity function, and hence is at most $O(1)$ in magnitude. Since there are only $\binom{n}{\leq \deg(p)} \leq n^{\deg(p)}$ monomials, the sum of absolute values of all coefficients is $O(n^{\deg(p)})$.
	
	When we switch from this representation back to the $\Bo$ representation, we replace every $y_i$ with $2x_i-1$. Consider this transformation on a single monomial with coefficient $1$. This converts the monomial of degree $d$ into a polynomial over those $d$ variables, such that the sum of coefficients in this polynomial is at most $3^d$. Thus the sum of absolute values of all coefficients is $\mu(p)=O(3^{\deg(p)} n^{\deg(p)}) = n^{O(\deg(p))}$, which proves \eq{mudeg}.

	Now consider any Boolean function $f:\B^n\to\B$, and a multilinear polynomial $p$ that minimizes $\deg_\eps(f)$. We can apply \eq{mudeg} to this polynomial to obtain $\log \mu(p) = O(\deg(p)\log n)$. Since $\deg(p) = \deg_\eps(f)$ by assumption, and $\mu_\eps(f)\leq\mu(p)$, since $\mu_\eps(f)$ minimizes over all $\eps$-approximating polynomials, we get $\log \mu_\eps(f) = O(\deg_\eps(f)\log n)$.
\end{proof}

This shows that $\log \mu(p)$ is at most $\deg(p)$ (up to log factors). However, $\log \mu(p)$ may be much smaller than $\deg(p)$, as evidenced by the polynomial $p(x)=x_1\cdots x_n$. Similarly, $\log \amu(f)$ may be much smaller than $\adeg(f)$, as evidenced by the $\AND$ function on $n$ bits, which has $\adeg(\AND_n)=\Theta(\sqrt{n})$ \cite{nisanszegedy}, but $\amu(\AND_n)\leq 1$.

\subsection{Polynomial upper bound}

In this section we prove \thm{composition}, which
follows from the following more general composition theorem.

\begin{restatable}{theorem}{composition-full}
\label{thm:composition-full}
Let $h:\B^n\to \B$ be computed by a depth-2 circuit where the top gate is a function $f:\B^m \to \B$ and the bottom level gates are  $\AND$ gates on a subset of the input bits and their negations (as depicted in \fig{composition}). Then 
\begin{equation}
\deg_\eps(h) = O\(\sqrt{n\log{\mu_\eps(f)}}+\sqrt{n\log(1/\eps)}\)=O\(\sqrt{n\deg_\eps(f)\log m}+\sqrt{n\log(1/\eps)}\).\label{eq:comp}
\end{equation}
\end{restatable}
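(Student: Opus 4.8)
The plan is to construct an explicit $\eps$-approximating polynomial for $h$ out of a low-weight approximating polynomial for $f$ together with cheap approximating polynomials for $\AND$, following Sherstov's ``algorithmic polynomials'' strategy~\cite{algorithmicpolys}. First I would fix a multilinear polynomial $p(y_1,\dots,y_m)=\sum_{s\subseteq[m]}\alpha_s\prod_{i\in s}y_i$ that $(\eps/2)$-approximates $f$ pointwise on $\B^m$ and has minimal coefficient sum, so that $\sum_s|\alpha_s|=\mu_{\eps/2}(f)$. Then I would substitute the actual bottom gates of the circuit into $p$: writing the $i$-th bottom gate as a conjunction of literals $g_i(x)=\bigwedge_{\ell\in R_i}\ell(x)$, the key structural point is that for $x\in\B^n$ one has $\prod_{i\in s}g_i(x)=\bigwedge_{\ell\in\bigcup_{i\in s}R_i}\ell(x)$, which is again a conjunction of \emph{at most $n$} literals over the shared variables $x_1,\dots,x_n$ (or the identically-zero function, if that union of literal sets contains a complementary pair $x_j,\overline{x_j}$, in which case we simply take the zero polynomial). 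So, however large the $y$-degree of the monomial indexed by $s$ is, once re-expressed over $x$ it is a conjunction of at most $n$ shared variables; this is exactly where the sharing of inputs is exploited.

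The second ingredient is the standard fact (via Chebyshev polynomials) that $\AND_k$, and hence, after a sign change of variables, any conjunction of $k\le n$ literals extended trivially to all $n$ variables, has a $\delta$-approximating polynomial of degree $O(\sqrt{k\log(1/\delta)})=O(\sqrt{n\log(1/\delta)})$. Letting $q_s$ be such a $\delta$-approximation of $\prod_{i\in s}g_i$ (and $q_s\equiv 0$ in the degenerate case), I would set $q=\sum_s\alpha_s q_s$; this has degree $D=O(\sqrt{n\log(1/\delta)})$ regardless of how many monomials there are. The triangle inequality then gives, for every $x\in\B^n$,
\[
|q(x)-h(x)|\le\Bigl|\sum_s\alpha_s\bigl(q_s(x)-\prod_{i\in s}g_i(x)\bigr)\Bigr|+\bigl|p(g_1(x),\dots,g_m(x))-f(g_1(x),\dots,g_m(x))\bigr|\le\mu_{\eps/2}(f)\,\delta+\tfrac{\eps}{2},
\]
since $(g_1(x),\dots,g_m(x))\in\B^m$ and $p$ is an $(\eps/2)$-approximation of $f$. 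Taking $\delta=\eps/(2\mu_{\eps/2}(f))$ makes $q$ an $\eps$-approximation of $h$ of degree $O\bigl(\sqrt{n\log(\mu_{\eps/2}(f)/\eps)}\bigr)=O\bigl(\sqrt{n\log\mu_{\eps/2}(f)}+\sqrt{n\log(1/\eps)}\bigr)$.

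It then remains to clean up the error parameters: a single round of degree-$3$ error amplification (for small $\eps$) together with trivial monotonicity and constant-factor bounds (for $\eps$ bounded away from $0$) shows $\log\mu_{\eps/2}(f)=O(\log\mu_\eps(f))$ and $\deg_{\eps/2}(f)=O(\deg_\eps(f))$, which yields the first claimed bound; combining this with \lem{degreebound} applied to $f:\B^m\to\B$, namely $\log\mu_\eps(f)=O(\deg_\eps(f)\log m)$, yields the second. The only genuinely non-obvious step is the structural observation in the first paragraph: that the sole feature of a monomial of $p$ relevant after substitution is the number of \emph{distinct} shared variables it touches, which is always at most $n$, so that every monomial contributes a term of the same small degree $\sqrt{n\log(1/\delta)}$ and the coefficient sum $\mu_\eps(f)$ controls only how small $\delta$ must be. This linearity, namely that a polynomial for $f$ is a linear combination of monomials that may be approximated term by term and re-added, is precisely the property with no quantum analogue, and it is what allows the final degree to depend on $\log\mu_\eps(f)$ (which can be tiny, e.g.\ $\mu(x_1\cdots x_n)=1$ for $\AND_n$) rather than on $Q(f)$.
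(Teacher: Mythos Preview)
Your proposal is correct and follows essentially the same approach as the paper: take a low-$\mu$ approximator $p$ for $f$, observe that each monomial of $p$ becomes a single conjunction of at most $n$ literals after substituting the bottom gates, replace each such conjunction by a Chebyshev-type $\delta$-approximation of degree $O(\sqrt{n\log(1/\delta)})$, and choose $\delta \approx \eps/\mu_\eps(f)$ so that the triangle inequality over the coefficient sum bounds the total error. The only cosmetic difference is in bookkeeping the error: you start from an $(\eps/2)$-approximator of $f$ and then argue $\log\mu_{\eps/2}(f)=O(\log\mu_\eps(f))$, whereas the paper starts from an $\eps$-approximator, obtains a $2\eps$-approximation of $h$, and applies standard error reduction to $h$ at the end --- the latter is slightly cleaner since it sidesteps the $\mu_{\eps/2}$ vs.\ $\mu_\eps$ comparison, but both routes are valid.
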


\begin{proof}
	Let us first fix some notation. We will use $x\in\B^n$ to refer to the input of the full circuit $h:\B^n \to \B$. Let the inputs to the top $f:\B^m \to \B$ gate be called $y_1,\ldots, y_m$.
	
	Let $p:\B^m \to \B$ be a polynomial that minimizes $\mu_\eps(f)$. Thus we have for all $y\in\B^m$, $|p(y)-f(y)|\leq \eps$. More explicitly, $p(y_1,\ldots,y_m) = \sum_{s\in\Bo^m} \alpha_s y_1^{s_1}\cdots y_n^{s_n}$, where $\mu_\eps(f)=\sum_{s\in\Bo^m} |\alpha_s|$, and each $y_i$ is the $\AND$ of some subset of bits in $x$. Since the product of $\AND$s of variables is just an $\AND$ of all the variables involved in the product, for each $s \in\Bo^m$, there is a subset $T_s \subseteq [n]$ such that $y_1^{s_1}\cdots y_n^{s_n} = \bigwedge_{i\in T_s} x_i$.
	
	Using this we can replace all the $y$ variables in the polynomial $p$, to obtain
	\begin{equation}
	q(x)=\sum_{s\in\B^m} \alpha_s \bigwedge_{i\in T_s} x_i.
	\end{equation}
Since $p$ was an $\eps$ approximation to $f$, $q$ is an $\eps$ approximation to $h$. Now we can replace every occurrence of  $\bigwedge_{i\in T_s} x_i$ with a low error approximating polynomial for the $\AND$ of the bits in $T_s$. We know that the approximate degree of the $\AND$ function to error $\delta$ is $O(\sqrt{n\log(1/\delta)})$~\cite{smallerrorquantum}. If we approximate each $\AND$ to error $\delta=\eps/\mu_\eps(f)$, then by the triangle inequality the total error incurred by this approximation is at most $\sum_{s\in\Bo^m} |\alpha_s|\eps/\mu_\eps(f) = \eps$. 
Choosing $\delta=\eps/\mu_\eps(f)$, each $\AND$ is approximated by a polynomial of degree $O(\sqrt{n\log(1/\delta)}) = O\left(\sqrt{n\log \mu_\eps(f)}+\sqrt{n\log(1/\eps)}\right)$. Hence the resulting polynomial $q(x)$ has this degree and approximates the function $h$ to error $2\eps$. By standard error reduction techniques~\cite{BuhrmanNRW07}, we can make this error smaller than $\eps$ at a constant factor increase in the degree. This establishes the first equality in \eq{comp}, and the second equality follows from \lem{degreebound}.
\end{proof}

\section{Applications to linear-size \texorpdfstring{\AC}{AC0} circuits}
\label{sec:LC}

\subsection{Preliminaries}
A Boolean circuit is defined via a directed acyclic graph.
Vertices of fan-in 0 represent input bits, vertices of fan-out 0 represent outputs, and 
all other vertices represent one of the following logical operations: a $\mathsf{NOT}$ operation (of fan-in 1), or an unbounded fan-in $\mathsf{AND}$ or $\mathsf{OR}$ operation. 
The size of the circuit is the total number of $\mathsf{AND}$ and $\mathsf{OR}$ gates. The depth of the circuit
is the length of the longest path from an input bit to an output bit.

For any constant integer $d > 0$, \ACd refers to the class of all such circuits of polynomial size and depth $d$. \AC refers
to $\cup_{d=1}^{\infty}$\ACd. Similarly, \LCd refers to the class of all such circuits of size $O(n)$ and depth $d$,
while \LC refers to $\cup_{d=1}^{\infty}$\LCd. We will associate any circuit $C$ with the function it computes, so for example
$\adeg(C)$ denotes the approximate degree of the function computed by $C$.

It will be convenient to assume that any \ACd circuit is layered, in the sense that it consists of $d$ levels of gates which alternate between being comprised of all $\AND$ gates or all $\OR$ gates, and all negations appear at the input level of the circuit. Any \ACd circuit of size $s$ can be converted into a layered circuit of size $O(d \cdot s)$, and hence making this assumption does not change any of our upper bounds.

\subsection{Quantum query complexity} 

Applying our composition theorem for quantum algorithms (\thm{quantum}) inductively, we obtain a sublinear upper bound on the quantum query complexity of \LCd circuits.

\quantumLC*

\begin{proof}
We  prove this for depth-$d$ \LC circuits by induction on $d$. The base case is $d=1$, where the function is either $\AND$ or $\OR$ on $n$ variables, both of which have quantum query complexity $O(\sqrt{n})$~\cite{Gro96}.

Now consider a function $h$, which is a layered depth-$d$ \AC circuit of size $O(n)$. It can be written as a depth-$2$ circuit (as in \thm{quantum}) where the top function is a \LC circuit $f$ of depth $d-1$ on at most $O(n)$ inputs, and the bottom layer has only $\AND$ gates. (If the bottom layer has $\OR$ gates we can consider the negation of the function without loss of generality, since the quantum query complexity of a function and its negation is the same.)

By the induction hypothesis we know that the quantum query complexity of any depth-$(d-1)$, size-$O(n)$ \AC circuit with $O(n)$ inputs is $\tO(n^{1-2^{-(d-1)}})$. 
Invoking \thm{quantum}, we have that the quantum query complexity of the depth-$d$ function $h$ is $\tO\bigl(n^{1-2^{-d}}\bigr)$.
\end{proof}

\subsection{Approximate degree upper bound}

We can now prove \Cref{thm:approxdegLC}, restated below for convenience:
\approxdegLC*

This follows from a more general result:

\begin{restatable}{theorem}{upperbound}
\label{thm:upper}
For any function $h:\B^n\to\B$ computed by an  \AC circuit of size $s\geq 1$ and depth $d\geq 1$, 
we have 
\begin{equation}
\deg_\eps(h)=\begin{cases} O\(\sqrt{n\log(1/\eps)}\) &\mbox{if } \eps \leq 2^{-s} \Leftrightarrow \log(1/\eps)\geq s\\
\tO\(\sqrt{n} s^{{1}/{2}-2^{-d}} \({\log(1/\eps)}\)^{2^{-d}}\) & \mbox{if } \eps > 2^{-s} \Leftrightarrow \log(1/\eps) < s \end{cases}.
\end{equation}
    In particular, for any $h\in\LCd$, we have  $\adeg(h)=\tO\bigl(n^{1-2^{-d}}\bigr)$.
\end{restatable}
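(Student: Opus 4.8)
The plan is to peel off the bottom layer of the circuit computing $h$ and apply the shared-input composition theorems, inducting on the depth $d$ (with the size bound $s$ held fixed, so the statement should be read as a bound for all circuits of size \emph{at most} $s$, which is equivalent). First I would invoke the layering assumption (which costs only a constant factor in size since $d$ is constant) so that the bottom level of the circuit computing $h$ consists entirely of $\AND$ gates, negating $h$ if it is instead a level of $\OR$ gates --- this changes neither $\deg_\eps$ nor $\adeg$. Writing $f:\B^m\to\B$ for the function computed by the top $d-1$ levels, where $m$ is the number of bottom $\AND$ gates, one may view $h$ as a depth-$2$ circuit of the form in \fig{composition}: top gate $f$, bottom gates $\AND$s over subsets of the (possibly negated) inputs. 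The crucial bookkeeping point is that $f$ is computed by an \AC circuit of depth $d-1$ and size at most $s$ with $m\le s$, so both the arity \emph{and} the size of the top function are controlled by $s$. The base case $d=1$ is immediate: then $h$ is a single $\AND$ or $\OR$ on at most $n$ variables, and $\deg_\eps(h)=O\bigl(\sqrt{n\log(1/\eps)}\bigr)$ by the small-error approximate-degree bound for $\AND$ already used in the proof of \thm{composition-full}, matching both cases of the claimed bound since $2^{-1}=1/2$.

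In the small-error regime $\eps\le 2^{-s}$ (with $d\ge 2$) I would not use the inductive hypothesis. Since $f$ is a Boolean function on $m\le s$ bits, its exact multilinear representation $p$ satisfies $\log\mu(p)=O(m)=O(s)$: in the $\pm 1$ Fourier expansion every coefficient has magnitude at most $1$ and there are only $2^m$ of them, and converting back to the $\B$ basis costs at most a $3^m$ factor. Hence $\log\mu_\eps(f)\le\log\mu(p)=O(s)=O(\log(1/\eps))$, and \thm{composition-full} gives $\deg_\eps(h)=O\bigl(\sqrt{n\log\mu_\eps(f)}+\sqrt{n\log(1/\eps)}\bigr)=O\bigl(\sqrt{n\log(1/\eps)}\bigr)$, as required.

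In the large-error regime $\eps>2^{-s}$ I would induct on $d$. The inductive hypothesis applied to $f$ (arity $m\le s$, size $\le s$, depth $d-1$, same large-error regime) gives $\deg_\eps(f)=\tO\bigl(\sqrt{m}\,s^{1/2-2^{-(d-1)}}(\log(1/\eps))^{2^{-(d-1)}}\bigr)=\tO\bigl(s^{1-2^{-(d-1)}}(\log(1/\eps))^{2^{-(d-1)}}\bigr)$. Plugging into \thm{composition} and absorbing $\log m\le\log s$ into $\tO$, the first term is $\tO\bigl(\sqrt{n}\,s^{1/2-2^{-d}}(\log(1/\eps))^{2^{-d}}\bigr)$, using $(1-2^{-(d-1)})/2=1/2-2^{-d}$ and $2^{-(d-1)}/2=2^{-d}$; and the second term $\sqrt{n\log(1/\eps)}$ is absorbed into the first since $\log(1/\eps)<s$ and $1/2-2^{-d}>0$ for $d\ge 2$. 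Finally, for $h\in\LCd$ we have $s=O(n)$, so taking $\eps=1/3$ (which lies in the large-error regime once $s\ge2$) yields $\adeg(h)=\tO\bigl(\sqrt{n}\cdot n^{1/2-2^{-d}}\bigr)=\tO\bigl(n^{1-2^{-d}}\bigr)$.

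I expect the main obstacle to be the careful tracking of the two parameters $n$ and $s$ through the recursion: after peeling a layer it is the top function's \emph{arity} $m$, not merely its size, that must be bounded by $s$ in order to feed the inductive hypothesis back into \thm{composition}, and one must check that the exponents telescope as $2^{-(d-1)}/2=2^{-d}$ and that the additive $\sqrt{n\log(1/\eps)}$ term never dominates in the relevant regime. A secondary subtlety is that in the small-error regime one has to route through the $\mu$-based composition theorem \thm{composition-full} rather than the degree-based \thm{composition}, since $\deg_\eps(f)$ may be as large as $m$ and would introduce a spurious logarithmic factor, whereas the $\mu$-norm of the exact representation of $f$ has logarithm only $O(s)$.
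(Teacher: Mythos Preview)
Your proposal is correct and follows essentially the same approach as the paper: induct on $d$, peel off the bottom layer of $\AND$s, apply the shared-input composition theorem to $f$ (the depth-$(d-1)$, size-$\le s$, arity-$\le s$ top circuit), and handle the two error regimes separately, using the exact representation of $f$ with $\log\mu_0(f)=O(s)$ in the small-error case. Your observation that the small-error case requires the $\mu$-based \thm{composition-full} rather than the degree-based \thm{composition} to avoid a spurious $\sqrt{\log s}$ factor is in fact slightly more careful than the paper's own presentation, which cites \thm{composition} but implicitly uses the $\mu$ bound.
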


\begin{proof}
We  prove this for depth-$d$ \AC circuits by induction on $d$. The base case is $d=1$, where the function is either $\AND$ or $\OR$ on $n$ variables, both of which have $\eps$-approximate degree $O(\sqrt{n\log (1/\eps)})$~\cite{smallerrorquantum}.

Now consider a function $h$, which is a general depth-$d$ \AC circuit of size $s$. It can be written as a depth-$2$ circuit (as in \thm{composition}) where the top function is a size-$s$ \AC circuit $f$ of depth $d-1$ on at most $s$ inputs, and the bottom layer has only $\AND$ gates. If the bottom layer has $\OR$ gates we can consider the negation of the function without loss of generality, since the $\eps$-approximate degree of a function and its negation is the same.

In the first case, if $\eps \leq 2^{-s}$, then for any function $f:\B^s\to\B$  there is a polynomial of degree $s$ and sum of coefficients at most $2^s$ that exactly equals $f$ on all Boolean inputs. Hence we can apply \thm{composition} to get that $\deg_\eps(h)=O(\sqrt{ns}+\sqrt{n\log(1/\eps)}) = O(\sqrt{n\log(1/\eps)})$.

In the second case, if $\eps > 2^{-s}$, by the induction hypothesis we know that the $\eps$-approximate degree of any depth-$(d-1)$, size-$O(s)$ \AC circuit with $s$ inputs is $\tO(s^{1-2^{-(d-1)}}(\log(1/\eps))^{2^{-(d-1)}})$. 
Invoking \thm{composition}, we have that the approximate degree of the depth-$d$ function is 
\begin{equation}
\tO\(\sqrt{n s^{1-2^{-(d-1)}}(\log(1/\eps))^{2^{-(d-1)}}}+\sqrt{n\log(1/\eps)}\)=\tO\(\sqrt{n} s^{1/2-2^{-d}}(\log(1/\eps))^{2^{-d}}\).\qedhere
\end{equation}
\end{proof}

\subsection{Approximate degree lower bound}
\label{sec:lower}

In this section we prove our lower bound on the approximate degree of \LCd, restated below for convenience.

\lowerbound*

Before proving the theorem, we will need to introduce several lemmas. The first lemma follows from the techniques of \cite{AB84} (see \cite{Kop13} for an exposition).

\begin{lemma}\label{lem:gapmaj}
There exists a Boolean circuit $C$ with $n$ inputs, of depth 3, and size $\tO(n^2)$ satisfying the following two properties:
\begin{itemize}
\item $C(x) = 0$ for all $x$ of Hamming weight at most $n/3$.
\item $C(x) = 1$ for all $x$ of Hamming weight at least $2n/3$.
\end{itemize}
\end{lemma}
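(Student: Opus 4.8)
The goal is to construct a depth-3 circuit of size $\tO(n^2)$ that separates Hamming weight $\le n/3$ from Hamming weight $\ge 2n/3$ — i.e.\ a ``GapMajority'' circuit. The plan is to use the classical Ajtai--Ben-Or--style construction of $\AC^0$ circuits for approximate majority. First I would recall the basic probabilistic fact underlying that construction: if $x \in \B^n$ has $|x| \ge 2n/3$, then a uniformly random subset $R \subseteq [n]$ of size $k$ satisfies $\bigvee_{i \in R} x_i = 1$ except with probability $(1/3)^k$; conversely if $|x| \le n/3$, then $\bigwedge_{i \in S} x_i = 0$ for a random subset $S$ of size $k$ except with probability $(1/3)^k$. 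The trick is to combine an $\OR$ of $\AND$s (or $\AND$ of $\OR$s) so that the two failure modes are both driven below $2^{-n}$ while keeping the fan-ins — and hence the total size — at $\poly(\log n)$ in each block and $\tO(n)$ or $\tO(n^2)$ overall.

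The concrete construction I would carry out: take the top gate to be $\AND$, its inputs being $\tO(n)$ copies of an $\OR$ gate, each $\OR$ having fan-in $\tO(\log n)$ copies of an $\AND$ gate, each $\AND$ having fan-in $\tO(\log n)$ and reading a random size-$\Theta(\log n)$ subset of the $x_i$'s. (Equivalently one can dualize the roles.) By a union bound over the $n$-bit inputs of weight in the relevant ranges, one shows that \emph{some} fixed wiring pattern simultaneously works for all such $x$: for $|x| \le n/3$ each bottom $\AND$ evaluates to $0$ with good probability, so each middle $\OR$ is $0$ with probability $\ge 1-(1/\poly)$, and a union bound over polynomially many $\OR$ gates and $2^n$ inputs forces the top $\AND$ to output $0$; symmetrically for $|x| \ge 2n/3$ each $\OR$ outputs $1$ with overwhelming probability, so the top $\AND$ outputs $1$. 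Counting gates: the top layer has $1$ gate, the middle layer has $\tO(n)$ gates, the bottom layer has $\tO(n) \cdot \tO(\log n) = \tO(n)$ gates — actually to push both error probabilities below $2^{-n}$ one needs $\Theta(n/\log n)$ $\OR$'s in the middle and $\Theta(n)$ total bottom $\AND$'s is not quite enough, so the honest bound is $\tO(n^2)$ bottom gates, matching the claimed size. Depth is $3$ by construction (with negations, if any, pushed to the inputs).

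The step I expect to be the main obstacle is the careful bookkeeping of the two competing error probabilities: one must choose the fan-in of the bottom $\AND$s and the number of middle $\OR$s so that \emph{both} the ``false positive'' error (a weight-$\le n/3$ input making the top $\AND$ output $1$) and the ``false negative'' error (a weight-$\ge 2n/3$ input making it output $0$) are below $2^{-n}$ after the union bound over all $2^n$ inputs, while the product of fan-ins stays at $\tO(n^2)$. This is exactly the tension that forces the quadratic rather than linear size, and it is the content of the Ajtai--Ben-Or analysis. Since the excerpt already cites \cite{AB84} and the exposition \cite{Kop13} for precisely this fact, I would simply invoke that analysis rather than reproduce it, and verify only that the stated parameters (depth $3$, size $\tO(n^2)$, thresholds $n/3$ and $2n/3$) are what it yields.
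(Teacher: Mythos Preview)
Your proposal is correct and matches the paper's treatment: the paper does not prove \Cref{lem:gapmaj} at all but simply cites \cite{AB84,Kop13}, exactly as you suggest doing in your last sentence.

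One small remark on your informal parameter sketch, in case you want to tighten it: the paper does carry out essentially this construction explicitly in the appendix (proof of \Cref{depth3claim}), and there the correct shape is top $\AND$ of fan-in $\Theta(m)$, middle $\OR$s each of fan-in $\Theta(m^{1+\delta})$ (not $\tO(\log m)$ as you first wrote), and bottom $\AND$s each of fan-in $\Theta(\log m)$ over random input subsets. The large \emph{middle} fan-in is what drives the size to $\tO(m^2)$ --- your self-correction lands on the right total size, but the reason is that each $\OR$ needs roughly $m$ bottom $\AND$s (so that on high-weight inputs the $\OR$ fires with probability $1 - e^{-\Omega(m)}$), rather than that the number of $\OR$s needs to grow.
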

We refer to the function computed by the circuit $C$ of \lem{gapmaj} as $\mathsf{GAPMAJ}$, short for a gapped majority function (such a function is sometimes also called an \emph{approximate majority} function).

The following lemma of \cite{bchtv} says that if $f$ has large $\eps$-approximate degree for $\eps=1/3$, then block-composing $f$ with $\mathsf{GAPMAJ}$ on $O(\log n)$ bits
yields a function with just as high $\eps'$-approximate degree, with $\eps'$ very close to $1/2$. 

\begin{lemma}[\cite{bchtv}] \label{lemmaa}
Let $f \colon \B^n \to \B$ be any function. Then for $\eps=1/2-1/n^2$, $\deg_{\eps}(\mathsf{GAPMAJ}_{10 \log n} \circ f) \geq \adeg(f)$.
\end{lemma}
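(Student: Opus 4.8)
The plan is to prove this by the method of dual polynomials. Write $G = \mathsf{GAPMAJ}_{k}$ with $k = 10\log n$, put $d = \adeg(f)$, and let $\eps = 1/2 - 1/n^2$; the goal is $\deg_\eps(G\circ f)\geq d$. By LP duality for approximate degree, this is equivalent to exhibiting a dual witness $\Psi\colon\B^{nk}\to\R$ with $\|\Psi\|_1 = 1$, correlation $\langle\Psi,\,G\circ f\rangle > \eps$, and \emph{pure high degree} at least $d$, meaning $\langle\Psi,p\rangle = 0$ for every polynomial $p$ of degree $< d$. The hypothesis $\adeg(f)\geq d$ hands us a dual witness $\psi$ for $f$ at error $1/3$: $\|\psi\|_1 = 1$, $\langle\psi,f\rangle > 1/3$, and pure high degree at least $d$; since the lemma is trivial when $d = 0$ we may assume $d\geq 1$, hence $\langle\psi,1\rangle = 0$.

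The construction I would use is a composed dual witness in the style of Sherstov: plug one copy of $\psi$ into each of the $k$ blocks and combine with a dual witness for the outer function $G$. Concretely, write $\psi = \mu\cdot\chi$ with $\mu = |\psi|$ a probability distribution on $\B^n$ and $\chi = \mathrm{sign}\,\psi\in\Bf$; under $x\sim\mu$ the bit $\chi(x)$ is uniform on $\Bf$ (because $\langle\psi,1\rangle = 0$), and $\langle\psi,f\rangle > 1/3$ translates to $\Pr_\mu[f(x) = 1\mid\chi(x) = 1] > 2/3$ and $\Pr_\mu[f(x) = 0\mid\chi(x) = -1] > 2/3$. I would then pick a combining function $\Lambda\colon\Bf^k\to\R$ and set
\[
\Psi(x^{(1)},\dots,x^{(k)}) \;=\; \Bigl(\prod_{i=1}^{k}\mu(x^{(i)})\Bigr)\cdot\Lambda\bigl(\chi(x^{(1)}),\dots,\chi(x^{(k)})\bigr).
\]
Two of the three required properties then come out for free. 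First, since $\chi(x^{(1)}),\dots,\chi(x^{(k)})$ are i.i.d.\ uniform on $\Bf$ under $\prod_i\mu$, one gets $\|\Psi\|_1 = \mathbb{E}_{z\in\Bf^{k}}|\Lambda(z)|$, so we may rescale $\Lambda$ to normalize. Second, $\Psi$ has pure high degree at least $d$ as soon as $\widehat{\Lambda}(\emptyset) = 0$: for a monomial $m = \prod_i m_i$ of total degree $< d$, the inner product $\langle\Psi,m\rangle$ factorizes over blocks, and each block is either annihilated by $\langle\psi,m_i\rangle = 0$ (pure high degree of $\psi$) or, when it appears with no $\chi$-factor, contributes only $\widehat{\Lambda}(\emptyset)$. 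The entire problem thus reduces to choosing a mean-zero $\Lambda$ for which
\[
\langle\Psi,\,G\circ f\rangle \;=\; \mathbb{E}_{x^{(i)}\sim\mu}\Bigl[\Lambda\bigl(\chi(x^{(1)}),\dots,\chi(x^{(k)})\bigr)\cdot G\bigl(f(x^{(1)}),\dots,f(x^{(k)})\bigr)\Bigr],
\]
after normalization, exceeds $1/2 - 1/n^2$ --- essentially the largest value compatible with $\|\Psi\|_1 = 1$.

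The heart of the proof, and the step I expect to be the main obstacle, is exactly this correlation bound: amplifying the mere constant-strength agreement between the prediction bit $\chi(x^{(i)})$ and the target $f(x^{(i)})$ into a composed correlation that is $1 - o(1)$, with no loss in degree. This is where the \emph{gap} of $\mathsf{GAPMAJ}$ and the choice $k = \Theta(\log n)$ must be used: one should take $\Lambda$ to be a (recentered, rescaled) dual witness for $G$ that is supported only on Hamming-weight inputs \emph{outside} the gap, so that $\Lambda$ vanishes whenever its argument falls in the undecided region; then the troublesome $2^{-\Omega(k)}$ factor coming from the rarity of that event cancels against the normalization, and the correlation becomes $1 - 2\cdot\Pr[\text{the prediction vector decides } G \text{ but } G(f(x^{(\cdot)})) \text{ disagrees}]$. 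A Chernoff estimate over the i.i.d.\ blocks, exploiting that $G$ is constant on a whole neighborhood of each decided input, should bound this probability by $\exp(-\Omega(k)) \le 1/n^2$, which is why $k = 10\log n$ suffices. The delicate point --- and the reason a naive choice of $\Lambda$ is not enough when $\psi$'s correlation is close to the threshold value $1/3$ --- is that $\Lambda$ must only ``commit'' when the $f$-block weights are \emph{guaranteed} to land outside the gap, while simultaneously keeping $\widehat{\Lambda}(\emptyset) = 0$ and treating the two decided regions symmetrically; arranging this forces a more careful, likely iterated, construction of the combining function, and getting it right is the technical core. (The primal mirror --- taking a degree-$(d-1)$ polynomial that $(1/2-1/n^2)$-approximates $G\circ f$ and amplifying its ``one free block, $k-1$ pinned blocks'' restrictions into a degree-$(d-1)$, $1/3$-approximation of $f$ --- runs into the identical amplification difficulty.)
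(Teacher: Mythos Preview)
The paper does not prove this lemma itself; it is quoted from \cite{bchtv}. Your dual-block-composition plan is exactly the framework used there, and your bookkeeping for $\|\Psi\|_1=1$ and for pure high degree (via $\widehat{\Lambda}(\emptyset)=0$) is correct.

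You have also put your finger on the one genuine obstacle: with the $\mathsf{GAPMAJ}$ thresholds at $(1/3,2/3)$ and the inner dual's correlation guaranteed only to exceed $1/3$, the conditional accuracies $\Pr_\mu[f=1\mid\chi=1]$ and $\Pr_\mu[f=0\mid\chi=-1]$ can each be as small as $2/3+o(1)$, and then the Chernoff bound you invoke for $\Pr[\mathrm{Bin}(k,2/3+o(1))\ge 2k/3]$ gives nothing. Where your plan goes astray is in the proposed resolution: no ``iterated'' or otherwise delicate combining function is needed. The fix in \cite{bchtv} is simply to decouple the two $1/3$'s --- one proves the statement with the inner error parameter strictly above the gap boundary, e.g.\ $\deg_{\eps}(\mathsf{GAPMAJ}_{k}\circ f)\ge\deg_{2/5}(f)$ (or, equivalently, keeps $\deg_{1/3}$ on the right but takes the $\mathsf{GAPMAJ}$ thresholds wider, say $(1/4,3/4)$). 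With that slack the naive combiner $\Lambda(z)\propto\mathbf{1}[z=1^k]-\mathbf{1}[z=(-1)^k]$ already works: from $\langle\psi,f\rangle>2/5$ one gets $a-b>4/5$, hence both conditional accuracies exceed $4/5$, a fixed constant above $2/3$, and a single Chernoff bound delivers error $\exp(-\Omega(k))\le n^{-2}$. The price is that the conclusion is really $\deg_{\eps}(\mathsf{GAPMAJ}\circ f)\ge\Omega(\adeg(f))$ rather than $\ge\adeg(f)$ exactly (since $\deg_{2/5}(f)=\Theta(\adeg(f))$ by standard error reduction); the paper's statement suppresses this constant, which is harmless for \Cref{nomorecors} and the remainder of \Cref{sec:lower}.
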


The following lemma says that if $f$ has large $\eps$-approximate degree for $\eps$ very close to $1/2$, then block-composing any function $g$ with $f$ results
in a function of substantially
larger approximate degree than $g$ itself. 
\begin{lemma}[\cite{sherstovhalfspaces1}] \label{lemmab}
Let $g \colon \B^m \to \B$ and $f \colon \B^n \to \B$ be any functions. Then $\adeg(g \circ f) \geq \adeg(g) \cdot \deg_{1/2-1/m^2}(f)$.
\end{lemma}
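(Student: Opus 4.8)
The plan is to prove this by the method of dual polynomials, following Sherstov's hardness-amplification argument, since the statement is a black-box result from~\cite{sherstovhalfspaces1}. Write $d_g := \adeg(g)$ and $d_f := \deg_{1/2-1/m^2}(f)$. By LP duality, $\adeg(g) \geq d_g$ is witnessed by a function $\psi \colon \Bf^m \to \R$ with $\sum_y|\psi(y)| = 1$, $\sum_y\psi(y)g(y) > 1/3$, and $\psi$ orthogonal to every monomial of degree less than $d_g$; likewise $\deg_{1/2-1/m^2}(f) \geq d_f$ is witnessed by $\phi \colon \Bf^n \to \R$ with $\|\phi\|_1 = 1$, $\langle \phi, f\rangle > 1/2 - 1/m^2$, and $\phi$ orthogonal to every monomial of degree less than $d_f$. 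Since $f$ is nonconstant, $d_f \geq 1$, so $\phi$ is orthogonal to the constant $1$, i.e.\ $\phi$ has equal positive and negative mass (each $1/2$). The goal is to manufacture from $\psi$ and $\phi$ a dual witness $\zeta$ on $(\Bf^n)^m$ certifying that $g \circ f$ admits no polynomial of degree less than $d_g d_f$ and $\ell_\infty$-error at most $1/3$.

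The witness is Sherstov's \emph{dual block composition}: informally, feed a fresh copy of $\phi$ into each of the $m$ inputs of $\psi$ and renormalize, e.g.\ $\zeta(x_1,\dots,x_m) = 2^m\,\psi\bigl(\operatorname{sgn}\phi(x_1),\dots,\operatorname{sgn}\phi(x_m)\bigr)\prod_{i=1}^m|\phi(x_i)|$, so that block $i$ is governed by $\phi(x_i)$. Two of the three required properties are comparatively mechanical. The normalization $\|\zeta\|_1 = 1$ drops out of $\|\psi\|_1 = \|\phi\|_1 = 1$ together with the balance of $\phi$ (which makes the $2^m$ factor exactly cancel the per-block mass $1/2$). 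The pure-high-degree bound is $d_g d_f$: expanding $\zeta$ in the monomial basis, a monomial $\chi_S$ with $\langle\zeta,\chi_S\rangle \neq 0$ must survive against some set $T \subseteq [m]$ in the support of $\psi$ (hence $|T| \geq d_g$), and for each $i \in T$ the $i$-th block contributes a factor proportional to $\widehat{\phi}(S_i)$, which vanishes unless $|S_i| \geq d_f$; therefore $|S| \geq \sum_{i\in T}|S_i| \geq d_g d_f$. The blocks outside $T$ cost nothing, since the degree is already forced up by the blocks in $T$.

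The heart of the argument — and the step I expect to be the main obstacle — is establishing that the combined witness still correlates with $g\circ f$ above the threshold, $\langle\zeta, g\circ f\rangle > 1/3$. This is exactly where the hypothesis that $\phi$ certifies hardness at the near-trivial error $1/2 - 1/m^2$ (rather than, say, $1/3$) is indispensable: substituting $\phi$ into $\psi$ replaces each intended input bit of $g$ by a ``mixed'' version, and one must show that the cumulative distortion over all $m$ blocks does not drag the correlation down to $1/3$. The precise mechanism requires a careful Fourier-analytic analysis in the spirit of Sherstov's proof: the distortion contributed by each block is controlled by a quantity that the bound $\langle\phi,f\rangle > 1/2 - 1/m^2$ forces to be $O(1/m^2)$, so that the $m$-fold accumulation is only $O(1/m) = o(1)$, leaving $\langle\zeta, g\circ f\rangle$ above the $1/3$ threshold for all sufficiently large $m$ (with the finitely many small cases handled separately or absorbed into constants). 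Feeding the three properties of $\zeta$ back through LP duality then yields $\adeg(g\circ f) = \deg_{1/3}(g\circ f) \geq d_g\, d_f = \adeg(g)\cdot\deg_{1/2-1/m^2}(f)$, as claimed.
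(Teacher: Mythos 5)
The paper does not actually prove this lemma---it is imported as a black box from Sherstov's work on the intersection of halfspaces---so there is no in-paper argument to compare against. Your approach, dual block composition of an outer witness $\psi$ for $\adeg(g)$ with an inner witness $\phi$ for $\deg_{1/2-1/m^2}(f)$, is exactly the method of the cited source, and the two ``mechanical'' verifications are right: the normalization $\|\zeta\|_1=1$ follows from the balance of $\phi$, and your monomial-survival argument for pure high degree $\geq \adeg(g)\cdot\deg_{1/2-1/m^2}(f)$ is correct (write $\mu_{z_i}=|\phi|+z_i\phi$, expand the product, and note that the terms surviving against $\psi$ force at least $\adeg(g)$ blocks to contribute a nonzero Fourier coefficient of $\phi$, each of order $\geq \deg_{1/2-1/m^2}(f)$). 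One remark: the correlation step you defer as ``a careful Fourier-analytic analysis'' is more elementary than that. Writing $\zeta=\sum_{z}\psi(z)\prod_i\mu_{z_i}(x_i)$, where $\mu_{\pm}$ is $2|\phi|$ conditioned on $\{\operatorname{sgn}\phi=\pm 1\}$, one gets $\langle\zeta,g\circ f\rangle\geq\langle\psi,g\rangle-2\sum_z|\psi(z)|\Pr[\exists i: f(x_i)\neq z_i]\geq\langle\psi,g\rangle-4m\delta$, where $\delta<1/m^2$ is the $|\phi|$-mass on which $\operatorname{sgn}\phi$ disagrees with $f$; this is a union bound, not Fourier analysis, and it is the precise sense in which your ``$O(1/m^2)$ per block, $O(1/m)$ total'' heuristic is correct.

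The genuine gap is in your last step. LP duality only guarantees $\langle\psi,g\rangle>1/3$, with no quantitative slack: the gap $\langle\psi,g\rangle-1/3$ equals the amount by which degree-$(\adeg(g)-1)$ polynomials fail to achieve error $1/3$, which depends on $g$ and is \emph{not} bounded below in terms of $m$. So the composed correlation is only guaranteed to exceed $1/3-O(1/m)$, and ``for all sufficiently large $m$'' does not rescue the comparison against the $1/3$ threshold---no matter how large $m$ is, the $O(1/m)$ loss can swamp the slack. As written, your argument certifies $\deg_{1/3-O(1/m)}(g\circ f)\geq\adeg(g)\cdot\deg_{1/2-1/m^2}(f)$ rather than the stated bound on $\deg_{1/3}(g\circ f)$. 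This weaker conclusion is equivalent up to a constant factor in the degree (amplify error $1/3$ down to $1/3-O(1/m)<0.3$ at constant cost) and suffices for every use of the lemma in this paper, but closing the constant-free statement requires more care than your sketch supplies; it is the one place where the justification offered is wrong rather than merely omitted.
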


Combining Lemmas \ref{lemmaa} and \ref{lemmab}, we conclude:
\begin{corollary} \label{nomorecors}
Let $g \colon \B^m \to \B$ and $f \colon \B^n \to \B$ be any functions. Then $\adeg(g \circ \mathsf{GAPMAJ}_{10\log n} \circ f) \geq \adeg(g) \cdot \adeg(f)$.
\end{corollary}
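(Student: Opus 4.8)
The plan is simply to chain the two preceding lemmas, viewing $\mathsf{GAPMAJ}$ as an error-amplification gadget: \Cref{lemmaa} upgrades a bound on $\adeg(f)$ (error $1/3$) into a bound on the $(1/2-o(1))$-approximate degree of the slightly larger function $\mathsf{GAPMAJ}_{10\log n}\circ f$, which is precisely the shape of hypothesis that \Cref{lemmab} consumes. So the whole argument is a two-step substitution, with the only subtlety being that the error parameters produced and consumed by the two lemmas must be made to agree.

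Concretely, first I would set $h := \mathsf{GAPMAJ}_{10\log n}\circ f$, a Boolean function on $10n\log n$ bits, so that $g\circ\mathsf{GAPMAJ}_{10\log n}\circ f = g\circ h$. Applying \Cref{lemmaa} to $f$ gives $\deg_{1/2-1/n^2}(h)\ge\adeg(f)$. Next, applying \Cref{lemmab} with top function $g$ and inner function $h$ gives
\[
\adeg(g\circ h)\ \ge\ \adeg(g)\cdot\deg_{1/2-1/m^2}(h).
\]
It therefore suffices to establish $\deg_{1/2-1/m^2}(h)\ge\adeg(f)$, after which the two displayed inequalities compose to yield $\adeg(g\circ h)\ge\adeg(g)\cdot\adeg(f)$, which is the claim.

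The one place that needs care is reconciling the two error parameters: \Cref{lemmaa} certifies robustness at error $1/2-1/n^2$, whereas \Cref{lemmab} asks for robustness at error $1/2-1/m^2$. Since $\deg_\eps(\cdot)$ is non-increasing in $\eps$, when $m\le n$ we have $1/2-1/m^2\le 1/2-1/n^2$ and hence $\deg_{1/2-1/m^2}(h)\ge\deg_{1/2-1/n^2}(h)\ge\adeg(f)$, exactly as needed; and in the regime relevant to our later \LCd\ applications $m=\poly(n)$, so one may freely enlarge the number of $\mathsf{GAPMAJ}$ copies (\Cref{lemmaa} continues to hold with $n$ replaced by any polynomially larger quantity, since additional copies only drive the output error closer to $1/2$) to guarantee the gadget's output error is at most $1/2-1/m^2$. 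I expect this bookkeeping about error parameters — together with tracking the size and depth blow-up that inserting the $\mathsf{GAPMAJ}$ gadget costs, which is what actually matters when this corollary is later fed into the linear-size circuit construction behind \thm{lower} — to be the only real obstacle; the derivation itself is immediate.
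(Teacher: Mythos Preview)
Your approach is exactly the paper's: the corollary is stated as an immediate consequence of \Cref{lemmaa} and \Cref{lemmab}, and the paper gives no further argument. You have faithfully spelled out the two-step substitution.

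You are in fact being \emph{more} careful than the paper. The error-parameter mismatch you flag---\Cref{lemmaa} delivers robustness at $1/2-1/n^2$ while \Cref{lemmab} demands $1/2-1/m^2$---is a genuine wrinkle that the paper silently ignores. Your observation that the monotonicity of $\deg_\eps$ in $\eps$ handles the case $m\le n$, and that for general $m=\poly(n)$ one can enlarge the $\mathsf{GAPMAJ}$ gadget to $10\log(\max\{m,n\})$ copies without affecting any downstream size or depth bounds, is the correct fix. Indeed, in the paper's own application (Equation~\eqref{cidef}) the outer function has $m=\sqrt{n}$ inputs while the inner function has only $\sqrt{n}/(10\log n)$ inputs, so the clean $m\le n$ case does not literally apply there either; the paper is implicitly relying on the slack you describe.
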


We are now ready to prove \Cref{thm:lower}, which is restated at the beginning of this section.

\begin{proof}[Proof of \Cref{thm:lower}]
	Let $\ell \geq 1$ be any constant integer to be specified later (ultimately, we will
	set $\ell=\Theta(\sqrt{d})$, where $d$ is as in the statement of the theorem). \cite{BKT18} exhibit a circuit family $C^* \colon \B^n \to \B$ of depth at most $3\ell$, size at most $n^2$, and approximate degree satisfying
	$\adeg(C^*) \geq D$ for some $D\geq \tOmega(n^{1-2^{-\ell}})$. 
	We need to transform this quadratic-size circuit into a circuit $C$ of \emph{linear} size, without substantially reducing its approximate degree, or substantially increasing its depth (in
	particular, the depth of $C$ should be at most $d$).

	To accomplish this, we apply the following iterative transformation. 
	At each iteration $i$, we produce a new circuit $C^i \colon \B^n \to \B$ of linear size, such that $\adeg(C^i)$ gets closer and closer to $\adeg(C)$ as $i$ grows. 
	Our final circuit will be $C:=C^{\ell}$.

	$C^1$ is defined to simply be $\OR_n$, which is clearly in \LCone. 

	The transformation from $C^{i-1}$ into $C^{i}$ works as follows. $C^i$ feeds $\sqrt{n}$ copies of $C^{i-1}_{\sqrt{n}/(10 \log n)}$ into the circuit $C^*_{\sqrt{n}} \circ \textsf{GAPMAJ}_{10\log n}$. Here, $C^{i-1}_{k}$
	denotes the function $C^{i-1}$ constructed in the previous iteration, and defined on $k$ inputs; similarly,  $C^*_{k} \colon \B^{k} \to \B^n$ refers to the function
	$C^*$ constructed by \cite{BKT18}, defined on $k$ inputs. That is:

	\begin{equation} \label{cidef}
		C^i = C^*_{\sqrt{n}} \circ \textsf{GAPMAJ}_{10\log n} \circ C^{i-1}_{\sqrt{n}/(10 \log n)}.
	\end{equation}

	Observe that $C^i$ is a function on $\sqrt{n} \cdot 10 \log n \cdot (\sqrt{n}/(10\log n)) = n$ bits.
	We now establish the following two lemmas about $C^i$.

	\begin{lemma} \label{lem:circuitlemma}
		$C^i$ is computed by a circuit of depth at most $(3\ell+3) \cdot i$, and size at most $2\cdot i \cdot n$. 
	\end{lemma}
	\begin{proof}
		Clearly this is true for $i=1$, since $C^1$ is computed by a circuit of size and depth 1.
		Assume by induction that it is true for $i-1$. Recalling that  $\textsf{GAPMAJ}_{10\log n}$ is computed by a circuit
		of size $O(\log^2 n)$ and depth 3, and $C^*_{\sqrt{n}}$ is computed by a circuit of size $n$ and depth $3\ell$,
		it is immediate from Equation \eqref{cidef} that $C^i$ is computed by a circuit 
		satisfying the following properties:
		\begin{itemize}
		\item The depth is at most $3 \ell + 3 + (3 \ell + 3)(i-1) = (3\ell + 3)i$.
		\item The size is at most $n + O(\sqrt{n} \cdot \log^2 n) + \left(\sqrt{n} \cdot 10 \log n\right) \cdot \left(2 \cdot (i-1) \cdot \sqrt{n}/(10 \log n)\right)$.
		For large enough $n$, this is at most  
		$2n + 2 \cdot (i-1) \cdot n = 2 \cdot i \cdot n$.
		\end{itemize}
	\end{proof}

	\begin{lemma} \label{lem:adeglemma}
		For $i > 1$,
		$\adeg(C^i) \geq \Omega\left( \adeg(C^*_{\sqrt{n}}) \cdot \adeg(C^{i-1}_{\sqrt{n}/(10 \log n)})\right).$
	\end{lemma}
	\begin{proof}
		Immediate from \Cref{nomorecors}.
	\end{proof}

	Since $\adeg(C^1) = \Omega(\sqrt{n})$,
	repeated application of \lem{adeglemma} implies that $\adeg(C^2) = \Omega(\sqrt{D} \cdot n^{1/4})$,
	$\adeg(C^3) = \Omega\left(\sqrt{D} \cdot (\sqrt{D} \cdot n^{1/4})^{1/2}\right) = \Omega(D^{3/4} \cdot n^{1/8})$,
	and in general, $\adeg(C^i) = \Omega\left(D^{1-2^{-i}} \cdot n^{2^{-i}}\right)$.

	Setting $i=\ell$, we obtain a circuit $C^{\ell} \colon \B^n \to \B$ with the following properties:

	\begin{itemize}
		\item By \lem{circuitlemma}, $C^{\ell}$ has size at most $2 \ell n$ and depth at most $d:=2 \ell^2$.
		\item There is a constant $c_0$ such that $C^{\ell}$ has approximate degree at least  $\Omega\left(c_0^{\ell} \cdot D^{1-2^{-\ell+1}} \cdot n^{2^{-\ell}}\right) \geq \Omega(c_0^{\ell} \cdot n^{1-2^{-\ell+1/2}})$.
	\end{itemize}

	Hence, for any constant value of $d=2\ell^2$, we have constructed a circuit of depth $d$, size $O(n)$, and approximate degree at least $\Omega(n^{1-2^{-\Omega(\sqrt{d})}})$,
	as required by the theorem.
\end{proof}

\subsection{Sublinear-size circuits of arbitrary depth}
\label{sec:sublinear}

\thm{quantum} and \Cref{thm:composition} also allow us to prove sublinear quantum query complexity and approximate degree upper bounds for arbitrary circuits of sublinear size. 

\begin{theorem} \label{thm:sublinearadeg}
Let $h : \B^n \to \B$ be computed by a layered circuit of size $s \le n$. Then $h$ has quantum query complexity $Q(h) = \tilde{O}(\sqrt{n s})$ and approximate degree $\adeg(h) = O(\sqrt{ns})$.
\end{theorem}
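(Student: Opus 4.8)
The plan is to reduce the statement about an arbitrary layered circuit of size $s \le n$ to the depth-$2$ composition theorems already established, by peeling off the bottom layer of gates and recursing on the top. Suppose $h$ is computed by a layered circuit of size $s$ and depth $d$. As in the proofs of \thm{quantumLC} and \thm{upper}, we may assume (by negating $h$ if necessary) that the bottom layer consists of $\AND$ gates, so $h$ is a depth-$2$ circuit whose bottom gates are $\AND$s on subsets of the (possibly negated) $n$ input bits, and whose top gate $f$ is a layered circuit of size at most $s$ and depth $d-1$ on at most $s$ inputs. First I would bound $Q(f)$ and $\adeg(f)$ crudely: any function on $s$ bits has both complexity measures at most $s$ (indeed at most $s$, since a depth-$(d-1)$ size-$s$ circuit has at most $s$ relevant input wires, and trivially $Q(f), \adeg(f) \le \text{(number of inputs)} \le s$). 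Plugging into \thm{quantum} gives $Q(h) = O\!\bigl(\sqrt{Q(f)\cdot n}\cdot \log^2(sn)\bigr) = \tilde O(\sqrt{ns})$, and plugging into \thm{composition} gives $\adeg(h) = O\!\bigl(\sqrt{\adeg(f)\cdot n\log s}\,\bigr) = \tilde O(\sqrt{ns})$.

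To remove the logarithmic factor in the approximate-degree bound and get the clean $O(\sqrt{ns})$ claimed, I would instead route through \thm{composition-full}: since $f$ acts on at most $s$ inputs, there is a polynomial of degree $s$ with sum of coefficients at most $2^s$ exactly computing $f$, so $\mu_{1/3}(f) \le 2^s$, hence $\log\mu_{1/3}(f) = O(s)$. Then \thm{composition-full} directly yields $\adeg(h) = O\!\bigl(\sqrt{n\log\mu_{1/3}(f)} + \sqrt{n\log 3}\,\bigr) = O(\sqrt{ns})$, with no $\log$ overhead. This mirrors the "$\eps \le 2^{-s}$" case of the proof of \thm{upper}, specialized to $\eps = 1/3$, and is really the point: once $s \le n$, the exact-representation bound for $f$ is good enough, so no recursion on depth is even needed. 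For the quantum bound the $\log^2(sn) = \tilde O(1)$ factor is harmless since the theorem only claims $\tilde O(\sqrt{ns})$.

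There is essentially no serious obstacle here; the only thing to be careful about is the reduction to the depth-$2$ normal form. I would spell out that a layered size-$s$ depth-$d$ circuit, viewed with its bottom layer of $\AND$ gates (each of which may read negated inputs, exactly the model of \fig{composition}), has a top part that is a legitimate Boolean function $f$ on the $m \le s$ wires feeding into it — and that $m \le s$ because there are at most $s$ gates total, so at most $s$ bottom-layer gates, so at most $s$ wires into the top. The case $d = 1$ (where $h$ is a single $\AND$ or $\OR$ of at most $n$ inputs) is the base case and follows from $\adeg(\AND_n), \adeg(\OR_n) = O(\sqrt n)$ and $Q(\AND_n) = \Theta(\sqrt n)$, both already cited in the excerpt. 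If one wanted the bound uniformly in $d$ one could also just invoke the trivial $Q(f), \adeg(f) \le s$ bound at the top rather than recursing, so the argument is genuinely one-step.
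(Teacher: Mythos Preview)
Your proposal is correct and follows essentially the same route as the paper: peel off the bottom $\AND$ layer, bound $Q(f)\le s$ and $\log\mu(f)=O(s)$ via the exact multilinear representation of $f$ on $\le s$ inputs, then invoke \thm{quantum} and \thm{composition-full} respectively. The paper's proof is the one-paragraph version of your second pass (going straight to the $\mu$ bound rather than first trying $\adeg(f)\le s$), and it uses $\mu_0$ in place of your $\mu_{1/3}$, which is immaterial.
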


\begin{proof}
Without loss of generality, a function $h$ computed by a layered circuit of size $s\leq n$ can be written as a depth-2 circuit with a function $f:\B^s \to \B$ as the top gate and $\AND$ gates at the bottom. (The case where the bottom level consists of $\OR$ gates can be handled by negating the function.) The quantum query upper bound then follows immediately from \thm{quantum}, as $Q(f) \le s$. Moreover, for any function $f$, we have $\log \mu_0(f) = O(s)$, since the trivial polynomial obtained by adding all conjunctions over yes-inputs of $f$ satisfies this. Hence from \thm{composition} we have $\adeg(h) = O(\sqrt{ns})$.
\end{proof}

\section{Applications to agnostic PAC learning}
\label{sec:agnostic}

Our new upper bounds on the approximate degree of \LC circuits yield new subexponential time learning algorithms in the agnostic model. In this section, we provide background for, and the proof of, our main learning result restated below.

\learningresult*

\paragraph{PAC and agnostic learning models.}

In the classic Probably Approximately Correct (PAC) learning model of Valiant~\cite{Valiant:1984}, we have access
to an unknown function $f:\B^n \to \B$ from a known class of functions $\C$, called the concept class, 
through samples $(x,f(x))$, where $x$ is drawn from an unknown distribution $\D$ over $\B^n$. 
The goal is to learn a hypothesis $h:\B^n \to \B$, such that with probability $1-\delta$ (over the choice of samples), 
$h(x)$ has (Boolean) loss at most $\eps$ with respect to $\D$. Here, the Boolean loss $\mathsf{err}_{\D}(h, f)$ of $h$ is defined to be
$\Pr_{x \sim \D}[h(x) \neq f(x)] \leq \eps$.

Since the learning algorithm does not know $\D$ and is required to work for all $\D$, this model is also called the distribution-independent (or distribution-free) PAC model. 
Unfortunately, in the distribution-free setting, very few concept classes are known to be PAC learnable in 
polynomial time or even subexponential time (i.e., time $2^{n^{1-\delta}}$ for some constant $\delta>0$). 

Kearns, Schapire, and Sellie~\cite{KSS:1994} then proposed the more general (and challenging) agnostic PAC learning model, which removes the assumption that examples are determined by a function at all, let alone a function in the concept class $\C$. The learner now knows nothing about how examples are labeled, but is only required to learn a hypothesis $h$ that is at most $\eps$ worse than the best possible classifier from the class $\C$.

We now describe the agnostic PAC model more formally. Let $\D$ be any distribution on $\B^n \times \B$, and let $\C$ be a concept class, i.e., a set of Boolean
functions on $\B^n$. Define the error of $h \colon \B^n \to \B$ to be $\mathsf{err}_{\D}(h) := \Pr_{(x, b) \sim \D}[h(x) \neq y]$,
and define $\mathsf{opt} := \min_{c \in \C} \mathsf{err}_{\D}(c)$.  
We say that $\C$ is agnostically learnable in time $T(n, \eps, \delta)$ if there exists an
algorithm which takes as input $n$ and $\delta$ and has access to an example oracle $\mathsf{EX}(\D)$,
and satisfies the following properties. It runs in time at most $T(n, \eps, \delta)$, and with probability at least $1-\delta$,
it outputs a hypothesis $h$ satisfying $\mathsf{err}_{\D}(h) \leq \mathsf{opt}+\eps$. We say that the learning
algorithm runs in \emph{subexponential time} if there is some constant $\eta > 0$ such that for any
constants $\eps$ and $\delta$, the running time $T(n, \eps, \delta) \leq 2^{n^{1-\eta}}$ for sufficiently large $n$. 

The agnostic model is able to capture a range of realistic scenarios that do not fit within the standard PAC model.
In many situations it is unreasonable to know exactly that $f$ belongs to some class $\C$, since $f$ may be computed
by a process outside of our control. 
For example, the labels of $f$ may be (adversarially) corrupted by noise, resulting in a function that is no longer in $\C$. 
Alternatively, $f$ may be ``well-modeled,''
but not \emph{perfectly} modeled, by some concept in $\C$.
In fact, the agnostic learning model even allows the input sample to not be described by a function $f$ at all, 
in the sense that the distribution over the sample may have both $(x,0)$ and $(x,1)$ in its support.
This is also realistic when the model being used does not capture all of the variables on which the true function depends.

\subsection{Related work}
\label{sec:detailed}
Since the agnostic PAC model generalizes the standard PAC model, it is (considerably) harder to learn a concept class in this model. Consequently, even fewer concept classes are known to be agnostically learnable, even in subexponential time. 
For example, as mentioned in \Cref{sec:introst}, the best known algorithm for agnostically learning the simple concept class of disjunctions, which are size-$1$, depth-$1$ Boolean circuits, 
runs in time\footnote{For simplicity, we suppress runtime dependence on $\eps$ and $\delta$.} $2^{\tO(\sqrt{n})}$~\cite{kkms}. In contrast, they can be learned in polynomial time in the PAC model~\cite{Valiant:1984}. Meanwhile, several hardness results are known for agnostically learning disjunctions, including NP-hardness for proper learning~\cite{KSS:1994}, and that even improper learning is as hard as PAC learning DNF~\cite{lbw}.

While it is an important and interesting problem to agnostically learn more expressive classes of circuits in subexponential time, 
relatively few results are known. The best known general result is that all de Morgan formulas (formulas over the gate set of $\AND$, $\OR$, and $\NOT$ gates) of size $s$ can be learned in time $2^{\tO(\sqrt{s})}$~\cite{kkms, Rei11}. 
In particular, linear-size formulas (i.e., $s=\Theta(n)$) can be learned in time $2^{\tilde{O}(\sqrt{n})}$, which is the same as the best known upper bound for disjunctions. 

Even in the relatively easier PAC model, only a small number of circuit classes are known to be learnable in subexponential time. For the well-studied class of polynomial-size DNFs, or depth-2 \AC circuits, we have an algorithm running in time $2^{\tO(n^{1/3})}$~\cite{klivansservediodnfs}, and we know that new techniques will be needed to improve this bound~\cite{razborovsherstov}. Little is known about larger subclasses of \AC, other than a recent paper that studied depth-3 \AC circuits with top fan-in $t$, giving a PAC learning algorithm of runtime $2^{\tilde{O}(t\sqrt{n})}$~\cite{DRG17}, which is only subexponential when $t \ll \sqrt{n}$.

Given the current state of affairs, a subexponential-time algorithm to learn all of \AC in the standard PAC model would represent significant progress. Indeed, for $d>2$,
the fastest known PAC learning algorithm for depth-$d$ \AC circuits runs in time $2^{n-\Omega(n/\log^{d-1}n)}$ \cite{nontriviallearn}, which is quite close to the trivial runtime of $2^n$.

We view our new results for learning \LC and sublinear-size \AC circuits as intermediate steps toward this goal. We clarify that our results are incomparable to the known results about agnostically learning de Morgan formulas. A simple counting argument \cite{nisanonline} shows that there 
are linear-size DNFs that are not computable by formulas of size $o(n^2/\log n)$, so one cannot learn even depth-2 \LC
in subexponential time via the learning algorithm for de Morgan formulas. On the other hand, there are linear-size de Morgan formulas (of superconstant depth) that are not in \LC, or even \AC.

 \label{s:otherprior}

Motivated by the lack of positive results in the distribution-free PAC learning model, \cite{nontriviallearn} 
study algorithms for learning various circuit classes, with the goal of ``only'' achieving a \emph{non-trivial savings} over trivial $2^n$-time algorithms.
By achieving non-trivial savings, \cite{nontriviallearn} mean a runtime of $2^{n-o(n)}$; prior work
had already connected non-trivial learning algorithms to circuit lower bounds \cite{klivans2013constructing, oliveira2016conspiracies}. 
The subexponential runtimes
we achieve in our work are significantly faster than the $2^{n-o(n)}$-time algorithms of \cite{nontriviallearn};
in addition, our algorithms work in the challenging agnostic setting, rather than just the PAC setting. 
On the other hand, the algorithms of \cite{nontriviallearn} apply to more general circuit classes than \LC. 

As mentioned previously, \cite{klivansservediodnfs} gave a $2^{\tO(n^{1/3})}$-time algorithm for PAC learning
polynomial size DNF formulas; their algorithm is based on a $\tO(n^{1/3})$ upper bound
on the \emph{threshold degree} of such formulas. 
In unpublished work, 
\cite{talpersonal} has observed that the argument in \cite[Theorem 4]{klivansservediodnfs} can be generalized to show
that for constant $d\geq 2$, any depth-$d$ \LC circuit has threshold
degree at most
 $\tO\bigl(n^{1-1/(3 \cdot 2^{d-3})}\bigr)$. This in turn yields a PAC learning algorithm for \LC running in time $\exp\bigl(\tO\bigl(n^{1-1/(3 \cdot 2^{d-3})}\bigr)\bigr)$. 
 Note that this is in the standard PAC model, not the agnostic PAC model.
 As mentioned in  \Cref{sec:intro}, prior to our work,
 no subexponential time algorithm was known for agnostically learning even \LCthree
 in subexponential time.

\subsection{Linear regression and the proof of \thm{learning}}
\label{s:techniques}

Our learning algorithm applies the well-known \emph{linear regression} framework for agnostic learning that was
introduced by \cite{kkms}. The algorithm of \cite{kkms} 
works whenever there is a ``small'' set of ``features'' $\F$ (where each feature is a function mapping $\B^n$ to $\R$) such that each concept in the concept class $\mathcal{C}$
can be approximated to error $\eps$ in the $\ell_{\infty}$ norm via a linear combination of the features in $\F$.
Roughly speaking, given a sufficiently large sample $S$ from an (unknown) distribution over $\B^n \times \B$, the algorithm
finds a linear combination $h$ of the features of $\F$ that minimizes the empirical $\ell_1$ loss, i.e., 
$h$ minimizes $\sum_{(x_i, b_i)\in S} |h(x_i)-b_i|$ among all linear combinations of features from $\F$. 
An (approximately) optimal $h$ can be found in time $\poly(\F)$ by solving a linear program of size 
 $\poly(|\F|, |S|)$. 
 
 \begin{lemma}[\cite{kkms}] \label{lem:kkms}
 Let $\F$ be a set of functions mapping $\B^n$ to $\R$, and assume that each $\phi_i \in \mathcal{F}$
 is efficiently computable, in the sense that for any $x \in \B^n$, $\phi_i(x)$ can be computed in time $\poly(n)$.
 Suppose that for every $c \in \mathcal{C}$, there exist
 coefficients $\alpha_i \in \mathbb{R}$ such that for all
 $x \in \B^n$, $|c(x) - \sum_{\phi_i \in \F} \alpha_i \cdot \phi_i(x)|\leq \eps$. 
 Then there is an algorithm that takes as input a sample $S$ of size $|S|=\poly(n, |\F|, 1/\eps, \log(1/\delta))$ from an unknown distribution $\D$,
 and in time $\poly(|S|)$ outputs a hypothesis $h$ such that, with
 probability at least $1-\delta$ over $S$, $\Pr_{(x, b) \sim \D}[h(x) \neq b] \leq \eps$. 
 \end{lemma}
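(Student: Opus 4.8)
This is the $L_1$ (``linear'') regression theorem of Kalai, Klivans, Mansour and Servedio, so the plan is to recall and sketch their argument, yielding a hypothesis $h$ with $\mathsf{err}_\D(h) \le \opt + \eps$, where $\opt = \min_{c\in\C}\mathsf{err}_\D(c)$. The algorithm draws a sample $S$ of $m = \poly(n, |\F|, 1/\eps, \log(1/\delta))$ labeled examples $(x,b)\sim\D$; solves a linear program over the $|\F|$ real coefficients $\alpha_i$ (one slack variable per sample point linearizes each absolute value) to find the combination $p = \sum_i \alpha_i \phi_i$ minimizing the empirical $L_1$ loss $\frac1m\sum_{(x,b)\in S}|b - p(x)|$; clips $p$ pointwise to $[0,1]$, obtaining $\tilde p$; and outputs $h(x) := \mathbf{1}[\tilde p(x) \ge \theta^\ast]$, where $\theta^\ast \in [0,1]$ minimizes the empirical $0$-$1$ loss of $\mathbf{1}[\tilde p \ge \theta]$ over $S$ (it suffices to try the $O(m)$ values of $\tilde p$ occurring in $S$). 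The running time is dominated by one LP with $|\F|$ variables and $O(m)$ constraints together with $\poly(n)$-time evaluations of each $\phi_i$ on each of the $m$ sample points, so it is $\poly(|S|)$ as claimed.

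To prove correctness I would combine three facts. First, some linear combination is good in $L_1$: if $c^\ast \in \C$ attains $\opt$ and $q^\ast$ is its hypothesized $\ell_\infty$ $\eps$-approximation from $\mathrm{span}(\F)$, then $\mathbb{E}_{(x,b)}|b - q^\ast(x)| \le \mathbb{E}|b - c^\ast(x)| + \mathbb{E}|c^\ast(x) - q^\ast(x)| \le \opt + \eps$, using $\mathbb{E}|b - c^\ast(x)| = \mathsf{err}_\D(c^\ast) = \opt$ (as $b, c^\ast$ are Boolean) and $|c^\ast - q^\ast| \le \eps$ pointwise. Second, uniform convergence: the class of linear combinations of $\F$ clipped to a fixed bounded range --- which contains both $\tilde p$ and $q^\ast$ (the latter is range-bounded because it $\eps$-approximates a Boolean function) --- lies inside an $|\F|$-dimensional function space, hence has pseudo-dimension $\le |\F|$, so $m = \poly(|\F|, 1/\eps, \log(1/\delta))$ samples make empirical and true $L_1$ loss uniformly within $\eps$ with probability $\ge 1 - \delta/2$; chaining this with LP-optimality and the fact that clipping to a range containing $b$ only decreases $|b - \cdot|$ shows the true $L_1$ loss of $\tilde p$ is at most $\opt + O(\eps)$. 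Third, rounding from $L_1$ to $0$-$1$ loss: for any $v \in [0,1]$, $b \in \{0,1\}$, and a uniformly random threshold $\theta \in [0,1]$, one computes $\mathbb{E}_\theta\,\mathbf{1}[\mathbf{1}[v \ge \theta] \ne b] = |b - v|$, whence $\mathbb{E}_\theta\Pr_{(x,b)}[\mathbf{1}[\tilde p(x) \ge \theta] \ne b] = \mathbb{E}|b - \tilde p(x)| \le \opt + O(\eps)$; so some fixed threshold has $0$-$1$ loss $\le \opt + O(\eps)$, and by one further uniform-convergence estimate over the $O(m)$ candidate thresholds the empirically optimal $\theta^\ast$ does essentially as well. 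Together these yield $\mathsf{err}_\D(h) \le \opt + O(\eps)$ with probability $\ge 1 - \delta$, and rescaling $\eps$ by a constant at the outset gives the stated bound.

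The main obstacle is the generalization step. One has to argue that restricting to range-clipped hypotheses costs nothing, and then bound the sample complexity of $L_1$ regression over $\mathrm{span}(\F)$ \emph{purely in terms of $|\F|$}: the raw LP optimum can have very large coefficients --- the hypothesis of the lemma promises an approximating combination but no bound on its $\ell_1$-weight --- so the uniform-convergence estimate must come from the bounded-range loss class (e.g.\ via its pseudo-dimension $O(|\F|)$, or Rademacher complexity $O(\sqrt{|\F|/m})$) rather than from any coefficient bound. The other ingredient needing care is the $L_1$-to-$0$-$1$ conversion: the random-threshold identity above is precisely what avoids the multiplicative factor of $2$ that naive rounding would cost.
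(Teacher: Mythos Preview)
The paper does not prove this lemma; it is quoted as a black-box result from \cite{kkms}, with only a one-paragraph informal description of the algorithm (minimize empirical $\ell_1$ loss over linear combinations of the features via a linear program of size $\poly(|\F|,|S|)$). Your sketch matches that description and is the standard KKMS argument, so there is no paper proof to compare against.

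One small remark: you correctly prove $\mathsf{err}_\D(h)\le \opt+O(\eps)$, whereas the lemma as stated in the paper literally reads $\Pr[h(x)\neq b]\le \eps$; the missing ``$\opt+$'' is evidently a typo in the paper's statement (the surrounding text makes clear the intended agnostic guarantee), and your version is the right one.
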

 
 A  feature set $\mathcal{F}$ that is commonly used in applications of \lem{kkms}
 is the set of all monomials whose degree is at most some bound $d$.
 Indeed, an immediate corollary of \lem{kkms} is the following.
 
 \begin{corollary} \label{cor:thecor} \label{prop:agnosticfromadeg}
 Suppose that for every $c \in \mathcal{C}$, the $\eps$-approximate degree of $c$ is at most $d$. Then for every $\delta > 0$, there is an algorithm running in time $\poly(n^{d}, 1/\eps, \log(1/\delta))$ that agnostically
 learns $\mathcal{C}$ to error $\eps$ with respect to any (unknown) distribution $\D$ over $\B^n \times \B$.
 \end{corollary}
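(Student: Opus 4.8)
\textbf{Proof proposal for \Cref{cor:thecor}.}

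The plan is to instantiate \lem{kkms} with the feature set $\F$ consisting of all monomials $\prod_{i \in T} x_i$ over subsets $T \subseteq [n]$ with $|T| \le d$, so that a linear combination of features is exactly a real multilinear polynomial of degree at most $d$. First I would note that $|\F| = \sum_{k=0}^{d} \binom{n}{k} \le (n+1)^d = n^{O(d)}$, and that each monomial is trivially evaluable in time $\poly(n)$ on any point of $\B^n$, so the efficient-computability hypothesis of \lem{kkms} holds. Next, the hypothesis that every $c \in \C$ has $\eps$-approximate degree at most $d$ means precisely that there is a real polynomial $p$ of degree at most $d$ with $|c(x) - p(x)| \le \eps$ for all $x \in \B^n$; writing $p$ in the monomial basis expresses it as $\sum_{\phi_i \in \F} \alpha_i \phi_i$, which is exactly the uniform $\ell_\infty$-approximation condition required by \lem{kkms} (with the same $\eps$). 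I would remark here that working over $\B^n = \{0,1\}^n$ lets us assume $p$ is multilinear without loss of generality, since raising a $0/1$ variable to a higher power does not change its value, so the monomials in $\F$ suffice.

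Having verified the hypotheses, I would invoke \lem{kkms} to conclude that there is an algorithm which, given a sample $S$ of size $\poly(n, |\F|, 1/\eps, \log(1/\delta)) = \poly(n^d, 1/\eps, \log(1/\delta))$ drawn from the unknown distribution $\D$ over $\B^n \times \B$, runs in time $\poly(|S|) = \poly(n^d, 1/\eps, \log(1/\delta))$ and outputs a hypothesis $h$ with $\Pr_{(x,b)\sim\D}[h(x) \ne b] \le \eps$ with probability at least $1-\delta$. Since $\opt = \min_{c \in \C} \mathsf{err}_\D(c)$ and each such $c$ is $\eps$-approximated in $\ell_\infty$ by a degree-$d$ polynomial, the guarantee $\mathsf{err}_\D(h) \le \eps$ actually yields $\mathsf{err}_\D(h) \le \opt + \eps$ after the standard adjustment of constants (or one states the corollary, as written, with the error bound $\eps$ absorbing the slack); I would simply cite that the bound in \lem{kkms} already accounts for this, as \cite{kkms} proves the agnostic guarantee directly. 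Combining the sample-size and running-time bounds gives the claimed $\poly(n^d, 1/\eps, \log(1/\delta))$ running time, completing the proof.

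There is essentially no obstacle here: the corollary is a one-line specialization of \lem{kkms}, and the only point requiring a word of care is the translation between "$\eps$-approximate degree at most $d$" and "uniformly $\ell_\infty$-approximable to error $\eps$ by a linear combination of degree-$\le d$ monomials," which is immediate from the definition of $\deg_\eps(\cdot)$ together with multilinearity over $\B^n$.
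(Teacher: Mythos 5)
Your proposal is correct and matches the paper's own (implicit) argument exactly: the paper derives this corollary from \lem{kkms} by taking $\F$ to be the set of all monomials of degree at most $d$, which is precisely what you do, including the counting bound $|\F| = n^{O(d)}$ and the observation that $\eps$-approximate degree at most $d$ is exactly the required $\ell_\infty$-approximation hypothesis. Your side remark about reconciling the error guarantee with the agnostic bound $\opt + \eps$ is also handled appropriately by deferring to the original statement in \cite{kkms}.
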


The best known algorithms for agnostically learning disjunctions and de Morgan formulas of linear size~\cite{kkms, Rei11} combine \cor{thecor}
with known approximate degree upper bounds for disjunctions and de Morgan formulas of bounded size. 
We use the same strategy: our results for agnostic learning (\thm{learning}) follow from combining
\cor{thecor} with our new approximate degree upper bounds. Specifically, \thm{approxdegLC} shows that the $\eps$-approximate degree of any $\LCd$ circuit is at most $\tilde{O}(n^{1-2^{-d}} \log^{2^{-d}}(1/\eps))$, yielding our new result for agnostically learning \LC circuits. \thm{upper} shows that \AC circuits of size $s$ have $\eps$-approximate degree $\tilde{O}(\sqrt{n} s^{1/2 - 2^{-d}}(\log(1/\eps))^{2^{-d}})$, giving our new result for learning sublinear-size \AC.

Furthermore, since our upper bound on the approximate degree of \LC circuits is nearly tight, new techniques will be needed to significantly surpass our results. In particular, new techniques will be needed to agnostically learn \emph{all} of \LC 
in subexponential time. 
\Cref{thm:lower} implies that if $\mathcal{F}$ is the set of all monomials of at most a given degree $d$, then one cannot
use \cor{thecor} to learn \LCd in time less than $2^{n^{1-2^{-\Omega(\sqrt{d})}}}$. However, 
 standard techniques \cite{patmat} automatically generalize the lower bound of \Cref{thm:lower} from the feature set of low-degree monomials
 to \emph{arbitrary feature sets}. Specifically, we obtain the following theorem.
 
 \label{sec:agnostics}
 \begin{theorem} \label{thm:genF}
 Let $\mathcal{C}=\LCd$, and let $\F^*$ denote the minimum size set of features such that
each $c \in \C$ can be approximated point-wise to error $1/3$ by a linear combination of the features in $\F$. 
Then  $|\F^*| \geq 2^{n^{1-2^{-\Omega(\sqrt{d})}}}$.
 \end{theorem}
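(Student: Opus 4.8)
The plan is to invoke the dimension‑complexity form of the pattern matrix method \cite{patmat}, which upgrades the approximate degree lower bound of \Cref{thm:lower} into a lower bound on the dimension of \emph{any} space of features approximating $\LCd$ to error $1/3$ in $\ell_\infty$. First I would record the elementary reduction: if $\F$ is any feature set with the property in the statement, then the evaluation matrix $M_{\C}$ — the real matrix with rows indexed by $x \in \B^n$, columns indexed by $c \in \C = \LCd$, and $(x,c)$ entry $c(x)$ — is entrywise within $1/3$ of a matrix of rank at most $|\F|$. Indeed, writing each $c$ as $\sum_{\phi \in \F}\alpha^c_\phi\,\phi$ up to pointwise error $1/3$ and taking the vectors $\bigl(\sum_{\phi}\alpha^c_\phi\,\phi(x)\bigr)_{x}$ as columns yields such an approximant, since these vectors all lie in the span of $\{(\phi(x))_x : \phi \in \F\}$, a space of dimension $\le |\F|$. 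Writing $\widetilde{\mathrm{rank}}_{1/3}(\cdot)$ for the $1/3$-approximate rank (the minimum rank of a matrix at $\ell_\infty$-distance at most $1/3$), this gives $|\F^*| \ge \widetilde{\mathrm{rank}}_{1/3}(M_{\C})$, so it suffices to exhibit a column-submatrix of $M_{\C}$ whose $1/3$-approximate rank is at least $2^{n^{1-2^{-\Omega(\sqrt d)}}}$.

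To produce that submatrix I would use the hard circuit of \Cref{thm:lower}. Fix a large enough absolute constant $\ell$ and apply \Cref{thm:lower} with $n_0 = \lfloor n/\ell\rfloor$ bits: there is $f \in \LCd$ on $n_0$ bits with $\adeg(f) \ge D$ for some $D = \Omega\bigl(n^{1-2^{-\Omega(\sqrt d)}}\bigr)$. Let $M_f$ be the $(\ell,n_0)$-pattern matrix of $f$: its columns are indexed by pairs $(S,w)$ with $S \in [\ell]^{n_0}$ and $w \in \B^{n_0}$, the $(S,w)$ column being the function $f_{S,w}(x) = f\bigl(x_{1,S_1}\oplus w_1,\ldots,x_{n_0,S_{n_0}}\oplus w_{n_0}\bigr)$ on the $\ell n_0 \le n$ variables $\{x_{i,j}\}_{i\in[n_0],\,j\in[\ell]}$. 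Each $f_{S,w}$ is computed by the circuit for $f$ with its inputs rewired to a subset of the $x_{i,j}$ and possibly negated; negations are free at the input level, so $f_{S,w}$ has size $O(n)$ and depth $d$, and after adjoining the unused variables $f_{S,w} \in \LCd$. Hence $M_f$ is a column-submatrix of $M_{\C}$, and therefore $|\F^*| \ge \widetilde{\mathrm{rank}}_{1/3}(M_{\C}) \ge \widetilde{\mathrm{rank}}_{1/3}(M_f)$.

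It remains to show $\widetilde{\mathrm{rank}}_{1/3}(M_f) \ge 2^{\Omega(D)}$, and this is exactly the dimension‑complexity consequence of the pattern matrix method \cite{patmat} for a sufficiently large constant block size $\ell$: a rank-$r$ entrywise-$1/3$ approximant $M'$ to $M_f$ has approximate trace norm $\|M'\|_{\mathrm{tr}} \le \sqrt r\cdot\|M'\|_F$, which is small relative to the matrix dimensions, whereas the pattern matrix method lower‑bounds that quantity in terms of $\adeg_{1/3}(f)$, forcing $r \ge 2^{\Omega(D)}$. Chaining the displayed inequalities gives $|\F^*| \ge 2^{\Omega(D)} = 2^{\Omega(n^{1-2^{-\Omega(\sqrt d)}})} = 2^{n^{1-2^{-\Omega(\sqrt d)}}}$, as claimed. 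I expect the only delicate point to be the precise invocation of \cite{patmat}: one must choose the constant $\ell$ so that the pattern matrix lives on at most $n$ variables (keeping $f_{S,w}$ of linear size and depth $d$), and one must track the error parameters so that the approximate‑rank bound of \cite{patmat} applies at the constant error $1/3$. Everything else — the bookkeeping on circuit size and depth, and the equivalence between approximating feature sets and $\ell_\infty$-approximate rank — is routine.
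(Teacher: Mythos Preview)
Your proposal is correct and follows essentially the same route as the paper: reduce the feature-set lower bound to an approximate-rank lower bound, and obtain the latter from the approximate-degree lower bound of \Cref{thm:lower} via Sherstov's pattern matrix method \cite{patmat}. The only differences are cosmetic. You use the original selection-plus-XOR form of the pattern matrix, so the columns $f_{S,w}$ are just rewirings and input-negations of $f$ and stay in \LCd without any depth increase; the paper instead uses the gadget form $F(x,y)=f(\dots,\vee_{j=1}^4(x_{i,j}\wedge y_{i,j}),\dots)$, which adds one layer of $\OR$ gates, and compensates by starting from $f\in\mathsf{LC}^0_{d-1}$ so that each row $F_i$ lands in \LCd. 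You are also slightly more careful about keeping the pattern matrix on at most $n$ input bits by taking $n_0=\lfloor n/\ell\rfloor$. Both choices invoke the same approximate-rank consequence of \cite{patmat} and yield the same bound.
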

 
For completeness, we provide the proof of \thm{genF} below.

\begin{proof}
 
For a matrix $F \in \B^{N \times N}$, 
the $\eps$-approximate rank of $F$, denoted $\text{rank}_{\eps}(F)$, is the least rank
of a matrix $A \in \R^{N \times N}$ such that $|A_{ij} - F_{ij}| \leq \eps$ for all $(i, j) \in [N] \times [N]$.
Sherstov's pattern matrix method \cite{patmat} allows one to translate in a black-box manner an approximate degree lower bound for a function $f$
into an approximate rank lower bound for a related matrix $F$, called the pattern matrix of $f$.

Specifically, invoking \Cref{thm:lower},
let $f$ be the function in \LCdmone satisfying $\adeg(f) \geq D$ for some $D=n^{1-2^{-\Omega(\sqrt{d})}}$. 
Viewing $F$
as a $2^{4n} \times 2^{4n}$ matrix in the natural way,
the pattern matrix method \cite[Theorem 8.1]{patmat} implies that
 the function 
$F \colon \B^{4n}\times \B^{4n} \to \B$ given by
$F(x, y) = f\left(\dots, \vee_{j=1}^4 \left(x_{i, j} \wedge y_{i, j}\right) \dots\right)$
satisfies
 \begin{equation}
 \label{eq:nomore} \text{rank}_{1/3}(F) \geq 2^{\Omega(D)},\end{equation}
 where the expression $\text{rank}_{1/3}(F)$ views $F$ as a $2^{4n} \times 2^{4n}$ matrix.
 
Let  $\F^*$ be a feature set satisfying
the hypothesis of \Cref{thm:genF}, i.e., for every function $c \colon \B^{4n} \to \B$ in \LCd, there exist constants $\alpha_1, \dots, \alpha_{|\F|}$ 
such that 
\begin{equation}
|c(x)-\sum_{\phi_j \in \F} \alpha_j \phi_j(x)| \leq 1/3
\end{equation}
for all $x \in \B^{4n}$. We claim
that this implies that \begin{equation}\label{Really} \text{rank}_{1/3}(F) \leq |\F^*|. \end{equation}
\Cref{thm:genF} then follows by combining Equation \eqref{Really} with Equation \eqref{eq:nomore}.
 
Let us view each row $i$ of $F$ as a function $F_i$ mapping $\B^{4n} \to \B$. Then
clearly, if $f$ is in \LCdmone, each row $F_i$ is in \LCd. 
Hence, there exist constants $\alpha_{i,1}, \dots, \alpha_{i,|\F|}$ such that
\begin{equation} \label{anotherone} |F_i(x)-\sum_{\phi_j \in \F} \alpha_{i, j}  \cdot \phi_j(x)| \leq 1/3 \text{ for all } x \in \B^{4n}. \end{equation}
 
Let $M$ denote the $2^{4n} \times |\F|$ matrix whose $(i, j)$'th entry is
$\alpha_{i, j}$. And let $R$ denote that $|\F| \times 2^{4n}$ matrix
whose $(j, x)$'th entry is $\phi_{j}(x)$, where we associate $x$ with an input in $\B^{4n}$. 
Then Equation \eqref{anotherone} implies that $|M \cdot R - F_{ij}| \leq 1/3$ for all $(i, j) \in [2^{4n}] \times [2^{4n}]$.
Since $M \cdot R$ is a matrix of rank at most $|\F|$, Equation \eqref{Really} follows.
\end{proof}

\section{Circuit Lower Bounds (Proof of \Cref{apptheorem})}
\label{s:app}
In this section, we view Boolean functions as mapping domain $\{-1, 1\}^n$ to $\{-1, 1\}$. Recall that $$\IP(x, y) = \oplus_{i=1}^n (x_i \wedge y_i)$$ denotes the Boolean inner product on $2n$ bits.
As a warmup, we start by establishing a worst-case version of \Cref{apptheorem}.

\begin{proposition} \label{worstcaseprop} The Inner Product function cannot be computed by any depth-$(d+1)$ \ACmodtwo\ circuit of size $\tOmega\bigl(n^{1/(1-2^{-d})}\bigr)$. 
\end{proposition}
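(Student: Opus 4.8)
The plan is to combine the approximate-degree upper bound for \AC circuits (\thm{upper}) with the classical fact that the Hadamard matrix associated with Inner Product has exponentially large approximate rank. Suppose $\IP_n$ is computed by a depth-$(d+1)$ \ACmodtwo\ circuit $C$ of size $s$, and write the $n$ input bits as a pair $(x,y)$ with $x,y\in\B^{n/2}$, so $\IP_n(x,y)=\bigoplus_i x_iy_i$. Its bottom layer consists of parity gates computing $\mathbb{F}_2$-linear forms $\ell_1,\dots,\ell_p$ (viewing any input fed directly to an \AC gate as a trivial parity), and the \AC part $g\colon\B^p\to\B$ above them has depth $d$ and size at most $s$, with $\IP_n(x,y)=g(\ell_1(x,y),\dots,\ell_p(x,y))$ for all $(x,y)$; here $p=O(s)$ once we count the parity gates toward the size. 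Applying \thm{upper} to $g$ with $\eps=\tfrac13$ and $p=O(s)$, the $\{-1,1\}$-valued function $b\mapsto(-1)^{g(b)}$ (whose approximate degree equals $\adeg(g)$) has a real $\tfrac13$-approximating polynomial $q$ of degree $D=\tO(\sqrt p\cdot s^{1/2-2^{-d}})=\tO(s^{1-2^{-d}})$; write $q$ in the Fourier basis as $q(b)=\sum_{|T|\le D}\widehat{q}_T\prod_{i\in T}(-1)^{b_i}$.

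Next I would substitute the parities into $q$ and read off a low-rank matrix. Let $A$ be the $2^{n/2}\times 2^{n/2}$ real matrix with $A[x,y]:=q(\ell_1(x,y),\dots,\ell_p(x,y))$. Since $\ell(x,y)\in\B^p$ and $g(\ell(x,y))=\IP_n(x,y)$, we get $|A[x,y]-(-1)^{\IP_n(x,y)}|\le\tfrac13$ for all $x,y$: that is, $A$ approximates entrywise, to error $\tfrac13$, the $2^{n/2}\times 2^{n/2}$ Hadamard matrix $H$ given by $H[x,y]=(-1)^{\IP_n(x,y)}$. On the other hand, for each $T$ the term $\prod_{i\in T}(-1)^{\ell_i(x,y)}$ equals $(-1)^{\bigoplus_{i\in T}\ell_i(x,y)}$, and since every $\mathbb{F}_2$-linear form in $(x,y)$ is the sum of a form in $x$ alone and a form in $y$ alone, this term is a rank-one matrix. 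Hence $\operatorname{rank}(A)\le\binom{p}{\le D}\le s^{O(D)}$, i.e.\ $\log\operatorname{rank}(A)=O(D\log s)$.

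Now I invoke the standard fact that the $\tfrac13$-approximate rank of $H$ is $2^{\Omega(n)}$: since all singular values of $H$ equal $2^{n/4}$, comparing $\langle A,H\rangle\ge\tfrac23\,2^{n}$ (each entry of $A$ has the sign of the corresponding entry of $H$ and magnitude at least $\tfrac23$) with the bound $\langle A,H\rangle\le 2^{n/4}\sqrt{\operatorname{rank}(A)}\,\|A\|_F\le 2^{n/4}\sqrt{\operatorname{rank}(A)}\cdot\tfrac43\,2^{n/2}$ forces $\operatorname{rank}(A)\ge 2^{n/2}/4$ (alternatively, $A$ is a sign representation of $H$, so Forster's theorem applies directly). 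Therefore $D\log s=\Omega(n)$. We may assume $s\le n^2$, since otherwise $s\ge n^2\ge n^{1/(1-2^{-d})}$ already; then $\log s=O(\log n)$, so $D=\tOmega(n)$, and combining with $D=\tO(s^{1-2^{-d}})$ gives $s^{1-2^{-d}}=\tOmega(n)$, i.e.\ $s=\tOmega(n^{1/(1-2^{-d})})$, as claimed.

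The only real content here is the observation that each Fourier character of the bottom parity layer evaluates to a rank-one matrix, so that the approximate-degree bound for the top \AC part translates into an approximate-rank bound for $H$; everything else is routine bookkeeping to ensure that the polylogarithmic slack in \thm{upper} and in $\binom{p}{\le D}$ costs only polylog factors in the final size bound, together with the classical exponential lower bound on the approximate rank of $H$. Upgrading to the average-case statement of \Cref{apptheorem} will additionally require running essentially the same argument with the approximation error pushed close to $1/2$, using the exponentially small discrepancy of $\IP$, which guarantees that any matrix agreeing in sign with $H$ on even a $\tfrac12+n^{-\log n}$ fraction of entries still has rank $2^{\Omega(n)}$.
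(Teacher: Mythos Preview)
Your proof is correct and follows essentially the same approach as the paper: bound the approximate degree of the depth-$d$ \AC\ part by $D=\tO(s^{1-2^{-d}})$, observe that substituting parities into a Fourier-sparse approximator preserves sparsity (each character stays a single monomial over $\{-1,1\}^n$, equivalently a rank-one matrix), and then invoke an exponential lower bound for approximating Inner Product. The only cosmetic difference is that the paper phrases the last step as a monomial-count lower bound for polynomials approximating $\IP$ over $\{-1,1\}^n$ (citing Bruck--Smolensky), whereas you phrase it as an approximate-rank lower bound for the Hadamard matrix and supply a short self-contained singular-value argument; these are two presentations of the same fact.
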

\begin{proof} \Cref{thm:approxdegLC} shows that any depth-$d$ \AC circuit of size $s \geq n$ on $n$ inputs  has approximate degree at most $D= \tilde{O}(s^{1-2^{-d}})$. Clearly, the approximating
polynomial has at most $\binom{s}{\leq D} \leq s^D$ many monomials. 

From this, one can conclude that any depth-$(d+1)$ \ACmodtwo\ circuit $\C$ on $n$ inputs of size $s \geq n$ can be approximated by a polynomial $p$ over $\{-1, 1\}^n$ with at most $\binom{s}{D}$ many monomials. 
To see why, let us write $\C(x, y)= \C'(h_1(x, y), \dots, h_N(x, y))$, where $N \leq s$, $\C'$ is an AC$^0$ circuit of depth $d$ and size at most $s$, and each $h_i$ is a parity function.  
Since $\C'$ is an \AC circuit of depth $d$ and size  at most $s$ on $N \leq s$ inputs, it has approximate degree at most $D$. Accordingly, let $q$ be a polynomial
of degree at most $D$ that point-wise approximates $\C'$ to error at most $1/3$.
Now obtain $p$ by replacing the $i$'th input to $q$ with the corresponding parity gate, namely $h_i$, of $\C$. This yields a polynomial $p$ that point-wise approximates $\C$ to error at most $1/3$, i.e., $|p(x, y) - \C(x, y)| \leq 1/3$ for all $(x, y) \in \{-1, 1\}^n \times \{-1, 1\}^n$. Since $q$ is defined over domain $\{-1, 1\}^N$, replacing any number of inputs to $q$ with parity functions preserves the number of monomials of $q$. 

On the other hand, it is known that that any polynomial $p$ over $\{-1, 1\}^n \times \{-1, 1\}^n$ that point-wise approximates the Inner Product function to any error strictly less than 1 requires $2^{\Omega(n)}$ many monomials \cite{bruck1992polynomial}.

Combining the above two facts means that $s^D$ must be at least $2^{\Omega(n)}$, which means that $s$ must be at least $\tilde{\Omega}(n^{1/(1-2^{-d})})$.  
\end{proof}

We now prove \Cref{apptheorem}, restated here for convenience. 

\apptheorem*

 \paragraph{Proof Outline.}
 The proof follows a similar outline to \Cref{worstcaseprop}, but builds on an observation of Tal  \cite[Lemma 4.2]{tal2016bipartite}.
Roughly, Lemma 4.2 of \cite{tal2016bipartite} shows that bipartite de Morgan formulas of size $s$ cannot compute the Inner Product function 
on more than a $1/2 + n^{- \log n}$ fraction of inputs unless they have size at least roughly $n^2$. 
The only property of de Morgan formulas of size $\ll n^2$ that Tal uses is that they have sublinear approximate degree.

Similarly, \Cref{thm:approxdegLC}  shows that an \AC circuit of size $s$ and depth $d$ on $n$ inputs, for which  $n \leq s \ll n^{1/(1-2^{-d})}$, has  sublinear approximate degree.

Any parity function is an example of a bipartite function of size $O(1)$, meaning that the parity function applied to some subset of an input $(x, y) \in \{-1, 1\}^n \times \{-1, 1\}^n$ is computable by a constant-sized circuit with leaves computing a function of only $x$ or $y$. 
Hence, Tal's argument applies with cosmetic changes not only to sub-quadratic size bipartite de Morgan formulas, but also to \ACmodtwo\  circuits of size $s \ll n^{1/(1-2^{-d})}$. 

We remark that the entire argument (and hence the lower bound of \Cref{apptheorem} itself) applies not only to \ACmodtwo\ circuits, but more generally to depth-$d$ \AC circuits augmented with a layer of low-communication
gates above the inputs; we omit this extension for brevity.

\begin{proof}[Proof of \Cref{apptheorem}, closely following the proof of  Lemma 4.2 of \cite{tal2016bipartite}]
Let $\C \colon \{-1, 1\}^{2n} \to \{-1, 1\}$ be an \ACmodtwo\ circuit of depth $(d+1)$ and size $s \geq n$, and let
$$q = \Pr_{x, y \in \{-1, 1\}^n}[\C(x, y) = \IP(x, y)].$$ Suppose that $q \geq 1/2 + \eps$. Our goal is to show that $s$ must be large, even for negligible values of $\eps$.

Let $N \leq s$ denote the number of parity gates in $\C$, with the $i$th parity gate denoted by $h_i(x) \colon \{-1, 1\}^n \to \{-1, 1\}$. 
Then we may write $\C(x, y) = \C'(h_1(x, y), \dots, h_N(x, y))$, where
$\C'$ is an \AC
circuit on at most $s$ inputs, of depth $d$ and size at most $s$. 
By \Cref{thm:approxdegLC}, there exists a polynomial $p$ of degree at most $D \leq \tilde{O}\left(s^{1-2^{-d}} \log^{2^{-d}}(1/\eps)\right)$ such that,
for all $w \in \{-1, 1\}^N$, $|p(w)-\C'(w)| \leq \eps$.

Next, we show that under the uniform distribution, the function $\IP(x, y)$
correlates well with $p(h_1(x), \dots, h_N(x))$. 
We decompose the expectation $\mathbf{E}_{x, y \in \{-1, 1\}^n}[p(x, y) \cdot \IP(x, y)]$ according to whether or not $\IP(x, y) = \C(x, y)$:
\begin{align} 
\mathbf{E}_{x, y \in \{-1, 1\}^n}[p( h_1(x, y), \dots, h_N(x, y)) \cdot \IP(x, y)] &= \notag   \\
\mathbf{E}_{x, y \in \{-1, 1\}^n}[p( h_1(x, y), \dots, h_N(x, y)) \cdot \IP(x, y)& | \IP(x, y) = \C(x, y)] \cdot \Pr[\IP(x, y) = \C(x, y)] + \notag \\
\mathbf{E}_{x, y \in \{-1, 1\}^n}[p( h_1(x, y), \dots, h_N(x, y)) \cdot \IP(x, y)& | \IP(x, y) \neq \C(x, y)] \cdot \Pr[\IP(x, y) \neq \C(x, y)] \notag \\
 &\geq (1 - \eps) \cdot q + (-1 - \eps) \cdot (1 - q) \notag \\
 &=  2q - 1 - \eps  \geq 2 \cdot (1/2 + \eps) - 1 - \eps = \eps. \label{eq:corrbound}
\end{align}

Next, we write $p(z)$ as a multi-linear polynomial:
$p(z) = \sum_{S \subseteq [N], |S| \leq D}  \hat{p}(S) \cdot \prod_{i \in S} z_i$.
Since $\hat{p}(S) = \mathbf{E}_{z \in \{-1,1\}^N} [p(z) \cdot \prod_{i \in S} z_i]$, 
we have that $|\hat{p}(S)| \leq 1 + \eps$ for every $S$. Note that there are at most
$\binom{N}{\leq D}$ monomials in $p$. Invoking Inequality \eqref{eq:corrbound}, we have:

\begin{align*} \eps &\leq \mathbf{E}_{x, y \in \{-1, 1\}^n}\left[p(h_1(x, y), \dots , h_N(x, y)) \cdot \IP(x, y)\right]\\
&=\mathbf{E}_{x, y \in \{-1, 1\}^n}\left[\sum_{S \subseteq [N], |S| \leq D} \hat{p}(S) \prod_{i \in S} h_i(x, y)  \cdot \IP(x, y)\right]\\
&=\sum_{S \subseteq [N], |S|\leq D} \hat{p}(S) \cdot \mathbf{E}_{x, y \in \{-1, 1\}^n}\left[\prod_{i \in S} h_i(x, y)  \cdot \IP(x, y)\right]\\
&\leq \sum_{S \subseteq [N], |S|\leq D} (1+\eps) \left| \mathbf{E}_{x, y \in \{-1, 1\}^n}\left[\prod_{i \in S} h_i(x, y)  \cdot \IP(x, y)\right]\right|.
\end{align*} 

Hence there must exist a set $S \subseteq [N]$ with size at most $D$ such that
$$\left| \mathbf{E}_{x, y \in \{-1, 1\}^n}\left[\prod_{i \in S} h_i(x, y)  \cdot \IP(x, y)\right]\right| \geq \frac{\eps}{\binom{N}{\leq D}\left(1+\eps\right)} \geq (\eps/2) \cdot s^{-D}  \geq \exp\left({\tilde{O}(-s^{1-2^{-d}} \log^{2^{-d}}(1/\eps))}\right).$$
 
It is well-known that $\IP$ is $2^{-\Omega(n)}$ correlated with any parity function $h_i$ (indeed,
$\IP$ on $2n$ bits is a \emph{bent} function, meaning that all its Fourier coefficients have magnitude $2^{-n}$,
and hence its correlation with any parity is at most $2^{-n}$). 
We conclude that  $$s^{1-2^{-d}} \log^{2^{-d}}(1/\eps)  \geq \tilde{\Omega}(n).$$
The theorem is an immediate consequence of this inequality.
\end{proof}

\section*{Acknowledgements}

We thank Nikhil Mande, Ronald de Wolf, and Shuchen Zhu for comments on earlier drafts of this paper. R.K. thanks Luke Schaeffer for comments on the proof of \thm{quantum}.

\bibliographystyle{alphaurl}
\phantomsection\addcontentsline{toc}{section}{References} 
\renewcommand{\UrlFont}{\ttfamily\small}
\let\oldpath\path
\renewcommand{\path}[1]{\small\oldpath{#1}}
\bibliography{sharedinputs}

\appendix

\section{Depth-vs-Size Upper Bounds for Approximate Majorities}
In this section, we prove a result alluded to in \Cref{s:discussion},
namely an upper bound on the size of small-depth circuits computing approximate majorities.

For $0 < p < q < 1$, we use $\AMAJ_{n, p, q}$ to refer to any total function $f$ on $n$ 
bits satisfying the following two properties:
\begin{itemize}
\item $f(x) = 0$ for all $x$ of Hamming weight at most $pn$.
\item $f(x) = 1$ for all $x$ of Hamming weight at least $qn$.
\end{itemize}

We also say that such an $f$ computes $\AMAJ_{n, p, q}$.

\begin{theorem} \label{appthm}
For any constant $\delta > 0$, there are positive constants $0 < p < 1/2 < q < 1$ such that the following holds.
There is a total function $f$ computing $\AMAJ_{n, p, q}$ such that $f$ is itself
computable by a depth-$(2d+3)$ circuit of size $O_{d, \delta}(n^{1+2^{-d} + \delta})$. Here, the $O_{d, \delta}$ notation hides a leading factor that depends only on $d$ and $1/\delta$.
\end{theorem}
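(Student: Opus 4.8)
The plan is to prove \Cref{appthm} by induction on $d$, constructing the depth-$(2d+3)$ circuit recursively so that the exponent of the ``excess'' size factor is halved each time two layers are added. For the base case $d=0$, \lem{gapmaj} already does the job: it supplies a depth-$3$ circuit of size $\tO(n^2)$ that outputs $0$ on every input of Hamming weight at most $n/3$ and $1$ on every input of Hamming weight at least $2n/3$ --- i.e.\ a function computing $\AMAJ_{n,1/3,2/3}$ --- and $\tO(n^2) = O_\delta(n^{1+2^{-0}+\delta})$, matching the claimed bound.

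For the inductive step, suppose that for every $m$ there is a depth-$(2(d-1)+3)$ circuit of size $O_{d-1,\delta}\bigl(m^{1+2^{-(d-1)}+\delta}\bigr)$ computing $\AMAJ_{m,p',q'}$ for some constants $0 < p' < 1/2 < q' < 1$. To build the depth-$(2d+3)$ circuit on $n$ inputs, I would prepend to (a copy of) this circuit a two-layer \emph{amplification gadget} $G$ built from $\AND$ and $\OR$ gates, which maps the $n$ input bits to a collection of derived bits exhibiting the robust threshold behavior: on every $x$ with $|x|\le pn$ only a tiny fraction of the derived bits equal $1$, and on every $x$ with $|x|\ge qn$ all but a tiny fraction equal $1$ (for suitable constants $p<1/2<q$); feeding the (appropriately grouped) derived bits into the recursively constructed circuit then cleans this up to a single output bit. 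The existence of such a gadget of the stated size is established by the probabilistic method: wire each gate of $G$ to a uniformly random subset of the available inputs of the prescribed fan-in, and argue via a Chernoff bound together with a union bound that a random wiring has the desired property with positive probability. This is exactly the style of argument that proves \lem{gapmaj} / \cite{AB84}; alternatively one may invoke explicit unbalanced expanders, whose depth-versus-size tradeoff is worked out in \cite{chaudhuri1996deterministic}. The reason two additional layers buy a quadratic improvement in the exponent is that the amplification which the depth-$3$ construction must perform in a single wide layer (forcing fan-in essentially linear in $n$, and hence roughly quadratic size) can instead be shared between $G$, which uses much smaller fan-in, and the recursive subcircuit, which operates on a polynomially smaller instance; carrying the size accounting through gives the recurrence $E_d - 1 = (E_{d-1}-1)/2$ with $E_0 = 2$, so that $E_d = 1 + 2^{-d}$.

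The main obstacle is correctness on \emph{all} inputs of the relevant Hamming weights, not merely on random inputs. Since the wiring of $G$ is fixed, an adversary who knows it may concentrate the $1$'s of the input so as to defeat a naive sampling-style gadget, so the fan-ins cannot be taken too small; it is precisely the requirement that the union bound survive a sweep over the $2^{\Theta(n)}$ critical inputs (those whose Hamming weight lies near the two thresholds) that pins down the fan-in schedule, and producing a schedule that simultaneously beats this union bound at every level and respects the target size bound is the technical heart of the construction. The remaining points are routine bookkeeping: each of the $d$ recursive levels contributes only polylogarithmic overhead --- a mild fan-in blow-up for error reduction, and a re-layering so that the $\AND$/$\OR$ levels strictly alternate --- and all of this is absorbed into the $n^{\delta}$ slack, yielding the bound $O_{d,\delta}\bigl(n^{1+2^{-d}+\delta}\bigr)$.
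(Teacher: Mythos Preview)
Your high-level plan and the recurrence $E_d=(1+E_{d-1})/2$ are both right, and you correctly identify the union bound over all $2^{\Theta(n)}$ threshold-critical inputs as the crux. But the specific construction you describe composes in the wrong direction, and this is a genuine gap rather than a cosmetic difference.

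You propose $C_d=C_{d-1}\circ G$, with a two-layer gadget $G$ at the bottom mapping the $n$ raw inputs to $m$ ``derived bits'' and the recursive $C_{d-1}$ on top consuming those $m$ bits; you also assert that $C_{d-1}$ ``operates on a polynomially smaller instance.'' These two claims are incompatible. Over the random wiring of $G$, for each fixed $x$ the $m$ derived bits are independent Bernoullis, so the Chernoff tail on their empirical fraction is $\exp(-\Theta(m))$; to beat the union bound over $2^n$ inputs you are forced to take $m=\Omega(n)$. With $m=\Theta(n)$ the recursive circuit on top already has size $\Theta(n^{E_{d-1}})$ and the two extra layers buy nothing. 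Nor can you escape by taking $m\ll n$ and making each derived bit individually correct on all inputs, since that would require each depth-$2$ output of $G$ to compute $\AMAJ_{n,p,q}$, which needs superpolynomial size at depth~$2$.

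The paper composes the other way. Its $C_{i+1}$ on $n=m^2$ inputs is $C_0\bigl(C_i,\dots,C_i\bigr)$, with a fixed depth-$3$ circuit $C_0$ on \emph{top} (computing $\AMAJ_M$ for $M=O(\sqrt n)$) and $M$ copies of the recursive $C_i$ at the \emph{bottom}, each applied to a random size-$\sqrt n$ subset of the inputs. Now each bottom copy errs on a fixed $x$ with probability $\exp(-\Omega(\sqrt n))$ (Chernoff on its sampled subset), so the probability that a $p_i$-fraction of the $M$ copies err is at most $2^M\cdot\exp(-\Omega(\sqrt n))^{\Omega(M)}=\exp(-\Omega(n))$, and the union bound survives. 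The size is $O(M^{2+\delta})+M\cdot O(m^{k_i+\delta})=O\bigl(n^{(1+k_i)/2+\delta/2}\bigr)$, and only two new layers are added because the bottom $\AND$ layer of $C_0$ merges with the top $\AND$ layer of each $C_i$. The point you are missing is that the \emph{recursive} circuits must be the ones whose input size shrinks to $\sqrt n$; the fresh gadget added at each step is not a two-layer sampler on the $n$ raw inputs but a full depth-$3$ $\AMAJ$ on only $O(\sqrt n)$ already-amplified bits.
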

\begin{proof} 
A careful application of well-known arguments yields the following claim, whose proof we defer to the end of this section. 
\begin{claim} For any constant $\delta > 0$, there exist constants $0 < p_0 < 1/2 < q_0 < 1$ such that the following holds.
There is a monotone depth-3 \AC circuit $C_0$ of size $O(m^{2+\delta})$ computing $\AMAJ_{m, p_0, q_0}$. The top and bottom layers of gates of $C_0$ are $\AND$
gates. \label{depth3claim}
\end{claim}

To prove \Cref{appthm}, we need to use $C_0$ to construct deeper but smaller circuits that also compute approximate majorities. 
More generally, we establish the following iterative transformation that takes any circuit $C_i$
computing an approximate majority function and turns it into a smaller and only slightly deeper circuit $C_{i+1}$ that also computes an
approximate majority function.\footnote{Not coincidentally, this iterative transformation to reduce circuit size
at the expense of depth is reminiscent of the transformation
used to prove the approximate degree lower bound for linear-size circuits given in  \Cref{thm:lower}.}
 
\begin{lemma} \label{thekeylemmaman}
Let $d>0$ and $\delta >0$ be fixed constants, and let $p_0$ and $q_0$ be the associated constants from \Cref{depth3claim}. Suppose that 
there exists a family of monotone circuits $\C_i$ with the following properties.
\begin{itemize}
\item There exist constants $p_i$ and $q_i$ satisfying $0 < p_i \leq p_0 < 1/2 <  q_0 \leq q_i < 1$, such that for all input sizes $m$, $\C_i$ contains a circuit computing  $\AMAJ_{m, p_i, q_i}$.
\item  Each circuit in $\C_i$ has depth $2i+3$, and the top and bottom layers consist of $\AND$ gates. 
\item  There is a constant $k_i >0$ such that the circuit in $\C_i$ defined over inputs of size $m$ has size at most $O_{d, p_i, q_i}(m^{k_i + \delta})$, 
where the $O_{d, p_i, q_i}$ notation hides factors depending only on $d$, $p_i$, and $q_i$. 
\end{itemize} 

Then there exists a family of monotone circuits $\C_{i+1}$  
satisfying the above three properties, with $p_{i+1} = (1-1/(10d)) p_i$,  $q_{i+1} = 1-(1-q_i)(1-1/(10d))$, and $k_{i+1} \leq (1 + k_{i})/2$. 
\end{lemma}

\begin{proof}

Let $C_i$ be the assumed circuit from family $\C_i$ on $m$ inputs computing $\AMAJ_{m, p_i, q_i}$.
Let $$n=m^2$$ and 
$$M=700 d^2 (1/p^2_i + 1/q^2_i) m.$$ 
Consider generating a circuit $C_{i+1}$ on $n$ inputs via the following random process. $C_{i+1}$ will have the form \begin{equation}
\label{compeq} \AMAJ_{M, p_i, q_i} (\AMAJ_{m, p_i, q_i}, \dots, \AMAJ_{m, p_i, q_i}),\end{equation} Here, $p_i$ and $q_i$ are fixed constants as per the statement of the lemma, and each of the bottom $\AMAJ$ circuits are evaluated on a randomly chosen (size-$m$) subset of the $n$ inputs of $C_{i+1}$. Since $p_i \leq p_0$ and $q_0 \leq q_i$, we may use a circuit $C_0$ from family $\C_0$ (as per \Cref{depth3claim}) to compute the outer function $\AMAJ_{M, p_i, q_i} $. We use the circuit $C_i$ from family $\C_i$
to compute each copy of the inner function $\AMAJ_{m, p_i, q_i}$. 

In summary, we generate the circuit $C_{i+1}$ to be the composition $C_0 \circ C_i$, but where each copy of $C_i$ is evaluated over a randomly chosen (size-$m$) subset of the $n$ inputs of $C_{i+1}$ (i.e., $C_{i+1}$ is a shared-input composition of $C_0$ and $C_i$).

We claim that with strictly positive probability, this circuit $C_{i+1}$ computes $\AMAJ_{n, p_{i+1}, q_{i+1}}$. 
To see this, first fix an input $x$ with Hamming weight at most $p_{i+1} \cdot n$, so that the expected
number of $1$-inputs to any bottom $\AMAJ_{m, p_i, q_i}$ circuit is at most $ \mu := p_{i+1}  \cdot m$. Note that  $p_i \cdot m > (1+1/(10d)) \mu$. 
If any $\AMAJ_{m, p_i, q_i}$ circuit ``makes an error'' on $x$ (i.e., evaluates to $1$ on $x$), then at least $p_i \cdot m > (1+1/(10d)) \cdot \mu$ of the randomly chosen inputs to the gate are 1.
By a Chernoff bound, for each of the bottom 
$\AMAJ_{m, p_i, q_i}$ gates, this happens on input $x$ with probability at most $\exp(-\mu/(3(10d)^2)) \leq \exp(-\mu/(300d^2)) \leq \exp(-p_i m/(600d^2))$. 

The probability that more than $(700 d^2/p_i) m \leq p_i \cdot M$ of these circuits makes an error is at most $2^{M} \cdot (\exp(-p_im/(600d^2)))^{(700d^2/p_i)m } \ll \exp(-m^2)$. Thus, with probability at least $1-\exp(-m^2)$, the circuit $C_{i+1}$ outputs 0 on input $x$. 

An analogous argument holds for inputs $x$ with Hamming weight at least $q_{i+1} \cdot n$,
so by a union bound over all at most the $2^n$ inputs to $C_{i+1}$ with Hamming weight at most $p_{i+1} \cdot n$ or at least $q_{i+1} \cdot n$, with strictly positive probability $C_{i+1}$ computes $\AMAJ_{n, p_{i+1}, q_{i+1}}$.

The circuit $C_{i+1}$ has $m^2$ inputs and has size at most $$O( M^{2+\delta}) + O(M \cdot m^{k_i + \delta}) = O_{d, p_i, q_i}(n^{1+\delta/2} + m^{1+k_i + \delta}) = O_{d, p_i, q_i}(n^{k_{i+1} + \delta/2}),$$
where recall that $k_{i+1} = (1 + k_{i})/2$. 

Equation \eqref{compeq} implies that the top and bottom layers of $C_{i+1}$ consist of $\AND$ gates, with $C_{i+1}$ inheriting this property directly
from $C_i$ and $C_0$. Moreover, by collapsing the bottom layer of $C_0$ with the top layer of each copy of $C_i$ (which is possible because $C_0$ is monotone),
we find that the depth of $C_{i+1}$ is at most most $3 + (2i+3) - 1 = 2 (i+1)+3$.  This completes the proof of the lemma.
\end{proof}

Let $p_0, q_0$ be as in \Cref{depth3claim}, and let $p=p_0/e$ and $q=1-(1-q_0)/e$. 
\Cref{appthm} follows by iteratively applying \Cref{thekeylemmaman} $d$ times (starting with $i=0$; the assumptions of the lemma are satisfied for this value of $i$
by \Cref{depth3claim}) to conclude that $\AMAJ_{n, p, q}$
is computable by a circuit of depth $2d+3$ and size $O_d(n^{1+2^{-d} + \delta})$.
\end{proof}

\begin{proof}[Proof of \Cref{depth3claim}]
The main idea of the (probabilistic) construction is to have an $\AND$-$\OR$-$\AND$ circuit $C$, where the top $\AND$
gate has fan-in $t_1 := m$, the middle layer (of $\OR$ gates) all have fan-in $t_2 := m^{1+\delta}$, and the bottom layer of $\AND$ gates 
all have fan-in $t_3=\log_2(m)$.  Each bottom $\AND$ gate is connected to $t_3$ randomly chosen inputs. 

Let $p$ be any constant less than $1/2^{1+\delta}$, and $q = 1/2^{\delta}$. These choices ensure that $p^{\log_2(m)} < 1/(2m^{1+\delta})$ and $q^{\log_2(m)} > 1/m^{\delta}$. 
We now show that with positive probability, $C$ computes $\AMAJ_{m, p, q}$. 

Consider any $m$-bit input $x$ with Hamming weight at most $p \cdot m$. 
Then for any fixed $\AND$ gate at the bottom layer of $C$, the probability the $\AND$
gate evaluates to $1$ is at most $p^{t_3} < 1/(2m^{1+\delta})$. 
By a union bound, this implies that for any fixed $\OR$ gate at the middle layer of $C$, the probability the $\OR$ gate outputs 1 on $x$ 
is at most $t_2 \cdot 1/(2m^{1+\delta}) \leq 1/2$. This implies that the probability the top $\AND$ gate outputs $1$ on $x$ is at most $1/2^{t_1} = 2^{-m}$.

Now consider any $m$-bit input $x$ with Hamming weight at least $q \cdot m$. 
Then for any fixed $\AND$ gate at the bottom layer of $C$, the probability the $\AND$
gate evaluates to $1$ is at least $q^{t_3} > 1/m^{\delta}$. 
This implies that for any fixed $\OR$ gate at the middle layer of $C$, the probability the $\OR$ gate outputs 1 on $x$ 
is least $1-(1-1/m^{\delta})^{t_2} \geq 1-e^{-m} \geq 1-1/(m2^{m})$. This implies that the probability the top $\AND$ gate outputs $1$ on $x$ is at least $1-2^{-m}$.

By a union bound over all  the at most $2^m$ inputs $x$ to $C$, we conclude that with positive probability $C$ computes $\AMAJ_{m, p, q}$.
\end{proof}

\label{app:amaj}
\end{document}